%% file: main.tex
\documentclass[11pt,a4paper]{article}
\usepackage[T1]{fontenc}
\usepackage[utf8]{inputenc}
\usepackage{lmodern}
\usepackage[margin=27mm]{geometry}
\usepackage{amsmath,amssymb,amsthm,mathtools}
\usepackage{microtype,booktabs,array}
\usepackage[dvipsnames]{xcolor}
\usepackage{enumitem}
\usepackage{xurl}
\usepackage{tcolorbox}

\usepackage[normalem]{ulem}

\usepackage[ruled,section]{algorithm}
\usepackage{tabularx}
\floatname{algorithm}{Protocol}

\usepackage{algpseudocode}

\usepackage[colorlinks=true,linkcolor=MidnightBlue,citecolor=MidnightBlue,urlcolor=MidnightBlue]{hyperref}
\usepackage[nameinlink,noabbrev]{cleveref}
\makeatletter
\AddToHook{env/algorithmic/begin}{\def\@currentcounter{ALG@line}}
\makeatother
\crefalias{ALG@line}{step}
\crefname{algorithm}{protocol}{protocols}
\Crefname{algorithm}{Protocol}{Protocols}
\crefname{step}{step}{steps}
\Crefname{step}{Step}{Steps}
\crefname{appendix}{appendix}{appendices}
\Crefname{appendix}{Appendix}{Appendices}

\hypersetup{pdftitle={QMA(2) with Limited Shared Entanglement},pdfauthor={Alex Della Schiava and Ranitha Mataraarachchi}}
\newtheorem{theorem}{Theorem}[section]
\newtheorem{lemma}[theorem]{Lemma}
\newtheorem{proposition}[theorem]{Proposition}
\newtheorem{corollary}[theorem]{Corollary}

\theoremstyle{plain}
\newtheorem*{intrologarithmic}{Main result (\Cref{thm:logarithmic-entanglement})}
\newtheorem*{intromonotonicity}{Budget monotonicity (\Cref{cor:monotonicity})}
\newtheorem*{intropolynomial}{Polynomial budget collapse (\Cref{cor:polynomial-budgets})}
\newtheorem*{introbarrier}{Superlogarithmic barrier (\Cref{cor:log-superlogarithmic-barrier})}

\theoremstyle{definition}
\newtheorem{definition}[theorem]{Definition}
\theoremstyle{remark}

\numberwithin{equation}{section}
\DeclareMathOperator{\Tr}{Tr}
\DeclareMathOperator{\SR}{SR}

\DeclareMathOperator{\poly}{poly}
\newcommand{\QMA}{\mathsf{QMA}}
\newcommand{\BQP}{\mathsf{BQP}}
\newcommand{\NP}{\mathsf{NP}}
\newcommand{\EXP}{\mathsf{EXP}}
\newcommand{\NEXP}{\mathsf{NEXP}}
\newcommand{\RE}{\mathsf{RE}}
\newcommand{\MIP}{\mathsf{MIP}}

\newcommand{\id}{\mathbb I}
\newcommand{\ket}[1]{\lvert #1\rangle}
\newcommand{\bra}[1]{\langle #1\rvert}
\newcommand{\ip}[2]{\langle #1\mid #2\rangle}
\newcommand{\proj}[1]{\ket{#1}\!\bra{#1}}
\newcommand{\norm}[1]{\lVert #1\rVert}
\newcommand{\abs}[1]{\lvert #1\rvert}
\newcommand{\dd}{\mathcal D}
\newcommand{\llin}{\mathcal L}

\setlist{nosep}
\allowdisplaybreaks[1]
\title{QMA(2) with Limited Shared Entanglement}
\author{
Alex Della Schiava\thanks{\href{mailto:alex.della.schiava.d2@math.nagoya-u.ac.jp}{\texttt{alex.della.schiava.d2@math.nagoya-u.ac.jp}}}\\
Graduate School of Mathematics\\
Nagoya University, Japan
\and
Ranitha Mataraarachchi\thanks{\href{mailto:ranitha@nagoya-u.jp}{\texttt{ranitha@nagoya-u.jp}}}\\
Graduate School of Mathematics\\
Nagoya University, Japan
}
\date{}

\begin{document}
\maketitle
\begin{abstract}
    A $\QMA(2)$ protocol involves two provers submitting unentangled witnesses to a polynomial-time quantum verifier. In \emph{The Power of Unentanglement} (ToC, 2009), Aaronson et al. proposed $\QMA(2;h)$, a variant of $\QMA(2)$ in which the two provers may share $h$ EPR pairs. Our main result shows that the power of $\QMA(2)$ remains unchanged for up to logarithmically many shared EPR pairs: $\QMA(2;h)=\QMA(2)$ for $h=O(\log n)$, where $n$ is the input length. The result follows from a simulation using four unentangled witnesses, combined with the Harrow--Montanaro equality \(\QMA(4) = \QMA(2)\) (FOCS, 2010).

    We also prove monotonicity in the EPR budget: $\QMA(2;h)\subseteq\QMA(2;H)$ for $h\le H$, preserving completeness and soundness. Combined with input padding, this shows that establishing $\QMA(2; n^\varepsilon)=\QMA(2)$ for any fixed $\varepsilon>0$ would imply equality for every polynomially bounded budget, resolving the open problem raised by Aaronson et al.

    Finally, we extend these results to a variant of the model in which the provers may use local operations and classical communication (LOCC) during witness preparation. For logarithmic-size witnesses and inverse-polynomial gaps, both models remain equivalent to their unentangled counterpart when $h=O(\log n)$. We show that extending this equivalence to any superlogarithmic EPR budget in the LOCC model would imply $\NP\subseteq\BQP$.
\end{abstract}

\input{00_intro}
\input{01_preliminaries}
\input{02_00_QMA_2_h}
\input{03_removing_log_entanglement}
\input{04_classical_communication}
\input{05_logarithmic_witnesses}
\input{06_open}

\section*{Acknowledgments}
The authors would like to thank François Le Gall for helpful discussions. We also thank Seiichiro Tani for pointing out the open problem in~\cite[Section 6.2]{ABD09}. ADS is supported by JST SPRING, Grant Number JPMJSP2125 and thanks the THERS Make New Standards Program for the Next Generation Researchers. RM is supported by the JST CREST, Grant Number JPMJCR24I4.

\section*{AI assistance disclosure}
During August and September 2026, the authors consulted ChatGPT-5.6 Sol, ChatGPT-6 Astra, and Claude 5.0 Opus. These models provided substantial assistance in developing the proofs and assisted with the writing and revision of the manuscript. The authors take full responsibility for the mathematical claims and the final text.

\bibliographystyle{alpha}
\bibliography{references}

\appendix
\input{appendices/A_monotonicity}
\input{appendices/B_removing_entanglement}
\input{appendices/C_locc}

\input{appendices/D_witness_grouping}

\end{document}

%% file: 00_intro.tex
\section*{Introduction}
A $\QMA(2)$ protocol consists of two computationally unbounded provers submitting unentangled quantum witnesses to a polynomial-time quantum verifier~\cite{KMY03}. Harrow and Montanaro showed that two witnesses already capture the power of polynomially many unentangled witnesses: $\QMA(2)=\QMA(k)$ for every polynomially bounded $k\ge2$~\cite{HM13}. Yet determining whether two unentangled witnesses are more powerful than one remains a longstanding open problem: $\QMA(2)\stackrel{?}{=}\QMA$. The known containments $\QMA\subseteq\QMA(2)\subseteq\NEXP$ leave open even the possibility that $\QMA(2)\stackrel{?}{=}\NEXP$, which recent work has investigated~\cite{JW23}.

The conjectured power of $\QMA(2)$ over $\QMA$ rests on a promise that limits the provers' cheating strategies: even dishonest provers must submit unentangled witnesses. Understanding how much this promise can be relaxed therefore leads to a natural question:
\begin{center}
\emph{Is the power of $\QMA(2)$ robust to limited shared entanglement?}
\end{center}

Aaronson et al.~\cite[Section~6.2]{ABD09} proposed $\QMA(2;h)$ in \emph{The Power of Unentanglement}, asking whether sharing polynomially many EPR pairs changes the power of $\QMA(2)$. We formalize and study this setting: the two provers prepare their witnesses through local quantum operations, using $h$ shared EPR pairs as an entanglement resource. Throughout, $n$ denotes the input length and EPR budgets \(h\) are assumed to be efficiently computable.

\subsection*{Our results and techniques}
Our main result shows that logarithmically many shared EPR pairs leave $\QMA(2)$ unchanged. 
\begin{intrologarithmic}
For every EPR budget  $h=O(\log n)$,
\[
\QMA(2;h)=\QMA(2).
\]
\end{intrologarithmic}
The equality combines two arguments. Adapting any $\QMA(2)$ protocol to allow shared EPR pairs yields $\QMA(2)\subseteq\QMA(2;h)$; the construction is discussed below. For the reverse containment, we simulate a $\QMA(2;h)$ protocol in $\QMA(4)$ using four mutually unentangled witnesses, then apply the Harrow--Montanaro equality $\QMA(4)=\QMA(2)$~\cite{HM13}.

In proving the second inclusion, the main difficulty is that legal witnesses in $\QMA(2;h)$ cannot in general be assumed to be pure. We therefore characterize legal witnesses through a \emph{flat purification}: a pure state whose Schmidt coefficients are all equal. Our \(\QMA(4)\) construction then uses two witnesses to encode a single purification. The other two witnesses are intended as copies, which we use to test that the encoded purification is indeed flat.

The forward inclusion \(\QMA(2) \subseteq \QMA(2;h)\) follows from a broader result. We prove monotonicity of \(\QMA(2;h)\) in the EPR budget: more EPR pairs do not weaken the model.
\begin{intromonotonicity}\label{thm:bm}
For polynomially bounded $h, H$,
\[
\QMA(2;h)\subseteq\QMA(2;H), \qquad \text{for } h \leq H.
\]
\end{intromonotonicity}

As with $\QMA$ vs.\ $\QMA(2)$, increasing the budget from $h$ to $H$ allows more dishonest submissions, seemingly weakening the model. Our construction incorporates a test for the surplus EPR pairs, protecting the verifier against the additional cheating strategies.

Combining~\Cref{cor:monotonicity} with a padding argument advances our central question.
\begin{intropolynomial}
Fix a constant $\varepsilon>0$. Then
\[
\QMA(2; n^\varepsilon)=\QMA(2)
\quad\implies\quad
\QMA(2;h)=\QMA(2), \quad \text{for } h \leq \operatorname{poly}(n).
\]
\end{intropolynomial}
Thus establishing $\QMA(2;n^\varepsilon)=\QMA(2)$ for a single polynomial budget $n^\varepsilon$---no matter how small---would settle the equality for every polynomially bounded budget, effectively resolving the open problem raised by Aaronson et al.~\cite{ABD09}.

Next, we consider a relaxation of the $\QMA(2;h)$ model, taking as witnesses all states of Schmidt number at most $2^h$. This witness set has an exact operational interpretation: it consists precisely of the states the provers can prepare from $h$ shared EPR pairs by local operations and classical communication (LOCC). In~\Cref{sec:classical}, we extend~\Cref{thm:logarithmic-entanglement} and~\Cref{cor:monotonicity} to this new model, which we denote by \(\QMA^{\mathrm{LOCC}}(2;h)\).

The LOCC setting becomes particularly informative for logarithmic witness sizes. We write $\QMA_{\log}(2)$ for this restriction of $\QMA(2)$, allowing inverse-polynomial gaps. Using a logarithmic-size Harrow--Montanaro reduction, we extend our equality to this witness-size restriction in both models~(\Cref{thm:log-equality}):
\[
\QMA_{\log}(2;h)=\QMA_{\log}(2) = \QMA_{\log}^{\mathrm{LOCC}}(2;h),
\qquad\text{for }h=O(\log n).
\]
For superlogarithmic EPR budgets, however, \Cref{prop:log-locc-bqp} gives $\QMA_{\log}^{\mathrm{LOCC}}(2;h)=\BQP$. Since $\NP\subseteq\QMA_{\log}(2)$~\cite{BT12}, this yields the following barrier to extending our logarithmic-entanglement result.

\begin{introbarrier}
For polynomially bounded $h$,
\[
\QMA_{\log}^{\mathrm{LOCC}}(2;h)=\QMA_{\log}(2)
\quad\Longrightarrow\quad
\NP\subseteq\BQP,
\qquad\text{for }h=\omega(\log n).
\]
\end{introbarrier}
The logarithmic EPR budget in \Cref{thm:log-equality} is therefore essentially optimal for logarithmic-size witnesses in the LOCC model, unless $\NP\subseteq\BQP$.

\subsection*{Related work}
Our starting point is the question of Aaronson et al.~\cite[Section~6.2]{ABD09} concerning $\QMA(2)$ with shared EPR pairs. Bounded prior entanglement has also been studied in interactive proofs. Kobayashi and Matsumoto~\cite{KM03} established a $\NEXP$ upper bound for quantum multiprover interactive proofs with polynomially many prior-entangled qubits, and equality without prior entanglement. Gavinsky~\cite[Section~5]{Gav08} showed that classical two-prover interactive proofs characterize $\NEXP$ under any fixed polynomial bound on the size of an arbitrary shared state. Our setting uses quantum witnesses prepared from shared EPR pairs and submitted without interaction with the verifier.

Shared entanglement can affect both soundness and verification power. Cleve et al.~\cite{CHTW04} exhibited classical two-prover protocols whose soundness fails against entangled provers, while Ji et al.~\cite{JNVWY20} proved $\MIP^*=\RE$ with unrestricted shared entanglement. In communication complexity, Gavinsky~\cite{Gav08} established quantitative tradeoffs between shared entanglement and communication for relational problems. These tradeoffs concern parties holding separate inputs, whereas both provers in our setting know the entire input.

Structural restrictions on quantum witnesses have also been studied. Jeronimo and Wu~\cite{JW23} characterized $\NEXP$ using two unentangled proofs with nonnegative amplitudes at suitable constant thresholds. Bassirian et al.~\cite{BFLMW24} showed that two unentangled, internally separable proofs characterize $\NEXP$ at suitable constant thresholds, whereas the corresponding single-proof class lies in $\EXP$. These works impose structural promises on witnesses; our models constrain the entanglement available during their preparation.

For logarithmic-size witnesses, Marriott and Watrous~\cite{MW05} showed that a single quantum proof gives only $\BQP$, while Blier and Tapp~\cite{BT12} established $\NP\subseteq\QMA_{\log}(2)$ with an inverse-polynomial gap. Le Gall, Nakagawa, and Nishimura~\cite{LNN12} improved the gap achievable with two logarithmic-size proofs for $3$-SAT. Our results examine the robustness of this setting to shared entanglement while preserving logarithmic witness size.

%% file: 01_preliminaries.tex
\section{Preliminaries and definitions}\label{sec:prelim}

All Hilbert spaces are finite-dimensional. For a register $A$, let $\llin(A)$ denote its linear operators and $\dd(A)$ its density operators. Unless stated otherwise, a pure state \(\ket{\psi}\) is a unit vector. The Schmidt decomposition of a bipartite pure state is
\begin{equation}\label{eq:schmidt}
 \ket\psi_{AB}=\sum_{i=1}^{r}\mu_i\ket{a_i}_A\ket{b_i}_B,
 \qquad \mu_1\ge\cdots\ge\mu_r>0,
 \qquad \sum_{i=1}^{r}\mu_i^2=1,
\end{equation}
where both \(\{\ket{a_i}\}_i\) and \(\{\ket{b_i}\}_i\) are orthonormal. The number $r$ is the \emph{Schmidt rank}, denoted $\SR(\psi)$. We call \(\ket\psi\) \emph{flat} when all its nonzero coefficients are equal. For a bipartite mixed state $\rho_{AB}$, its \emph{Schmidt number} $\operatorname{SN}(\rho)$ is the smallest $R$ such that $\rho$ is a convex combination of pure states of Schmidt rank at most $R$~\cite{TH00}.

A quantum channel $\Lambda:\mathcal{L}(R)\to\mathcal{L}(A)$
is a completely positive trace-preserving map. Its Kraus and Stinespring
representations take the form
\begin{equation}\label{eq:channel}
 \Lambda(\rho)=\sum_j K_j\rho K_j^\dagger
 =\Tr_E(V\rho V^\dagger),\qquad
 \sum_j K_j^\dagger K_j=\id_R,\qquad
 V=\sum_j K_j\otimes\ket j_E.
\end{equation}
Here $V:R\to A\otimes E$ is an isometry with environment $E$, where $\dim(E)\le\dim(R)\dim(A)$.

The trace distance $T(\rho,\sigma):=\tfrac12\norm{\rho-\sigma}_1$ is convex, contracts under channels and, for a measurement effect \(0 \preceq M \preceq \id\), bounds the outcome probability difference
\begin{equation}\label{eq:measurement-bound}
    \left|\operatorname{Tr}[M(\rho-\sigma)]\right|\le T(\rho,\sigma)
\end{equation}
For pure states,
\begin{equation}\label{eq:trace}
 T(\alpha,\beta):=T(\proj\alpha,\proj\beta)
 =\sqrt{1-\abs{\ip\alpha\beta}^2}.
\end{equation}
For a state on $XI$, we write $\alpha_I:=\Tr_X\proj\alpha$, i.e., the subscript denotes marginal \(I\). We also use the Hilbert--Schmidt norm $\norm Z_2=(\Tr Z^\dagger Z)^{1/2}$.
For $D=2^h$, the state of \(h\) EPR pairs is
\begin{equation}\label{eq:epr}
 \ket{\Phi_D}:=\frac1{\sqrt D}\sum_{i=1}^{D}\ket i\ket i.
\end{equation}
A \emph{Bell test} on two $h$-qubit registers has effects $\proj{\Phi_D}$ and $\id-\proj{\Phi_D}$. We call these outcomes success and failure, respectively.

Let $F$ be the SWAP operator on equal-dimensional registers,
with $F\ket{i,j}=\ket{j,i}$. Then
\begin{equation}
    \operatorname{Tr}[F(\rho\otimes\sigma)]
    = \operatorname{Tr}(\rho\sigma).
\end{equation}
The SWAP test measures the observable $F$, yielding $\pm1$ with
probabilities $[1\pm\operatorname{Tr}(\rho\sigma)]/2$.

\subsection{Two-prover verification model}
\label{subsec:verification-model}
Two computationally unbounded provers submit
independent quantum witnesses to a uniform polynomial-time quantum verifier.
The verifier has access to private ancillas initialized to zero.

For input $x$ of length $n$, let $A$ and $B$ be witness registers containing $w_A(n)$ and $w_B(n)$ qubits, respectively, where $w_A,w_B$ are efficiently computable, polynomially bounded functions. The full witness has form
$\rho_A\otimes\rho_B$, where $\rho_A\in\mathcal{D}(A)$ and
$\rho_B\in\mathcal{D}(B)$. The verifier induces an accepting effect
$0\preceq M_x\preceq \id_{AB}$, with acceptance probability
\(\operatorname{Tr}\!\left[M_x(\rho_A\otimes\rho_B)\right].\)

Soundness $a(n)$ and completeness $b(n)$ are efficiently computable
rationals satisfying $0\le a(n)<b(n)\le1$, with gap $\Delta(n):=b(n)-a(n)$.
We suppress dependence on $n$ when clear.

\begin{definition}
A promise problem $L=(L_{\mathrm{yes}},L_{\mathrm{no}})$ is in
$\QMA(2;a,b)$ if there exists a uniform polynomial-time quantum verifier that,
on input $x\in\{0,1\}^n$, receives two independent polynomial-size witnesses
on $A,B$ and has an accepting effect $M_x$ satisfying:
\begin{alignat*}{2}
\text{\emph{(Completeness)}}\quad
& x\in L_{\mathrm{yes}}
&\quad& \implies \exists\,\rho_A,\rho_B:
\Tr[M_x(\rho_A\otimes\rho_B)]\ge b(n),\\
\text{\emph{(Soundness)}}\quad
& x\in L_{\mathrm{no}}
&\quad& \implies \forall\,\rho_A,\rho_B:
\Tr[M_x(\rho_A\otimes\rho_B)]\le a(n).
\end{alignat*}
The quantifiers range over $\rho_A\in\mathcal{D}(A)$ and
$\rho_B\in\mathcal{D}(B)$.
\end{definition}

It suffices to consider pure product witnesses: every
mixed product witness is a convex combination of pure product states, and
acceptance probability is linear in the joint state.

When witness lengths need to be explicit, we write
$\QMA(2;w_A,w_B;a,b)$ for protocols using $w_A(n)$ and $w_B(n)$ witness
qubits, respectively.
For $k$ mutually unentangled witnesses, we write $\QMA(k;a,b)$.
The unqualified notation $\QMA(k)$ takes the union over thresholds
satisfying $\Delta\ge 1/\operatorname{poly}(n)$. By the Harrow--Montanaro
amplification theorem, this agrees with the conventional definition with soundness \(1/3\) and completeness \(2/3\). By the same theorem, \(\QMA(2) = \QMA(k)\) for \(2 \leq k \leq \poly(n)\)~\cite[Theorem~9]{HM13}.

%% file: 02_00_QMA_2_h.tex
\section{QMA(2) with shared EPR pairs}
\label{sec:shared-epr}

In this section, we formalize the setting considered by Aaronson et al.~\cite[Section~6.2]{ABD09}, who asked whether allowing the two provers to share polynomially many EPR pairs changes the power of $\QMA(2)$.

Let $h$ be an efficiently computable, nonnegative polynomially bounded function. On input $x$ of length $n$, the two provers initially share $h=h(n)$ EPR pairs on registers $R_A,R_B$. They apply arbitrary local channels to prepare witnesses on $A,B$, potentially using private ancillas. The channels may depend on $x$. With $D=2^h$, the admissible witness set is
\begin{equation}\label{eq:witness-set}
\mathcal{W}_h(A:B):=\left\{
  (\Lambda_A\otimes\Lambda_B)(\proj{\Phi_D}_{R_AR_B})
  \;\middle|\;
  \begin{array}{l}
    \Lambda_A:\llin(R_A)\to\llin(A),\\
    \Lambda_B:\llin(R_B)\to\llin(B)
  \end{array}\text{ are channels}
\right\}.
\end{equation}

\begin{definition}\label{def:qmah}
A promise problem $L=(L_{\mathrm{yes}},L_{\mathrm{no}})$ is in $\QMA(2;h;a,b)$ if it has a uniform polynomial-time quantum verifier that, on input $x\in\{0,1\}^n$, receives two polynomial-size witness registers $A,B$ and has an accepting effect $M_x$ satisfying:
\begin{alignat*}{2}
\text{\emph{(Completeness)}}\quad
& x\in L_{\mathrm{yes}}
&\quad& \implies
\max_{\rho\in\mathcal{W}_h(A:B)}\Tr(M_x\rho)\ge b(n),\\
\text{\emph{(Soundness)}}\quad
& x\in L_{\mathrm{no}}
&\quad& \implies
\max_{\rho\in\mathcal{W}_h(A:B)}\Tr(M_x\rho)\le a(n).
\end{alignat*}
\end{definition}

We write $\QMA(2;w_A,w_B,h;a,b)$ when the witness lengths are specified. The notation $\QMA(2;h)$ takes the union over thresholds with $\Delta\ge 1/\operatorname{poly}(n)$, as in the preliminaries.

At $h=0$, the admissible witnesses are exactly the product states, and hence
\[
\QMA(2;0;a,b)=\QMA(2;a,b).
\]

\subsection{Structure of the witness set}
\label{subsec:witness-structure}

We first characterize $\mathcal{W}_h(A:B)$ through purifications obtained by retaining the local environments. These purifications preserve the $D=2^h$ equal Schmidt coefficients of the shared EPR state; in other words, these are \emph{flat states.}

\begin{proposition}[Purification by local isometries]
\label{prop:purification}
A state $\rho_{AB}$ belongs to $\mathcal{W}_h(A:B)$ if and only if there are local environments $E_A,E_B$ and orthonormal families $\{\ket{u_i}_{AE_A}\}_{i=1}^{D}$ and $\{\ket{v_i}_{BE_B}\}_{i=1}^{D}$ such that
\begin{equation}\label{eq:flat-purification}
\ket{\Omega}_{AE_A:BE_B} =\frac{1}{\sqrt D}\sum_{i=1}^{D} \ket{u_i}_{AE_A}\ket{v_i}_{BE_B}, \qquad \rho_{AB}=\Tr_{E_AE_B}\proj{\Omega}.
\end{equation}
Moreover, $E_A,E_B$ can be chosen with $\dim(E_A)\le D\dim(A)$ and $\dim(E_B)\le D\dim(B)$. Their qubit sizes are polynomially bounded in $n$ whenever $h$ and those of $A,B$ are.
\end{proposition}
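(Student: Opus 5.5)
The plan is to prove both directions through the Stinespring representation~\eqref{eq:channel}, and to read off the dimension bounds from that same representation.

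\emph{Forward direction.} Let $\rho_{AB}=(\Lambda_A\otimes\Lambda_B)(\proj{\Phi_D})$. Take Stinespring isometries $V_A:R_A\to A\otimes E_A$ and $V_B:R_B\to B\otimes E_B$ with $\Lambda_X(\cdot)=\Tr_{E_X}(V_X\cdot V_X^\dagger)$. Define $\ket{\Omega}:=(V_A\otimes V_B)\ket{\Phi_D}$. Substituting~\eqref{eq:epr} gives
\[
\ket{\Omega}=\frac{1}{\sqrt D}\sum_{i=1}^D V_A\ket i\otimes V_B\ket i .
\]
Set $\ket{u_i}:=V_A\ket i$ and $\ket{v_i}:=V_B\ket i$. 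These families are orthonormal because $V_A,V_B$ are isometries, so $V_X^\dagger V_X=\id$. Tracing out $E_AE_B$ commutes with the local structure, so $\Tr_{E_AE_B}\proj{\Omega}=(\Lambda_A\otimes\Lambda_B)(\proj{\Phi_D})=\rho_{AB}$. The dimension bound $\dim(E_X)\le\dim(R_X)\dim(X)=D\dim(X)$ is the one recorded after~\eqref{eq:channel}. Hence $E_X$ has at most $h+w_X(n)$ qubits, which is polynomial whenever $h$ and $w_X$ are.

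\emph{Converse.} Given orthonormal families $\{\ket{u_i}\}$ and $\{\ket{v_i}\}$, define linear maps $V_A:=\sum_{i=1}^D\ket{u_i}\bra{i}$ from $R_A$ to $AE_A$, and $V_B$ analogously. Orthonormality gives $V_X^\dagger V_X=\sum_{i,j}\ket i\ip{u_i}{u_j}\bra j=\id_{R_X}$, so each is an isometry. Then $\Lambda_X(\cdot):=\Tr_{E_X}(V_X\cdot V_X^\dagger)$ is a channel from $\llin(R_X)$ to $\llin(X)$. Also $(V_A\otimes V_B)\ket{\Phi_D}=\ket{\Omega}$, and tracing out the environments shows $\rho_{AB}\in\mathcal W_h(A:B)$.

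\emph{Main subtlety.} There is no real obstacle; the point needing care is dimension bookkeeping. The ``moreover'' clause asks that environments satisfying the bound \emph{can be chosen}. So for the converse with arbitrary given environments nothing further is needed, and the bound comes from the forward construction. One also checks that $\ket{\Omega}$ is a legitimate Schmidt form with equal coefficients $1/\sqrt D$ across the cut $AE_A:BE_B$; this follows immediately from orthonormality on each side. Finally, to keep $\dim(E_X)$ a power of two when counting qubits, pad $E_X$ to the next power of two. This adds at most one qubit and preserves the isometry.
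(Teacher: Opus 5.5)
Your proposal is correct and follows the paper's proof essentially step for step. Both apply Stinespring isometries $V_A,V_B$ to $\ket{\Phi_D}$ to get the flat purification with $\ket{u_i}=V_A\ket{i}$ and $\ket{v_i}=V_B\ket{i}$, define isometries from the orthonormal families for the converse, and read the environment bound off the Stinespring dimension bound. Your extra qubit bookkeeping (at most $h+w_A$ and $h+w_B$ qubits, with power-of-two padding staying within $D\dim(A)$ and $D\dim(B)$ since these are already powers of two) is a harmless refinement of the paper's last sentence.
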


\begin{proof}
Let $\rho_{AB}=(\Lambda_A\otimes\Lambda_B)(\proj{\Phi_D})$. Take Stinespring isometries $V_A:R_A\to A\otimes E_A$ and $V_B:R_B\to B\otimes E_B$ for the two channels. Then $\ket{\Omega}=(V_A\otimes V_B)\ket{\Phi_D}$ gives \eqref{eq:flat-purification}, with $\ket{u_i}=V_A\ket{i}$ and $\ket{v_i}=V_B\ket{i}$. Both families are orthonormal because $V_A,V_B$ are isometries.

Conversely, the two orthonormal families define isometries by $V_A\ket{i}=\ket{u_i}$ and $V_B\ket{i}=\ket{v_i}$. Tracing out $E_A,E_B$ gives local channels that prepare $\rho_{AB}$.

Finally, the Stinespring dimension bound allows us to choose \(E_A, E_B\) with $\dim(E_A)\le D\dim(A)$ and $\dim(E_B)\le D\dim(B)$, yielding the stated polynomial bounds on their sizes.
\end{proof}

\begin{corollary}\label{cor:witness-rank-decomposition}
Every state in $\mathcal{W}_h(A:B)$ is a convex combination of pure states of Schmidt rank at most $D$.
\end{corollary}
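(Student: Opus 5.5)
The plan is to combine the flat purification of \Cref{prop:purification} with a spectral decomposition of the environment marginal, and to read off the Schmidt rank bound from a single linear-algebra observation. Fix $\rho_{AB}\in\mathcal{W}_h(A:B)$. By \Cref{prop:purification} there is a pure state $\ket{\Omega}=\frac{1}{\sqrt D}\sum_{i=1}^{D}\ket{u_i}_{AE_A}\ket{v_i}_{BE_B}$ with $\rho_{AB}=\Tr_{E_AE_B}\proj{\Omega}$. Across the cut $AE_A:BE_B$ this state has Schmidt rank exactly $D$. The goal is to write the partial trace over $E_AE_B$ as a mixture of pure states on $AB$, each with Schmidt rank at most $D$ across $A:B$.

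Step one is to trace out the environments in a product basis. Choose orthonormal bases $\{\ket{e}\}$ of $E_A$ and $\{\ket{f}\}$ of $E_B$. Then
\[
\rho_{AB}=\sum_{e,f}\ket{\omega_{ef}}\!\bra{\omega_{ef}},
\qquad
\ket{\omega_{ef}}:=(\id_A\otimes\bra e\otimes\id_B\otimes\bra f)\ket{\Omega}
=\frac{1}{\sqrt D}\sum_{i=1}^{D}\ket{\alpha_{i,e}}_A\ket{\beta_{i,f}}_B,
\]
where $\ket{\alpha_{i,e}}:=(\id_A\otimes\bra e)\ket{u_i}$ and $\ket{\beta_{i,f}}:=(\id_B\otimes\bra f)\ket{v_i}$ are subnormalized vectors.

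Step two is the rank bound. Each $\ket{\omega_{ef}}$ is a sum of at most $D$ product vectors. Hence its coefficient matrix, viewed as an operator $B^*\to A$, has rank at most $D$, and its Schmidt rank is at most $D$. Local projections can only decrease Schmidt rank, so this is exactly the expected behaviour.

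Step three is normalization. Discard the terms with $\ket{\omega_{ef}}=0$. Set $p_{ef}:=\norm{\omega_{ef}}^2$; these sum to $\Tr\rho_{AB}=1$. Writing $\rho_{AB}=\sum p_{ef}\proj{\hat\omega_{ef}}$ with normalized $\hat\omega_{ef}$ gives the claimed convex combination.

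There is no real obstacle. The only point needing care is to trace out the environments in a product basis, rather than an arbitrary basis of $E_A\otimes E_B$. This keeps each conditional vector a sum of $D$ products of local vectors on $A$ and on $B$, which is what the rank bound requires.
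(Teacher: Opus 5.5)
Your proof is correct and is essentially the paper's argument: your conditional vectors $\ket{\omega_{ef}}$ are exactly the paper's $(K_e\otimes L_f)\ket{\Phi_D}$ for the Kraus operators $K_e=(\id_A\otimes\bra{e})V_A$ and $L_f=(\id_B\otimes\bra{f})V_B$ read off from the Stinespring isometries. Going through \Cref{prop:purification} instead of writing the Kraus decomposition directly changes nothing of substance. Your remark that the environments must be traced out in a product basis is apt: projecting $E_AE_B$ onto an entangled basis vector could raise the Schmidt rank across $A:B$ by entanglement swapping.
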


\begin{proof}
Let $\{K_j\}_j$ and $\{L_k\}_k$ be Kraus operators for the two local channels. Their output can be written using unnormalized vectors as
\[
\rho_{AB}=\sum_{j,k}\proj{\xi_{jk}}, \qquad \ket{\xi_{jk}}=(K_j\otimes L_k)\ket{\Phi_D}.
\]
Each nonzero vector $\ket{\xi_{jk}}$ has Schmidt rank at most $D$, since local linear maps cannot increase Schmidt rank. Normalizing these vectors gives a convex decomposition with weights $\norm{\xi_{jk}}^2$.
\end{proof}

When the witness is pure, the preparation rule imposes a stronger restriction: it must be flat. The following is a specialization of the pure-state conversion criterion of Schmid et al.~\cite[Lemma~6 and Corollary~7]{SFK23}.

\begin{proposition}[Pure witnesses]\label{prop:pure-output}
A pure state $\proj{\psi}_{AB}$ belongs to $\mathcal{W}_h(A:B)$ if and only if $\ket{\psi}$ is flat and has Schmidt rank $2^t$ for some integer $0\le t\le h$.
\end{proposition}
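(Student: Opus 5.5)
For the ``if'' direction I will give an explicit preparation. Suppose $\ket\psi=2^{-t/2}\sum_{i=1}^{2^t}\ket{a_i}_A\ket{b_i}_B$ with $t\le h$. Write $\ket{\Phi_D}=\ket{\Phi_{2^t}}\otimes\ket{\Phi_{2^{h-t}}}$, splitting the $h$ EPR pairs into the first $t$ and the remaining $h-t$. Prover $A$ traces out her halves of the last $h-t$ pairs and applies the isometry $\ket i\mapsto\ket{a_i}$ on the first $t$ qubits. Prover $B$ does the same with $\ket i\mapsto\ket{b_i}$. Both maps are channels, and the output is exactly $\proj\psi$.

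For the ``only if'' direction I will use \Cref{prop:purification}. Let $\proj\psi=\Tr_{E_AE_B}\proj\Omega$ with $\Omega$ flat of Schmidt rank $D$ across $AE_A:BE_B$. Since the marginal on $AB$ is pure, $\ket\Omega=\ket\psi_{AB}\otimes\ket\gamma_{E_AE_B}$ for some pure $\ket\gamma$. The Schmidt coefficients of a tensor product across the cut $AE_A:BE_B$ are all pairwise products of those of $\psi$ (across $A:B$) and $\gamma$ (across $E_A:E_B$). Equivalently, the reduced state on $AE_A$ is $\psi_A\otimes\gamma_{E_A}$, and it must equal $\id/D$ on a $D$-dimensional support.

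The main step is the spectral argument that follows. Let $\psi_A$ have eigenvalues $p_1\ge\dots\ge p_r>0$ and $\gamma_{E_A}$ have eigenvalues $q_1\ge\dots\ge q_s>0$. All products $p_iq_j$ equal $1/D$, so $p_1q_j=p_rq_j$ forces $p_1=p_r$. Hence $\psi$ is flat with $p_i=1/r$, and likewise all $q_j=1/s$. The rank satisfies $rs=D=2^h$, so $r$ divides $2^h$ and $r=2^t$ with $0\le t\le h$. The only care needed is in justifying the product form of $\Omega$ when the $AB$ marginal is pure; this is the standard fact that a pure marginal decouples from any purification.
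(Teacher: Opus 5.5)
Your proposal is correct and follows essentially the same route as the paper. For the "if" direction you discard $h-t$ pairs and apply the local isometries. For the "only if" direction you factor the flat purification as $\ket\psi\otimes\ket\gamma$, use that the product coefficients are all equal to force both factors to be flat, and conclude from $rs=2^h$. Phrasing this via the eigenvalues of $\psi_A\otimes\gamma_{E_A}$ rather than the Schmidt coefficients is only a cosmetic difference.
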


\begin{proof}
Suppose $\proj{\psi}_{AB}\in\mathcal{W}_h(A:B)$ is pure. Since its $AB$ marginal is pure, the purification in \eqref{eq:flat-purification} factors as
\[
\ket{\Omega}=\ket{\psi}_{AB}\otimes\ket{\eta}_{E_AE_B}
\]
for some pure state $\ket{\eta}$. Let $\lambda_i$ and $\mu_j$ be the nonzero Schmidt coefficients of $\ket{\psi}$ and $\ket{\eta}$. The \(D\) Schmidt coefficients of $\ket{\Omega}$ across $AE_A:BE_B$ are the products $\lambda_i\mu_j$ and, by \Cref{prop:purification}, are all equal to \(D^{-1/2}\). Holding either index fixed therefore shows that both \(\ket{\psi}\) and \(\ket{\eta}\) are flat. Their Schmidt ranks satisfy $\SR(\psi)\SR(\eta)=D=2^h$, so $\SR(\psi)=2^t$ for some $0\le t\le h$.

Conversely, suppose $\ket{\psi}$ is flat with Schmidt rank $R=2^t$, where $0\le t\le h$, that is,
\[
\ket{\psi}_{AB}=\frac{1}{\sqrt R}\sum_{i=1}^{R}\ket{a_i}_A\ket{b_i}_B.
\]
To generate \(\ket{\psi}\), the provers discard $h-t$ EPR pairs, leaving $\ket{\Phi_R}$ on their remaining registers. Next, using local isometries $V_A\ket{i}=\ket{a_i}$ and $V_B\ket{i}=\ket{b_i}$, they transform $\ket{\Phi_R}$ into $\ket{\psi}$. This preparation uses only local channels, so $\proj{\psi}_{AB}\in\mathcal{W}_h(A:B)$.
\end{proof}

The usual convexity argument for restricting to pure witnesses in $\QMA(2)$ does not directly apply to $\QMA(2;h)$. The normalized states $\ket{\xi_{jk}}/\norm{\xi_{jk}}$ in \Cref{cor:witness-rank-decomposition} may have unequal Schmidt coefficients and therefore need not belong to $\mathcal{W}_h(A:B)$. To work with pure states, we retain the local environments and use the flat purifications of \Cref{prop:purification}.

\input{02_02_monotonicity}

%% file: 02_02_monotonicity.tex
\subsection{Monotonicity in the EPR budget}
\label{subsec:monotonicity}

We now show that increasing the number of shared EPR pairs does not reduce the power of the model: $\QMA(2;h)\subseteq\QMA(2;H)$ whenever $h\le H$. The provers can discard $H-h$ pairs, so $\mathcal{W}_h(A:B)\subseteq\mathcal{W}_H(A:B)$; however, this alone does not establish the class inclusion, as dishonest provers gain cheating power from the larger witness set.

Given a $\QMA(2;h)$ protocol, our $\QMA(2;H)$ construction asks each prover to send the surplus $H-h$ EPR pairs untouched, along with their original witness. The verifier performs a Bell test on these additional registers and runs the original verification on $AB$, accepting only if both steps accept. The following theorem shows that this preserves the maximum acceptance probability. For a positive operator $M$ on $AB$, write $\beta_h(M):=\max_{\rho\in\mathcal{W}_h(A:B)}\Tr(M\rho)$.

\begin{theorem}[Exact EPR padding]\label{thm:padding}
Let $0\le h\le H$ be integers with $K=2^{H-h}$, and let surplus EPR registers $C_A,C_B$ each contain $H-h$ qubits. For every accepting effect $0\preceq M\preceq\id_{AB}$,
\begin{equation}\label{eq:padding}
\beta_H\!\left(M_{AB}\otimes\proj{\Phi_K}_{C_AC_B}\right)=\beta_h(M),
\end{equation}
where the left-hand side uses the bipartition $AC_A:BC_B$.
\end{theorem}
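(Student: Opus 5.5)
The proof shows the two inequalities separately. The direction ``$\ge$'' is direct. The direction ``$\le$'' passes to the flat purifications of \Cref{prop:purification}, shows that a successful Bell test leaves a vector with small Schmidt coefficients, and then completes that vector to a flat purification of rank $D$.

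\emph{Lower bound.} Let $\rho\in\mathcal{W}_h(A:B)$ attain $\beta_h(M)$. The provers use $h$ of their $H$ pairs to prepare $\rho$ and pass the remaining $H-h$ pairs into $C_A,C_B$ untouched. This gives $\rho\otimes\proj{\Phi_K}\in\mathcal{W}_H(AC_A:BC_B)$, whose acceptance probability is $\Tr(M\rho)\cdot 1=\beta_h(M)$.

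\emph{Upper bound, step 1 (contraction).} Fix $\sigma\in\mathcal{W}_H(AC_A:BC_B)$. By \Cref{prop:purification} it has a flat purification $\ket\Omega$ of Schmidt rank $DK$ across $AC_AE_A:BC_BE_B$, with $D=2^h$. Define the unnormalized vector
\[
\ket w:=(\id\otimes\bra{\Phi_K}_{C_AC_B})\ket\Omega\in AE_A\otimes BE_B .
\]
Then $\Tr[(M\otimes\proj{\Phi_K})\sigma]=\bra w (M\otimes\id_{E_AE_B})\ket w$.

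Identify $\ket\Omega$ with an operator $W$ from the dual of $BC_BE_B$ to $AC_AE_A$. All singular values of $W$ equal $(DK)^{-1/2}$, so $\norm W_{\mathrm{op}}=(DK)^{-1/2}$. Writing $\ket{\Phi_K}=K^{-1/2}\sum_c\ket{cc}$, the operator of $\ket w$ is
\[
W'=K^{-1/2}\sum_{c=1}^K W_{cc},
\]
where $W_{cc}$ are the diagonal blocks of $W$ in the $C$ indices. Each block has operator norm at most $(DK)^{-1/2}$, so the triangle inequality gives $\norm{W'}_{\mathrm{op}}\le K^{1/2}(DK)^{-1/2}=D^{-1/2}$. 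Hence $\ket w=\sum_i \sqrt{s_i}\,\ket{a_i}\ket{b_i}$ with $0\le s_i\le 1/D$ and $\sum_i s_i=\norm w^2\le 1$. The number of nonzero $s_i$ is at most $D$, since $\sum_i s_i\le1$ and $s_i\le 1/D$ do not by themselves cap the rank. I expect this Schmidt-rank bound to be the main obstacle.

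\emph{Upper bound, step 2 (flat completion).} Attach one-qubit flags $G_A,G_B$ and set
\[
\ket{\Omega'}=D^{-1/2}\sum_{i=1}^{D}\ket{a_i}\ket{\phi_i}_{G_A}\,\ket{b_i}\ket{\phi_i}_{G_B},
\qquad
\ket{\phi_i}=x_i\ket0+\sqrt{1-x_i^2}\,\ket1,\qquad x_i^2=\sqrt{Ds_i}\le 1.
\]
If fewer than $D$ of the $s_i$ are nonzero, pad $\{\ket{a_i}\}$ and $\{\ket{b_i}\}$ with orthonormal vectors and set $s_i=0$; enlarging $E_A,E_B$ if necessary keeps this possible. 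The local families $\{\ket{a_i}\ket{\phi_i}\}$ and $\{\ket{b_i}\ket{\phi_i}\}$ are orthonormal, so by \Cref{prop:purification} the marginal $\tau$ of $\ket{\Omega'}$ on $AB$ lies in $\mathcal{W}_h(A:B)$, with environments $E_AG_A$ and $E_BG_B$. Projecting both flags onto $\ket{00}$ gives exactly $\ket w$, because $D^{-1/2}x_i^2=\sqrt{s_i}$. Since $M\otimes\proj{00}_{G_AG_B}\preceq M\otimes\id$, we get
\[
\bra w M\ket w\le \Tr(M\tau)\le\beta_h(M).
\]
Maximizing over $\sigma$ gives ``$\le$''.

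\emph{The Schmidt-rank bound.} This is the step I expect to need the most care. My first attempt would write $\ket w=K^{-1/2}\sum_c \ket{\omega_{cc}}$, where $\ket{\omega_{cc}}$ is the component of $\ket\Omega$ with $C_A=C_B=c$. Each $\ket{\omega_{cc}}$ has Schmidt rank at most the rank of $\Tr_{C_BBE_B}$ restricted to $C_A=c$. The $A$-side supports of the $\ket{\omega_{cc}}$ lie in subspaces whose dimensions sum to at most $DK$, which only gives rank at most $DK$; that is too weak.

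The fallback is to avoid the rank bound entirely. Allow $\ket{\Omega'}$ to have rank $R$ with $\sum_i s_i\le 1$, $s_i\le 1/D$ and $R$ possibly larger than $D$. Then use convexity: the vector $(s_i)$ lies in $\{0\le s_i\le 1/D,\ \sum_i s_i\le1\}$. The extreme points of this set have at most $D$ nonzero entries, so one can decompose $\proj w$, restricted by the effect, into pieces of rank at most $D$. Each piece is then flat-completed as in step 2. Making this decomposition compatible with the cross terms of $\proj w$ is where I expect the real work to lie.
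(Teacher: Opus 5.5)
Your lower bound, the operator-norm estimate $\norm{W'}_{\mathrm{op}}\le D^{-1/2}$, and the flag completion of Step~2 are correct. However, Step~2 needs $\ket{w}$ to have Schmidt rank at most $D$, and that bound is false, not merely hard to prove. Suppose the provers place all $H$ pairs in $XY$ and send $\ket{0}\ket{0}$ on $C_AC_B$. Then $\ket{\Omega}=\ket{\Phi_{DK}}_{XY}\otimes\ket{0}_{C_A}\ket{0}_{C_B}$, and $\ket{w}=K^{-1/2}\ket{\Phi_{DK}}_{XY}$ has Schmidt rank $DK$. No cleverer completion can help: projecting the flags of one flat rank-$D$ state is a local linear map, so it cannot produce Schmidt rank above $D$.

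The fallback also fails as stated, because the information you extracted, $s_i\le 1/D$ and $\sum_i s_i\le 1$, does not imply the theorem. For $D=1$, take $\ket{w}=\ket{\phi}_{AB}\otimes\ket{0}_{E_A}\ket{0}_{E_B}$ with $\ket{\phi}=(\ket{00}+\ket{11})/\sqrt2$. It satisfies both constraints, yet $M=\proj{\phi}$ gives $\bra{w}\widetilde M\ket{w}=1>1/2=\beta_0(M)$. Decomposing the squared coefficients $(s_i)$ is also the wrong level, which is why you run into cross terms.

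The missing ingredient is the trace-norm bound $\sum_i\sqrt{s_i}\le\sqrt D$. It follows from pinching: $\norm{W'}_1\le K^{-1/2}\sum_c\norm{W_{cc}}_1\le K^{-1/2}\norm{W}_1=\sqrt D$. The paper uses this information in dual form. Von Neumann's trace inequality at ranks $DK$ and $D$ gives $\abs{\ip{\chi}{v}}=\abs{\langle\chi\otimes\Phi_K\mid\Omega\rangle}\le D^{-1/2}\sum_{i\le D}\mu_i=\max_{\Gamma\in\mathcal{F}_D(X:Y)}\abs{\ip{\chi}{\Gamma}}$ for every $\ket{\chi}$ with Schmidt coefficients $\mu_i$. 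Choosing $\ket{\chi}=\widetilde M\ket{v}$ and applying Cauchy--Schwarz then yields $\bra{v}\widetilde M\ket{v}\le\beta_h(M)$ without any rank statement.

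Alternatively, your convexity idea goes through once both norm bounds are in hand, provided you decompose amplitudes rather than probabilities. The vector $(\sqrt{s_i})$ lies in the polytope $\{0\le t_i\le D^{-1/2},\ \sum_i t_i\le\sqrt D\}$, whose vertices have at most $D$ entries equal to $D^{-1/2}$ and the rest zero. Write $\ket{w}=\sum_k p_k\ket{w_k}$ in the fixed Schmidt bases and apply the triangle inequality to $\norm{\widetilde M^{1/2}\ket{w}}$. This removes the cross-term issue and reduces everything to vectors of rank at most $D$, which your flag construction handles.
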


\begin{proof}[Proof sketch]
Sending the surplus pairs untouched gives the lower bound. For the upper bound, put $D=2^h$ and purify an arbitrary budget-$H$ witness as a flat state of Schmidt rank $DK$. Bell projection on $C_AC_B$ leaves an unnormalized vector $\ket{v}$. A Schmidt-coefficient comparison bounds its overlap with any vector $\ket{\chi}$ by the largest overlap of $\ket{\chi}$ with a flat rank-$D$ vector on the remaining enlarged bipartition. By \Cref{prop:purification}, these flat vectors purify legal budget-$h$ witnesses. Set $\widetilde{M}=M\otimes\id_E$, where $E$ contains both local environments. Applying this comparison to $\ket{\chi}=\widetilde{M}\ket{v}$ and using Cauchy--Schwarz gives $\bra{v}\widetilde{M}\ket{v}\le\beta_h(M)$. This yields the upper bound in \eqref{eq:padding}. The full argument appears in \Cref{app:exact-padding}.
\end{proof}

As in \cite[Protocol 2]{HM13}, our construction for \Cref{thm:padding} could be replaced by a fair coin choice between the Bell test and the original \(\QMA(2;h)\) verification. Our sequential construction preserves both the original completeness and soundness, yielding \Cref{cor:monotonicity}.

\begin{theorem}[Monotonicity]\label{cor:monotonicity}
For efficiently computable, polynomially bounded budgets $h\le H$,
\begin{equation}\label{eq:width-padding}
\QMA(2;w_A,w_B,h;a,b)\subseteq\QMA(2;w_A+H-h,w_B+H-h,H;a,b).
\end{equation}
Consequently, $\QMA(2;h;a,b)\subseteq\QMA(2;H;a,b)$. In particular, taking $h=0$ gives
\begin{equation}\label{eq:forward}
\QMA(2;a,b)\subseteq\QMA(2;H;a,b),\qquad \QMA(2)\subseteq\QMA(2;H).
\end{equation}
\end{theorem}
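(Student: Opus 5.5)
The plan is to derive \Cref{cor:monotonicity} directly from \Cref{thm:padding}, by building an explicit padded verifier. Fix $L\in\QMA(2;w_A,w_B,h;a,b)$ with a uniform polynomial-time verifier whose accepting effect is $M_x$ on $AB$.

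The new verifier works as follows. On input $x$ it computes $h(n)$ and $H(n)$, which is possible in polynomial time because both budgets are efficiently computable. It then reads the witness registers $AC_A$ and $BC_B$, where $C_A$ and $C_B$ each have $H-h$ qubits. Next it performs a Bell test on $C_AC_B$. The Bell test is implementable in polynomial time: apply CNOTs and Hadamards pairwise and measure in the computational basis, since $\ket{\Phi_K}$ is a tensor product of $H-h$ EPR pairs. The verifier also runs the original verification on $AB$, and it accepts iff both steps accept.

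Because the Bell test acts on $C_AC_B$ and the original verification acts on $AB$, the two measurements are on disjoint registers and commute. The accepting effect is therefore exactly $M_x\otimes\proj{\Phi_K}_{C_AC_B}$. The witness lengths are $w_A+H-h$ and $w_B+H-h$, which are polynomially bounded and efficiently computable.

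By \Cref{thm:padding}, applied with the bipartition $AC_A:BC_B$, the maximum acceptance probability over $\mathcal{W}_H(AC_A:BC_B)$ equals $\beta_h(M_x)$.

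Hence, for $x\in L_{\mathrm{yes}}$, the new acceptance maximum is at least $b(n)$. For $x\in L_{\mathrm{no}}$, it is at most $a(n)$. The thresholds are therefore unchanged, which gives \eqref{eq:width-padding}.

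The class-level statement $\QMA(2;h;a,b)\subseteq\QMA(2;H;a,b)$ follows by taking the union over witness lengths. Taking $h=0$ and using $\QMA(2;0;a,b)=\QMA(2;a,b)$ gives the first part of \eqref{eq:forward}. The unqualified inclusion $\QMA(2)\subseteq\QMA(2;H)$ then follows by taking the union over all thresholds with $\Delta\ge1/\poly(n)$, since the gap is preserved exactly.

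There is no real obstacle here, since all the work lives in \Cref{thm:padding}. The only points needing care are the following:
\begin{itemize}
\item the sequential acceptance rule yields a product effect, so that \Cref{thm:padding} applies verbatim;
\item the bipartition of the new witnesses matches the one used in \Cref{thm:padding};
\item uniformity is preserved when $H-h$ varies with $n$.
\end{itemize}
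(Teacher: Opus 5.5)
Your proposal is correct and follows essentially the same route as the paper: both build the padded verifier with accepting effect $M_x\otimes\proj{\Phi_K}_{C_AC_B}$ and invoke \Cref{thm:padding} to preserve the maximum acceptance probability exactly. Both then obtain the $h=0$ case from $\QMA(2;0;a,b)=\QMA(2;a,b)$. Your extra remarks on the product form of the effect, the Bell-test implementation, and uniformity simply spell out details the paper's one-line proof leaves implicit.
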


\begin{proof}
Apply \Cref{thm:padding} to the accepting effect on each input. Each witness gains $H-h$ qubits, the verifier uses $O(H-h)$ additional gates, and both thresholds are preserved. The last inclusions follow from $\QMA(2;0;a,b)=\QMA(2;a,b)$.
\end{proof}

Monotonicity also lets us compare budgets after polynomially padding the classical input. As a result, a collapse to $\QMA(2)$ for one polynomially growing budget would settle the question for every polynomially bounded budget.

\begin{corollary}[Collapse via polynomial padding]\label{cor:polynomial-budgets}
Fix a rational constant $\varepsilon>0$. Then
\[
\QMA(2;\lceil n^\varepsilon\rceil)=\QMA(2)
\quad\implies\quad
\QMA(2;h)=\QMA(2)
\]
for every efficiently computable, polynomially bounded budget $h$.
\end{corollary}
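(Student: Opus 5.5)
The plan is a standard padding argument that uses \Cref{cor:monotonicity} in both directions. The inclusion $\QMA(2)\subseteq\QMA(2;h)$ is already given by \eqref{eq:forward}, so only the reverse containment needs work. Fix a problem $L\in\QMA(2;h)$ with thresholds $a,b$ and gap $\Delta\ge 1/\poly(n)$, where $h(n)\le c\,n^k$ for constants $c,k$. First, replace $h$ by $h'(n)=\max(h(n),\lceil n^k\rceil)$ if needed. This is harmless because \Cref{cor:monotonicity} gives $L\in\QMA(2;h';a,b)$, and $h'$ is efficiently computable.

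Next, define the padded language $L'=\{x\,1\,0^{m(n)-n-1} : x\in L\}$, with the analogous no-instances. Padded inputs have length $N=m(n)$, and we choose $m(n)=\lceil (c' n^k)^{1/\varepsilon}\rceil$, a polynomial because $\varepsilon$ is a fixed rational constant. The choice of $c'$ should make $\lceil N^\varepsilon\rceil\ge h'(n)$. The verifier for $L'$ on an input $y$ of length $N$ does the following.
\begin{enumerate}
\item It checks the padding format, which also determines $n$. Since $N=m(n)$ fixes $n$ up to computing $m^{-1}$ efficiently, it rejects malformed strings; for those, any outcome is allowed because they lie outside the promise.
\item It runs the $\QMA(2;h')$ verifier on $x$, and additionally Bell-tests $\lceil N^\varepsilon\rceil-h'(n)$ surplus EPR pairs.
\end{enumerate}
By \Cref{thm:padding}, this verifier has exactly the same maximum acceptance probability over $\mathcal{W}_{\lceil N^\varepsilon\rceil}$ as the original verifier has over $\mathcal{W}_{h'}$. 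Hence $L'\in\QMA(2;\lceil N^\varepsilon\rceil;a,b)$. The gap is still inverse-polynomial in $N$, because $\Delta(n)\ge 1/\poly(n)\ge 1/\poly(N)$. The verifier runs in time $\poly(N)$.

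Then apply the hypothesis: $L'\in\QMA(2)$, witnessed by some verifier $V'$ with gap $1/\poly(N)$. Unpadding gives a $\QMA(2)$ verifier for $L$. On input $x$, it computes $y=x10^{m(n)-n-1}$ in polynomial time and runs $V'$ on $y$ with the same two unentangled witnesses. Completeness and soundness carry over verbatim, and the gap $1/\poly(m(n))$ is $1/\poly(n)$. So $L\in\QMA(2)$.

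I expect the main obstacle to be bookkeeping rather than any deep step. Two points need care. First, the padded verifier must be uniform and able to recover $n$ from $N$; the self-delimiting marker $1$ and the monotone polynomial $m$ handle this. Second, the thresholds $a,b$ must remain efficiently computable rationals as functions of $N$, which requires reading them off $n=m^{-1}(N)$. Apart from these checks, every inequality is supplied exactly by \Cref{thm:padding} and \Cref{cor:monotonicity}.
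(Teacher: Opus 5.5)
Your proposal is correct and follows the paper's proof essentially step for step. You pad $x$ to a polynomial length $N$ with $\lceil N^\varepsilon\rceil\ge h(n)$, use \Cref{thm:padding} to raise the budget to $\lceil N^\varepsilon\rceil$ with $a,b$ unchanged, apply the hypothesis and unpad, and you handle the same bookkeeping about recovering $n$ from $N$ and computing the thresholds. Two details to tighten: $m(n)=\lceil (c'n^k)^{1/\varepsilon}\rceil$ must also satisfy $m(n)\ge n+1$ and be strictly increasing, which fails when $k<\varepsilon$ (then $m$ is sublinear), so take $k\ge\varepsilon$ and use $(n+1)^k$, as the paper does with $N(n)=C(n+1)^k$; and the preliminary replacement $h\to h'$ is harmless but unnecessary.
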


\begin{proof}[Proof sketch]
Given $L\in\QMA(2;h)$, pad inputs of length $n$ to a polynomially bounded length $m$ satisfying $\lceil m^\varepsilon\rceil\ge h(n)$. Monotonicity places the padded problem in $\QMA(2;\lceil m^\varepsilon\rceil)$ and hence in $\QMA(2)$ by assumption. To verify $L$ in $\QMA(2)$, pad the input and run the $\QMA(2)$ verifier for the padded problem. The reverse inclusion follows from \Cref{cor:monotonicity}. The full padding argument appears in \Cref{app:polynomial-padding}.
\end{proof}

Thus, a collapse for $\lceil n^{0.001}\rceil$ would settle the equality for every efficiently computable, polynomially bounded EPR budget.

%% file: 03_removing_log_entanglement.tex
\section{Removing logarithmic shared entanglement}\label{sec:removing-entanglement}
In this section, we show that sharing logarithmically many EPR pairs does not change the power of $\QMA(2)$. We simulate the shared-entanglement protocol in \(\QMA(4)\) using four independent witnesses and then apply the Harrow--Montanaro reduction to two witnesses.

\begin{theorem}\label{thm:logarithmic-entanglement}
    For every efficiently computable $h=O(\log n)$,
    \begin{equation}\label{eq:logarithmic-entanglement}
    \QMA(2;h)=\QMA(2).
    \end{equation}
\end{theorem}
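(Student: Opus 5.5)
The forward inclusion $\QMA(2)\subseteq\QMA(2;h)$ is already \Cref{cor:monotonicity} with $h=0$ and $H=h$. For the reverse inclusion I will simulate a $\QMA(2;h)$ verifier $M_x$ on $AB$ in $\QMA(4)$, then invoke $\QMA(4)=\QMA(2)$~\cite{HM13}. By \Cref{prop:purification}, every legal witness has a flat purification
\[
\ket{\Omega}=\frac{1}{\sqrt D}\sum_{i=1}^{D}\ket{u_i}_{AE_A}\ket{v_i}_{BE_B},\qquad D=2^h=\poly(n),
\]
with polynomial-size environments. Equivalently, $\ket{\Omega}=(V_A\otimes V_B)\ket{\Phi_D}$ for isometries $V_A,V_B$.

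\textbf{Honest encoding.} Prover 1 will send $\ket{\alpha}:=\frac{1}{\sqrt D}\sum_i\ket{i}_{I}\ket{u_i}_{AE_A}$. This is the Choi-type state of $V_A$, and its marginal on $I$ is maximally mixed. Prover 2 will send the analogous $\ket{\beta}$ on $JBE_B$. Projecting $\ket\alpha\otimes\ket\beta$ onto $\ket{\Phi_D}_{IJ}$ yields $\ket{\Omega}/D$ with probability exactly $1/D$, which is inverse polynomial because $h=O(\log n)$. Provers 3 and 4 are asked to send copies of $\ket\alpha$ and $\ket\beta$.

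\textbf{Verifier.} The verifier chooses at random between two branches.
\begin{enumerate}
\item[(i)] The \emph{verification branch}: Bell-test $IJ$, and on success run $M_x$ on $AB$. On failure it accepts with a fixed probability, tuned so that the honest acceptance probability is an affine function of $\bra{\Omega}M_x\otimes\id\ket{\Omega}$.
\item[(ii)] The \emph{flatness branch}: SWAP-test witnesses 1 against 3 and 2 against 4, or alternatively SWAP-test only the $I$ register of copy 1 against that of copy 3. This checks purity of the product structure and, more importantly, checks that $\Tr(\alpha_I^2)\approx 1/D$, i.e.\ that $\alpha_I$ is close to maximally mixed.
\end{enumerate}
Completeness is then immediate, with a gap of order $\Delta/D$, still inverse polynomial.

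\textbf{Soundness.} The witnesses are arbitrary, but by the usual convexity argument they may be taken to be product pure states $\ket\alpha\ket\beta\ket{\alpha'}\ket{\beta'}$. Passing the SWAP tests with high probability forces $\alpha\approx\alpha'$ and $\beta\approx\beta'$. The product test of Harrow--Montanaro can also be used here if needed. Purity of the $I$-marginal test then forces $\Tr\alpha_I^2\le (1+\delta)/D$. I will convert this into closeness of $\alpha_I$ to $\id/D$: since $\Tr(\alpha_I-\id/D)^2=\Tr\alpha_I^2-1/D\le\delta/D$, the Hilbert--Schmidt to trace-norm conversion gives $\norm{\alpha_I-\id/D}_1\le\sqrt{D}\cdot\sqrt{\delta/D}=\sqrt\delta$.

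By Uhlmann's theorem, $\ket\alpha$ is then $O(\delta^{1/4})$-close to some $\frac1{\sqrt D}\sum_i\ket i\ket{u_i}$ with orthonormal $u_i$, that is, to the Choi state of an isometry; the same holds for $\beta$. The post-Bell-test state from the ideal states is a flat purification of a legal $\mathcal W_h$ witness, so its acceptance is at most $a$.

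\textbf{Main obstacle.} The Bell projection has success probability only $1/D$, so errors in $\alpha,\beta$ get amplified by a factor of $D$ in the conditional acceptance probability. I therefore need $\delta$ to be polynomially small relative to $D^{-2}\Delta$. The SWAP-test statistics only give $\delta$ about $1-$(pass probability), so the weighting of the branches must be chosen so that a cheater with large $\delta$ loses more in branch (ii) than it gains in branch (i). I will handle this by a linear tradeoff: take acceptance probability
\[
p\cdot\mathrm{Verif}+(1-p)\cdot\mathrm{Flat},\qquad p=\Delta/\poly(D),
\]
and do the case split on whether $\delta$ is above or below $(\Delta/D)^{c}$. Getting the exponents to close, with a polynomial gap surviving, is where I expect the real work. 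It is feasible only because $D=\poly(n)$, which is exactly where $h=O(\log n)$ is used.
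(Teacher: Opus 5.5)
Your architecture is the paper's. The forward inclusion comes from \Cref{cor:monotonicity} at $h=0$. The reverse uses a $\QMA(4)$ simulation: two witnesses carry Choi-type encodings of the Stinespring isometries, a Bell test on $IJ$ joins them into a flat purification, and accepting with probability $a$ on Bell failure makes the verification branch equal to $a+g$ with $g=u-aq$. Two copy witnesses feed full-witness and index-register SWAP tests, and Harrow--Montanaro finishes. But the soundness argument, which is the whole content of the theorem, is not supplied. You stop at ``getting the exponents to close is where I expect the real work,'' and one ingredient it needs is not yet in place.

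The flatness branch must be specified so that its acceptance equals its honest value minus a penalty that is nonnegative and bounded below \emph{linearly} by the index purity gap $e_A:=\Tr(\alpha_I^2)-1/D$. ``Passing the SWAP tests with high probability'' is not the right notion. The index test (accepting on $-1$) has honest value $(1-1/D)/2$, and a cheater beats it by sending copies with orthogonal index marginals. It therefore has to be balanced against the full-witness SWAP test, and the relative weights matter: with too much weight on the index test the combined penalty can become negative. Your ``alternatively'' leaves this open. The paper's \Cref{lem:consistency-purity} is exactly this step. With equal weights the penalty is $P(\alpha,\alpha')=1-\abs{\ip{\alpha}{\alpha'}}^2+\Tr(\alpha_I\alpha'_{I'})-1/D\ge(e_A+e'_A)/2$, obtained from $\norm{\alpha_I-\alpha'_{I'}}_2^2\le 2T(\alpha,\alpha')^2$.

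Your ``amplification by $D$'' obstacle comes mostly from conditioning on Bell success. (Also, honest Bell success has probability $1/D^2$, not $1/D$, since the contracted vector is $\ket{\Omega}/D$. The completeness advantage is therefore $\Delta/D^2$.) The paper works with the unconditional score and expands $g$ bilinearly around proper encodings $\widehat\alpha,\widehat\beta$. Whenever one side is proper, $q=\frac1D\Tr(\rho_I^{\mathsf T}\,\id/D)=D^{-2}$ exactly. Hence the first-order terms are at most $2(d_A+d_B)/D^2$, on the same scale as the completeness signal, and only the cross term $4d_Ad_B/D$ lives at scale $1/D$. The paper then uses the direct rounding bound $d_A\le\sqrt{De_A}$ of \Cref{lem:encoding-rounding}, which is sharper than your Uhlmann bound $O((De_A)^{1/4})$. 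With it the cross term is at most $2(e_A+e_B)$, linear in the gaps. The inequality $c\sqrt t-kt\le c^2/(4k)$ then finishes with $\lambda=4+16/(\Delta D)$, no case split, and gap of order $\Delta^2/(D(1+\Delta D))$.

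Your case split would also close once the penalty lemma is in place, even with crude estimates. Since $g$ is the expectation of $(\widetilde M-a\id)\otimes\proj{\Phi_D}_{IJ}$, an operator of norm at most $1$, you get $\abs{g(\alpha,\beta)-g(\widehat\alpha,\widehat\beta)}\le 2(d_A+d_B)$. Choose $\theta$ with $4(D\theta)^{1/4}\le\Delta/(4D^2)$ and verification weight $p=\Theta(\theta)$. This gives an inverse-polynomial gap of order $\theta\Delta/D^2$, polynomially worse but harmless since $D=\poly(n)$. As written, though, the proposal does not yet prove soundness.
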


Combining \Cref{thm:logarithmic-entanglement} with standard $\QMA(2)$ amplification and \Cref{cor:monotonicity} yields error reduction while preserving the EPR budget.

\begin{corollary}[Error reduction]\label{cor:logarithmic-error-reduction}
For every efficiently computable $h=O(\log n)$,
\[
\QMA(2;h)=\QMA(2;h;1/3,2/3).
\]
\end{corollary}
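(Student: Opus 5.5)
The inclusion $\QMA(2;h;1/3,2/3)\subseteq\QMA(2;h)$ is immediate: the gap $1/3$ is inverse-polynomial, and $\QMA(2;h)$ is by definition the union over all thresholds with $\Delta\ge 1/\poly(n)$. The work lies in the reverse inclusion. I would route it through $\QMA(2)$ and then come back to budget $h$, instead of amplifying directly inside the entangled model.

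First, take $L\in\QMA(2;h)$ with some thresholds $a<b$ and $b-a\ge 1/\poly(n)$. By \Cref{thm:logarithmic-entanglement}, $L\in\QMA(2)$. The unqualified class $\QMA(2)$ is again a union over inverse-polynomial gaps.

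Second, apply the Harrow--Montanaro amplification theorem, as recorded in the preliminaries. It says that every problem in $\QMA(2)$ has a two-witness, unentangled protocol with soundness $1/3$ and completeness $2/3$. Hence $L\in\QMA(2;1/3,2/3)$.

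Third, apply \Cref{cor:monotonicity} with $h$ in place of $H$, that is, take the lower budget to be $0$ and the upper budget to be $h$. This gives $\QMA(2;1/3,2/3)\subseteq\QMA(2;h;1/3,2/3)$, because monotonicity preserves both thresholds exactly. This exactness is what \Cref{thm:padding} provides through the Bell test on the surplus pairs. Chaining the three steps yields $L\in\QMA(2;h;1/3,2/3)$.

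The only point that needs care is the bookkeeping of thresholds and polynomial sizes. \Cref{cor:monotonicity} requires efficiently computable, polynomially bounded budgets, which $h=O(\log n)$ satisfies. It also requires the thresholds $1/3,2/3$ to be fixed rationals, which they are. The witness lengths grow by only $h$ qubits each, so the protocol remains polynomial size. I expect no genuine obstacle here, since the substantive content sits in \Cref{thm:logarithmic-entanglement} and \Cref{thm:padding}.
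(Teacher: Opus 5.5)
Your proposal is correct and follows essentially the same route as the paper: $\QMA(2;h)\subseteq\QMA(2)$ by \Cref{thm:logarithmic-entanglement}, Harrow--Montanaro amplification to thresholds $1/3,2/3$, and then \Cref{cor:monotonicity} with budget $0\le h$ to return to $\QMA(2;h;1/3,2/3)$ with both thresholds preserved. The reverse inclusion is immediate from the union definition, as you note.
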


In our \(\QMA(4)\) simulation, two independent witnesses encode the local isometries from \Cref{prop:purification}, and a Bell test reconstructs a legal purification when the encodings are proper. Two additional witnesses support SWAP tests that control deviations from proper encodings. The following theorem states the quantitative bounds. We write $\QMA(4;w_1,w_2,w_3,w_4;a,b)$ to make the four witness sizes explicit.

\begin{theorem}[Four-witness simulation]\label{thm:four-witness-simulation}
Let $D:=2^h$ and $\Delta:=b-a>0$. There exist efficiently computable thresholds $0\le a'<b'\le1$ such that
\begin{equation}\label{eq:four-witness-gap}
\begin{gathered}
\QMA(2;w_A,w_B,h;a,b)\subseteq\QMA(4;w'_A,w'_A,w'_B,w'_B;a',b'),\\[3pt]
w'_A:=2w_A+2h,\qquad w'_B:=2w_B+2h,\qquad b'-a'=\Omega\!\left(\frac{\Delta^2}{D(1+\Delta D)}\right).
\end{gathered}
\end{equation}
\end{theorem}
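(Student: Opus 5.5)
The plan is to build an explicit four-witness verifier from the flat purifications of \Cref{prop:purification}. Witnesses~1 and~2 are intended to be two identical copies of the $A$-side encoding, and witnesses~3 and~4 two copies of the $B$-side encoding. Each $A$-side witness lives on $A E_A R_A$, where $E_A$ has $w_A+h$ qubits (allowed by $\dim E_A\le D\dim A$) and $R_A$ has $h$ qubits, for a total of $2w_A+2h=w'_A$ qubits.

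The honest $A$-side witness is the normalized Choi state $\ket{\alpha}=D^{-1/2}\sum_i\ket{u_i}_{AE_A}\ket{i}_{R_A}$, and the $B$-side witness $\ket{\beta}$ is defined analogously from the $v_i$. A Bell test on $R_AR_B$ applied to $\ket\alpha\ket\beta$ succeeds with probability $1/D$. On success it leaves exactly $\ket{\Omega}$ of \eqref{eq:flat-purification}, up to complex conjugation of the index basis, which I will absorb into the $v_i$.

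The verifier flips coins and chooses among three tests with probabilities to be tuned.
\begin{enumerate}
\item[(i)] \emph{Simulation test.} On witnesses~1 and~3, perform the Bell test on $R_AR_B$. On success, run the original verifier $M$ on $AB$. Reject if either step fails.
\item[(ii)] \emph{Copy tests.} Perform SWAP tests between witnesses~1 and~2 and between witnesses~3 and~4, on all registers.
\item[(iii)] \emph{Flatness tests.} Perform a SWAP test between the $R_A$ registers of witnesses~1 and~2, and likewise between the $R_B$ registers of witnesses~3 and~4. This test accepts on outcome $-1$.
\end{enumerate}

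For completeness, honest witnesses make the copy tests pass with certainty. They make the flatness test pass with probability $(1-1/D)/2$, the maximum possible, since $\Tr\alpha_{R}^2\ge 1/D$ with equality iff $\alpha_R=\id/D$. Test~(i) accepts with probability at least $b/D$. The resulting $b'$ is an affine combination of these quantities, and $a'$ is the same combination with $b$ replaced by $a$ plus a slack term.

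For soundness, take arbitrary pure witnesses $\alpha_1,\alpha_2,\beta_1,\beta_2$; pure witnesses suffice by convexity. Write the Schmidt decomposition $\ket{\alpha_1}=\sum_i\sqrt{p_i}\ket{u_i}\ket{r_i}$ across $AE_A:R_A$, and similarly $\ket{\beta_1}=\sum_j\sqrt{q_j}\ket{v_j}\ket{s_j}$. The Bell-projected vector then has the form $D^{-1/2}(X\otimes Y)$ applied to a flat state, with $X,Y$ built from $\sqrt{p},\sqrt{q}$ and the bases. Define the flatness defects $\delta_A:=D\Tr\alpha_{1,R}^2-1\ge0$ and $\delta_B$ analogously. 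The key robustness lemma I would prove is
\[
\bra{\alpha_1\beta_1}(\Pi_\Phi\otimes M)\ket{\alpha_1\beta_1}\;\le\;\frac{1}{D}\bigl(\beta_h(M)+O(\sqrt{D\delta_A}+\sqrt{D\delta_B})\bigr).
\]
To get it, replace $\ket{\alpha_1}$ by the flat state $\ket{\tilde\alpha}:=D^{-1/2}\sum_i\ket{u_i}\ket{r_i}$, whose projected output is a legal flat purification by \Cref{prop:purification}. The overlap satisfies $\ip{\tilde\alpha}{\alpha_1}=D^{-1/2}\sum_i\sqrt{p_i}$. Since $\sum_i(\sqrt{p_i}-D^{-1/2})^2$ is controlled by $\sum_i p_i^2-1/D=\delta_A/D$, this gives a trace distance of $O(\sqrt{D\delta_A})$. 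Bounding the change in the unnormalized projected acceptance then costs a further factor $1/D$, via the measurement bound \eqref{eq:measurement-bound}.

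The step I expect to be the main obstacle is that the flatness test does not measure $\Tr\alpha_{1,R}^2$ directly. It measures $\Tr(\alpha_{1,R}\alpha_{2,R})$, and a cheater may make witness~2 differ from witness~1. The copy test in (ii) controls this, since it accepts with probability $(1+\abs{\ip{\alpha_1}{\alpha_2}}^2)/2$. Let $\epsilon$ be its rejection probability. Then $\abs{\Tr(\alpha_{1,R}\alpha_{2,R})-\Tr\alpha_{1,R}^2}=O(\sqrt{\epsilon})$, in the same spirit as the product test of Harrow--Montanaro.

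Combining these estimates, the cheater's total acceptance probability becomes
\[
\frac{a}{D}+O\Bigl(\frac{\sqrt{D\delta}}{D}\Bigr)
\]
minus linear penalties in $\delta/D$ and $\epsilon$ coming from tests~(ii) and~(iii). The linear penalty has slope $1/D$ only, because the flatness test's acceptance differs from its optimum by exactly $\delta/(2D)$. Finally I will optimize the test weights against the worst-case $\delta$, which is a quadratic-versus-linear tradeoff. Balancing the $1/D$-scaled gap $\Delta/D$ against $\sqrt{\delta/D}$ should yield a gap of order $\Delta^2/(D(1+\Delta D))$. I will verify this exponent carefully, since the precise weighting of tests~(i)--(iii) is what produces the $1+\Delta D$ denominator.
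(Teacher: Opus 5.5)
Your protocol is essentially the paper's; its extra accept-on-Bell-failure and unconditional-accept branches only shift the score. Your flattening $\ket{\tilde\alpha}$ is exactly its rounding lemma. The gap is in the quantitative core, which is the actual content of this theorem. The estimates you propose cannot certify $b'-a'=\Omega(\Delta^2/(D(1+\Delta D)))$, and you leave the exponent unchecked.

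First, two proper encodings pass the Bell test with probability $\frac1D\Tr(\alpha_R^{\mathsf T}\beta_R)=D^{-2}$, not $D^{-1}$. So test~(i) accepts honest witnesses with probability $b/D^2$, the proper-encoding term in your lemma must read $a/D^2$, and the signal to protect is $\Delta/D^2$.

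Second, your robustness error $\frac1D\cdot O(\sqrt{D\delta})=O(\sqrt e)$ is too large; here $e=\delta/D=\Tr\alpha_{1,R}^2-\frac1D$. Traded against a flatness penalty linear in $e$, it forces the weight of (iii) to be about $D^2/\Delta$ times that of (i), so the best gap you can certify is $O(\Delta^2/D^4)$. Also, the inequality you cite actually gives $T(\alpha_1,\tilde\alpha)\le\sqrt{\delta_A}$, not $O(\sqrt{D\delta_A})$.

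The missing step is to expand the Bell-and-accept probability bilinearly in both deviations, $\proj{\alpha_1}-\proj{\tilde\alpha}$ and $\proj{\beta_1}-\proj{\tilde\beta}$. Each first-order term pairs a deviation with a \emph{proper} encoding, whose index marginal is $\id/D$. So it costs at most $2d/D^2\le 2\sqrt{e}/D^{3/2}$. Only the cross term sees the weak factor $1/D$, since Bell success reaches $1/D$ when both index marginals are pure. That term is at most $4d_Ad_B/D\le4\sqrt{e_Ae_B}\le2(e_A+e_B)$, which is linear in $e$ and is absorbed by a constant share of the penalty. Balancing $2\sqrt{e}/D^{3/2}$ against $ke$ costs $1/(kD^3)$, so $k\approx 8/(\Delta D)$ suffices. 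This is exactly where the factor $1+\Delta D$ comes from.

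Third, your copy-consistency estimate $\abs{\Tr(\alpha_{1,R}\alpha_{2,R})-\Tr\alpha_{1,R}^2}=O(\sqrt\epsilon)$ is first order in the distance between the copies, whereas test~(ii) penalizes only $\epsilon$. With it, the best lower bound you can prove on the combined penalty of (ii) and (iii) is $w_2\epsilon+\frac{w_3}{2}(e-C\sqrt\epsilon)\ge\frac{w_3}{2}e-\frac{C^2w_3^2}{16w_2}$. That is an additive loss that does not shrink with $e$. Removing it requires $w_2\gtrsim w_3^2D^2/(w_1\Delta)$, which again falls well short of the claimed gap.

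The paper instead weights (ii) and (iii) equally and uses the symmetric identity $\Tr(\alpha_R\alpha'_R)=\frac12\bigl(\Tr\alpha_R^2+\Tr{\alpha'_R}^2-\norm{\alpha_R-\alpha'_R}_2^2\bigr)$. It combines this with $\norm{\alpha_R-\alpha'_R}_2^2\le\frac12\norm{\alpha_R-\alpha'_R}_1^2\le2\bigl(1-\abs{\ip{\alpha}{\alpha'}}^2\bigr)$. The discrepancy $\frac12\norm{\alpha_R-\alpha'_R}_2^2$ is then second order in the copy distance and is dominated by the copy-test term. This gives $1-\abs{\ip{\alpha}{\alpha'}}^2+\Tr(\alpha_R\alpha'_R)-\frac1D\ge\frac12(e+e')$ with no loss, where $e,e'$ are the purity excesses of the two copies.

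With these three corrections your plan goes through. As written, it certifies only some polynomial gap in $\Delta$ and $1/D$. That suffices for $h=O(\log n)$, but not for the bound stated here. You also still need the paper's dyadic rounding of branch probabilities so that $a',b'$ are efficiently computable and the verifier is implementable.
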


\begin{proof}[Proof of \Cref{thm:logarithmic-entanglement}]
For $h=O(\log n)$, \Cref{thm:four-witness-simulation} gives $\QMA(2;h) \subseteq \QMA(4)$. The Harrow--Montanaro theorem~\cite[Theorem~9]{HM13} converts this into a $\QMA(2)$ protocol with completeness $2/3$ and soundness $1/3$. The reverse inclusion follows from \Cref{cor:monotonicity}.
\end{proof}

In the following, we give a proof sketch of \Cref{thm:four-witness-simulation}. The full analysis is left to \Cref{app:four-witness-simulation}.

\subsection{Encoding \texorpdfstring{$\QMA(2;h)$}{QMA(2;h)} witnesses}\label{subsec:isometry-encodings}
Fix an input $x$ and let $M=M_x$ be the original accepting effect. The Stinespring bound in \Cref{prop:purification} allows us to fix environments $E_A,E_B$ of $w_A+h,w_B+h$ qubits, sufficient for every local preparation. Set $X:=AE_A$, $Y:=BE_B$, and $\widetilde{M}:=M_{AB}\otimes\id_{E_AE_B}$.

For a legal witness $\rho_{AB}$, \Cref{prop:purification} gives a flat purification $\ket{\Omega}=D^{-1/2}\sum_{i=1}^{D}\ket{u_i}_X\ket{v_i}_Y$, with both families orthonormal. Introduce $h$-qubit index registers $I,J$ and define
\begin{equation}\label{eq:isometry-encodings}
\ket{\alpha}_{XI}:=\frac{1}{\sqrt D}\sum_{i=1}^{D}\ket{u_i}_X\ket{i}_I,\qquad
\ket{\beta}_{JY}:=\frac{1}{\sqrt D}\sum_{i=1}^{D}\ket{i}_J\ket{v_i}_Y.
\end{equation}
These states encode the isometries $V_A\ket{i}=\ket{u_i}$ and $V_B\ket{i}=\ket{v_i}$. Their index marginals are maximally mixed: $\alpha_I=\id_I/D$ and $\beta_J=\id_J/D$. 

A successful Bell test joins the independent encodings:
\begin{equation}\label{eq:encoding-bell-contraction}
\left(\id_X\otimes\bra{\Phi_D}_{IJ}\otimes\id_Y\right)
\left(\ket{\alpha}_{XI}\otimes\ket{\beta}_{JY}\right)
=\frac{1}{D^{3/2}}\sum_{i=1}^{D}\ket{u_i}_X\ket{v_i}_Y
=\frac{\ket{\Omega}}{D}.
\end{equation}

Thus, Bell success has probability $D^{-2}$ and leaves $\ket{\Omega}$ after normalization; discarding the environments recovers $\rho_{AB}$. A pure encoding is \emph{proper} when its index marginal is maximally mixed. Expansion in the fixed index basis gives an orthonormal family as in \eqref{eq:isometry-encodings}.

\subsection{Verification and proof sketch}\label{subsec:four-witness-verifier}
The verifier receives four independent witnesses, grouped into pairs $(XI,X'I')$ and $(JY,J'Y')$. Primed registers are copies of the corresponding unprimed registers. Honest provers submit $\ket{\alpha},\ket{\alpha},\ket{\beta},\ket{\beta}$ using the proper encodings from \eqref{eq:isometry-encodings}; arbitrary submissions need not agree. Following the instructions of \Cref{prot:four-witness-simulation}, the verifier combines Bell reconstruction with SWAP tests on the full witnesses and their index registers. Set
\begin{equation}\label{eq:score-normalization}
\lambda:=4+\frac{16}{\Delta D},
\qquad
R:=1+a+4\lambda+2\lambda(1-1/D).
\end{equation}

\begin{algorithm}[t]
\caption{Four-witness simulation (exact sampling)}\label{prot:four-witness-simulation}
\smallskip
Choose exactly one of the following branches according to the stated probabilities.
\smallskip

\renewcommand{\arraystretch}{1.15}
\begin{tabularx}{\linewidth}{@{}c@{\quad}c@{\qquad}X@{}}
\textbf{Branch} & \textbf{Probability} & \textbf{Action} \\[3pt]
(1) & $1/(2R)$ & Perform the Bell test on $IJ$. Reject on failure; otherwise run the original verifier on $AB$ and return its decision. \\[3pt]
(2) & $a/(2R)$ & Perform the Bell test on $IJ$ and accept exactly when it fails. \\[3pt]
(3) & $2\lambda/R$ & Choose either witness pair uniformly. Perform the SWAP test on the full witnesses and accept exactly on outcome $+1$. \\[3pt]
(4) & $2\lambda/R$ & Choose either witness pair uniformly. Perform the SWAP test on its index registers and accept exactly on outcome $-1$. \\[3pt]
(5) & $1/(2R)$ & Accept. \\[3pt]
(6) & Remaining & Reject.
\end{tabularx}
\end{algorithm}

By convexity, the analysis of \Cref{prot:four-witness-simulation} can be restricted to pure witnesses $\ket{\alpha}, \ket{\alpha'},$ $\ket{\beta}, \ket{\beta'}$. We examine the Bell and SWAP branches separately, then combine their contributions.

\paragraph{Bell branches (1--2).} The purpose of these two branches is to favor honest encodings on yes-instances and penalize proper encodings on no-instances. For the unprimed witnesses, let $q$ be the Bell-success probability on $IJ$ and $u$ the joint probability of Bell success and acceptance of $\widetilde{M}$. Including the branch-selection weights, their total acceptance contribution is
\begin{equation}\label{eq:bell-branch-contribution}
\frac{u+a(1-q)}{2R}
=\frac{a}{2R}+\frac{g}{2R},
\qquad
g:=u-aq.
\end{equation}
Here, $g$ compares the joint acceptance probability $u$ with the soundness bound $aq$ for proper encodings on no-instances. For proper encodings, the reconstructed witness is legal, so $u\le aq$ on no-instances; honest encodings on yes-instances instead satisfy $u\ge bq$. Thus $g\le0$ in the former case and $g\ge q(b-a)>0$ in the latter. By accepting on Bell failure, branch~(2) implements this comparison through the identity $u+a(1-q)=a+g$.

\paragraph{SWAP branches (3--4).} For the pair $\ket{\alpha},\ket{\alpha'}$, the full-witness and index SWAP expectations are $\abs{\ip{\alpha}{\alpha'}}^2$ and $\Tr(\alpha_I\alpha'_{I'})$, respectively. We combine these quantities into the penalty
\begin{equation}\label{eq:index-penalty-definition}
P(\alpha,\alpha'):=1-\abs{\ip{\alpha}{\alpha'}}^2+\Tr(\alpha_I\alpha'_{I'})-\frac1D.
\end{equation}
The term $1-\abs{\ip{\alpha}{\alpha'}}^2$ penalizes disagreement. When the submissions agree, $P$ equals the index purity minus $1/D$. The combined penalty is nonnegative and vanishes for identical proper encodings; \Cref{app:encoding-estimates} gives the quantitative bounds. Put $P_A:=P(\alpha,\alpha')$ and $P_B:=P(\beta,\beta')$.

Adding the Bell and SWAP contributions and the unconditional acceptance branch gives
\begin{equation}\label{eq:four-witness-acceptance}
\begin{aligned}
p_{\mathrm{acc}}
&=\underbrace{\frac{a+g}{2R}}_{\text{Bell branches}}
+\underbrace{\frac{\lambda(6-2/D-P_A-P_B)}{2R}}_{\text{SWAP branches}}
+\underbrace{\frac{1}{2R}}_{\text{Branch (5)}}\\
&=\frac12+\frac{g-\lambda(P_A+P_B)}{2R}.
\end{aligned}
\end{equation}
The negative term $-\lambda(P_A+P_B)/(2R)$ comes entirely from the SWAP branches and vanishes when each witness pair consists of identical proper encodings. Thus, bounding the acceptance probability reduces to bounding the score
\begin{equation}\label{eq:four-witness-score}
S:=g-\lambda(P_A+P_B),
\end{equation}
with $p_{\mathrm{acc}}=1/2+S/(2R)$. On no-instances, the penalties control the possible increase in $g$.

\begin{proof}[Proof sketch of \Cref{thm:four-witness-simulation}]
\emph{Completeness.} A legal witness accepted with probability at least $b$ supplies the proper encodings in \eqref{eq:isometry-encodings}. Identical primed copies make both penalties zero. Bell reconstruction gives $q=D^{-2}$ and $u\ge bD^{-2}$, hence $S\ge\Delta/D^2$.

\emph{Soundness.} For arbitrary pure submissions, the estimates in \Cref{app:encoding-estimates} show that small penalties force the unprimed witnesses close to proper encodings, for which $g\le0$. Bounding the resulting change in $g$ and subtracting $\lambda(P_A+P_B)$ gives $S\le\Delta/(4D^2)$. The details of this estimate are in \Cref{app:simulation-analysis}.

By \eqref{eq:four-witness-acceptance}, the score separation yields gap $3\Delta/(8RD^2)$, which has the order stated in \eqref{eq:four-witness-gap}. Each witness has the stated size because it contains an original witness register, its local environment, and an $h$-qubit index. Approximating the branch probabilities preserves a constant fraction of the gap with polynomial overhead; \Cref{app:simulation-analysis} gives the full implementation.
\end{proof}

%% file: 04_classical_communication.tex

\section{Allowing classical communication}\label{sec:classical}

In this section, we explore a variant of \(\QMA(2;h)\) where, aside from sharing \(h\) EPR pairs, the provers may exchange classical communication. Since witness preparation allows local operations and classical communication (LOCC), we denote the resulting class by \(\QMA^{\mathrm{LOCC}}(2;h)\).

Write $D := 2^h$. The witness set for $\QMA^{\mathrm{LOCC}}(2;h)$ is exactly the set of mixed states of Schmidt number at most $D$. More formally,
\begin{equation}\label{eq:sn-set}
\mathcal{W}^{\mathrm{LOCC}}_h(A:B)=\{\rho\in\mathcal{D}(A\otimes B):\operatorname{SN}(\rho)\le D\}.
\end{equation}
To see this, note that the \(h\) shared EPR pairs $\ket{\Phi_D}$ have Schmidt number $D$ and LOCC cannot increase it. Hence every generated state lies in the set~\eqref{eq:sn-set}. Conversely, consider \(\rho = \sum_j p_j\proj{\psi_j}\) in the set~\eqref{eq:sn-set}, with $\SR(\psi_j)\le D$. Nielsen's majorization theorem~\cite{Nie99} guarantees a deterministic LOCC conversion from $\ket{\Phi_D}$ to each $\ket{\psi_j}$ in the mixture. With probability $p_j$, the provers perform the corresponding conversion. Discarding the label $j$ yields $\rho$~\cite{BD11}. All witnesses in \(\mathcal{W}^{\mathrm{LOCC}}_h\) are convex combinations of pure states of Schmidt rank at most $D$. Our analysis of \(\QMA^{\mathrm{LOCC}}(2;h)\) can thus be restricted to pure states. 

Define $\QMA^{\mathrm{LOCC}}(2;h;a,b)$ by substituting $\mathcal{W}^{\mathrm{LOCC}}_h$ into \Cref{def:qmah} using the same conventions. At $h=0$, the witnesses are separable, giving $\QMA^{\mathrm{LOCC}}(2;0;a,b)=\QMA(2;a,b)$.

\subsection{Monotonicity in \texorpdfstring{$\QMA^{\mathrm{LOCC}}(2;h)$}{the LOCC setting}}

We first extend the EPR-padding construction of \Cref{thm:padding} to the LOCC setting. The same Bell test on the surplus EPR pairs preserves the maximum acceptance probability, allowing the EPR budget to increase without changing completeness or soundness.

\begin{proposition}[Exact EPR padding]\label{prop:sn-padding}
For efficiently computable, polynomially bounded budgets $h\le H$, testing $H-h$ surplus EPR pairs and requiring both the Bell test and the original verification to accept preserves the maximum acceptance probability. Consequently,
\begin{equation}\label{eq:sn-width-padding}
\QMA^{\mathrm{LOCC}}(2;w_A,w_B,h;a,b)\subseteq\QMA^{\mathrm{LOCC}}(2;w_A+H-h,w_B+H-h,H;a,b).
\end{equation}
\end{proposition}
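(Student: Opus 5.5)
The plan is to prove the LOCC analogue of \Cref{thm:padding}: for every effect $0\preceq M\preceq\id_{AB}$, with $K=2^{H-h}$,
\[
\max_{\rho\in\mathcal{W}^{\mathrm{LOCC}}_H(AC_A:BC_B)}\Tr\!\left[(M\otimes\proj{\Phi_K})\rho\right]=\max_{\sigma\in\mathcal{W}^{\mathrm{LOCC}}_h(A:B)}\Tr(M\sigma).
\]
The class inclusion \eqref{eq:sn-width-padding} then follows exactly as in \Cref{cor:monotonicity}. Each witness gains $H-h$ qubits and the verifier adds $O(H-h)$ gates for the Bell test. Because the maximum acceptance probability is preserved on every input, both thresholds $a,b$ are unchanged.

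\emph{Lower bound.} Take any $\sigma$ with $\operatorname{SN}(\sigma)\le 2^h$ and append $\proj{\Phi_K}$ on $C_AC_B$. The Schmidt rank of a tensor product is the product of the Schmidt ranks, so each pure component of $\sigma\otimes\proj{\Phi_K}$ has rank at most $2^hK=2^H$. Hence this state lies in $\mathcal{W}^{\mathrm{LOCC}}_H$, and it is accepted with probability $\Tr(M\sigma)$.

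\emph{Upper bound.} Since $\mathcal{W}^{\mathrm{LOCC}}_H$ is the convex hull of pure states of Schmidt rank at most $DK$ (with $D=2^h$), and the objective is linear, it suffices to treat a pure $\ket{\psi}$ on $AC_A:BC_B$ with $\SR(\psi)\le DK$. Here no environment or flatness bookkeeping is needed, unlike in \Cref{thm:padding}. Write $\ket{\psi}=\sum_{k,l}\ket{\psi_{kl}}_{AB}\ket{k}_{C_A}\ket{l}_{C_B}$. The Bell projection gives $(\id\otimes\bra{\Phi_K})\ket{\psi}=K^{-1/2}\ket{v}$ with $\ket{v}:=\sum_k\ket{\psi_{kk}}$, and the acceptance probability equals $\bra{v}M\ket{v}/K$.

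The key step is the Schmidt-rank bound $\SR(v)\le D$ across $A:B$, together with $\norm{v}^2\le K$. For the norm, $\norm{v}^2/K$ is the squared overlap of a normalized vector with $\ket{\psi}$, so it is at most $1$. For the rank, I would view $\ket{\psi}$ as an operator $\Psi:(BC_B)^*\to AC_A$ of rank at most $DK$. Contracting with $\ket{\Phi_K}$ amounts to taking the partial trace over the matched $C$ indices. The naive bound from this only gives rank $\le DK$, which is the main obstacle.

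To overcome it, I would use the Schmidt coefficients instead of the rank alone. Write $\ket{\psi}=\sum_{r=1}^{DK}\mu_r\ket{x_r}\ket{y_r}$. The Bell projection is local-linear on each term, so $\ket{v}$ is a sum of $DK$ product vectors $\mu_r\ket{x'_r}\ket{y'_r}$, and $\bra{v}M\ket{v}\le\norm{v}^2$. Rather than bounding the rank of $\ket{v}$, I would bound $\bra{v}M\ket{v}/K$ directly by a convex combination: $\bra{v}M\ket{v}/K\le\max_{\phi}\bra{\phi}M\ket{\phi}$ over normalized $\ket{\phi}$ with $\SR(\phi)\le D$. This is a Schmidt-coefficient comparison exactly as in the proof of \Cref{thm:padding} (\Cref{app:exact-padding}): the overlap of $\ket{v}/\sqrt K$ with any $\ket{\chi}$ is at most the maximal overlap of $\ket{\chi}$ with a rank-$D$ vector, since the sum of the largest $D$ Schmidt coefficients of $\ket{\chi}$ dominates. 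Setting $\ket{\chi}=M^{1/2}\ket{v}/\norm{\cdot}$ and applying Cauchy--Schwarz gives $\bra{v}M\ket{v}/K\le\max_{\SR(\phi)\le D}\bra{\phi}M\ket{\phi}$. In the LOCC model every normalized rank-$D$ pure state is admissible, flat or not, so the right-hand side is at most the budget-$h$ optimum. This yields the upper bound.

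The comparison lemma is where I expect the real work to be. I would first try to reuse the appendix argument from \Cref{thm:padding} verbatim, noting that it never uses flatness on the output side; only the input side was flat there, and here the input has arbitrary coefficients with rank at most $DK$.
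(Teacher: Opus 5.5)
Your skeleton is the paper's: the lower bound by tensoring with $\proj{\Phi_K}$, the convexity reduction to a unit $\ket{\psi}$ with $\SR(\psi)\le DK$, Bell projection, an overlap comparison against $\ket{\chi}\otimes\ket{\Phi_K}$, and a final Cauchy--Schwarz step using that every unit vector of Schmidt rank at most $D$ is a legal budget-$h$ LOCC witness. You were right to abandon $\SR(v)\le D$, because it is false. For $D=1$, $K=2$ and $\ket{\psi}=\tfrac{1}{\sqrt2}(\ket{0}_A\ket{0}_{C_A}\ket{0}_B\ket{0}_{C_B}+\ket{1}_A\ket{1}_{C_A}\ket{1}_B\ket{1}_{C_B})$, your $\ket{v}$ is $\ket{\Phi_2}$.

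The gap is the comparison step, which carries all of the content; you leave it open and point to a tool that does not apply. The inequality you state, $\abs{\ip{\chi}{v}}/\sqrt{K}\le\max_{\norm{\Gamma}=1,\,\SR(\Gamma)\le D}\abs{\ip{\chi}{\Gamma}}$, is true. However, \Cref{lem:flat-overlap} cannot be reused verbatim. At rank $DK$ it bounds $\abs{\langle\chi\otimes\Phi_K\mid\psi\rangle}$ by $(DK)^{-1/2}\sum_{i\le DK}\nu_i$, where $\nu_i$ are the Schmidt coefficients of $\ket{\chi}\otimes\ket{\Phi_K}$, but only when $\ket{\psi}$ is flat. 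For non-flat inputs this fails: for $D\ge2$, taking $\ket{\chi}=\ket{00}_{AB}$ and $\ket{\psi}=\ket{00}_{AB}\ket{\Phi_K}$ gives overlap $1$ against a bound of $D^{-1/2}$. Contrary to your remark, the flat proof also uses flatness on the output side, through $\mathcal{F}_D$.

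The missing ingredient is the rank-bounded identity $\max_{\norm{\Gamma}=1,\,\SR(\Gamma)\le R}\abs{\ip{\chi}{\Gamma}}=(\sum_{i\le R}\mu_i^2)^{1/2}$. It follows from von Neumann's trace inequality $\abs{\ip{\chi}{\Gamma}}\le\sum_{i\le R}\mu_i\nu_i$ together with Cauchy--Schwarz and $\sum_i\nu_i^2=1$, with equality for the normalized truncation of $\ket{\chi}$. Apply it at rank $DK$: the largest $DK$ squared Schmidt coefficients of $\ket{\chi}\otimes\ket{\Phi_K}$ are $\mu_i^2/K$ for $i\le D$, each repeated $K$ times. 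They sum to $\sum_{i\le D}\mu_i^2$, which is exactly the same identity at rank $D$. Both ends are truncated $\ell_2$ norms, replacing the $\ell_1$-type sums of the flat case. This is why arbitrary inputs of rank $DK$ match arbitrary outputs of rank $D$.

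The choice $\ket{\chi}=M^{1/2}\ket{v}$ also does not close the argument. It gives $\ip{\chi}{v}=\bra{v}M^{1/2}\ket{v}$, and Cauchy--Schwarz then yields the inequality for $M^{1/2}$ rather than $M$. Instead take $\ket{\chi}=M\ket{v}$, so that $\ip{\chi}{v}=\bra{v}M\ket{v}$. Then apply Cauchy--Schwarz to the semi-inner product induced by $M\succeq0$: $\abs{\bra{v}M\ket{\Gamma}}\le\sqrt{\bra{v}M\ket{v}\,\bra{\Gamma}M\ket{\Gamma}}$. Dividing by $\sqrt{\bra{v}M\ket{v}}$ when it is nonzero and squaring gives $\bra{v}M\ket{v}/K\le\beta_h^{\mathrm{LOCC}}(M)$ in your normalization. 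With these two repairs, your argument coincides with the paper's proof in \Cref{app:sn-padding}.
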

\begin{proof}[Proof sketch]
Sending the surplus pairs unchanged preserves the original acceptance probability. The matching upper bound uses the overlap argument of \Cref{thm:padding}, replacing flat states by states of bounded Schmidt rank. \Cref{app:sn-padding} supplies the full argument.
\end{proof}
Thus $\QMA(2;a,b)\subseteq\QMA^{\mathrm{LOCC}}(2;h;a,b)$. The input-padding argument of \Cref{cor:polynomial-budgets} also applies: equality with $\QMA(2)$ at any budget $\lceil n^\varepsilon\rceil$, for fixed rational $\varepsilon>0$, implies equality for every polynomially bounded budget under our computability assumptions.

\subsection{Removing logarithmic shared entanglement}\label{subsec:rem_log_LOCC}

Classical communication allows us to replace the $\QMA(4)$ construction used in \Cref{thm:logarithmic-entanglement} with the direct two-witness simulation in \Cref{prot:two-witness-simulation}. Combining the two results gives, for every efficiently computable $h=O(\log n)$,
\begin{equation}\label{eq:sn-equality}
\QMA^{\mathrm{LOCC}}(2;h)=\QMA(2;h)=\QMA(2).
\end{equation}

\begin{algorithm}[t]
\caption{Two-witness simulation (exact sampling)}\label{prot:two-witness-simulation}
Given independent witnesses on $AI$ and $JB$, with $h$-qubit index registers $I,J$:
\begin{algorithmic}[1]
\State Perform the Bell test on $IJ$.\label[step]{step:two-witness-bell}
\State \hspace{\algorithmicindent}\textbf{if} successful, run the original \(\QMA^{\mathrm{LOCC}}(2;h)\) verifier on $AB$ and return its decision.\label[step]{step:two-witness-original}
\State \hspace{\algorithmicindent}\textbf{else} accept with probability $a$ and reject otherwise.\label[step]{step:two-witness-coin}
\end{algorithmic}
\end{algorithm}

The following theorem gives the two-witness simulation with its quantitative bounds.

\begin{theorem}[Two-witness simulation]\label{thm:sn-simulation}
Let $\Delta:=b-a>0$. Then
\begin{equation}\label{eq:sn-thresholds}
\begin{gathered}
\QMA^{\mathrm{LOCC}}(2;w_A,w_B,h;a,b)\subseteq\QMA(2;w_A+h,w_B+h;a',b'),\\
a':=a,\qquad b':=a+\frac{\Delta}{2D^2}.
\end{gathered}
\end{equation}
\end{theorem}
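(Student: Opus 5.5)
We analyze \Cref{prot:two-witness-simulation} directly. For independent pure witnesses $\ket{\alpha}_{AI}$ and $\ket{\beta}_{JB}$, let $q$ be the probability that the Bell test on $IJ$ succeeds, and let $u$ be the joint probability of Bell success and acceptance by $M$ on $AB$. Then
\[
p_{\mathrm{acc}}=u+a(1-q)=a+(u-aq).
\]
It therefore suffices to show two things: $u\ge bq$ with $q=D^{-2}$ for honest witnesses on yes-instances, and $u\le aq$ for \emph{every} product witness on no-instances. This gives thresholds $a'=a$ and $b'=a+\Delta/D^2$, which is at least the claimed $a+\Delta/(2D^2)$. Convexity lets us restrict to pure product witnesses, since $p_{\mathrm{acc}}$ is linear in each factor.

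\emph{Completeness.} Since $\mathcal W^{\mathrm{LOCC}}_h$ consists of mixtures of pure states of Schmidt rank at most $D$, and acceptance is linear, some pure $\ket{\psi}_{AB}=\sum_{i\le D}\mu_i\ket{a_i}\ket{b_i}$ with Schmidt rank at most $D$ attains acceptance at least $b$. Unlike in \Cref{sec:removing-entanglement}, this state need not be flat, and that is exactly what LOCC buys us: we may let one prover carry the nonuniform coefficients. Set
\[
\ket{\alpha}=\sum_i\mu_i\ket{a_i}_A\ket{i}_I,\qquad \ket{\beta}=D^{-1/2}\sum_i\ket{i}_J\ket{b_i}_B.
\]
A computation like \eqref{eq:encoding-bell-contraction} gives
\[
(\id\otimes\bra{\Phi_D}_{IJ}\otimes\id)\ket{\alpha}\ket{\beta}=\ket{\psi}/D .
\]
Hence $q=D^{-2}$ and $u=D^{-2}\bra\psi M\ket\psi\ge bD^{-2}$. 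The witness sizes are $w_A+h$ and $w_B+h$, and no environments are needed.

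\emph{Soundness.} Write $\ket{\alpha}=\sum_i\ket{x_i}_A\ket{i}_I$ and $\ket{\beta}=\sum_i\ket{i}_J\ket{y_i}_B$ with arbitrary unnormalized $\ket{x_i}$, $\ket{y_i}$. Bell success leaves the unnormalized vector
\[
\ket{v}=D^{-1/2}\sum_{i=1}^{D}\ket{x_i}\ket{y_i},
\]
which has Schmidt rank at most $D$, with $q=\norm{v}^2$ and $u=\bra v M\ket v$. If $v\neq0$, then $v/\norm v$ is a pure state of Schmidt rank at most $D$. By Nielsen's theorem it lies in $\mathcal W^{\mathrm{LOCC}}_h$, so on a no-instance $\bra vM\ket v\le a\norm v^2$, that is, $u\le aq$. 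Therefore $p_{\mathrm{acc}}\le a$.

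The main subtlety is that the post-Bell state is an arbitrary, non-flat state of rank at most $D$. That is fine here precisely because the LOCC witness set contains all such states, whereas in $\QMA(2;h)$ it would not, which is why the four-witness construction was needed there. The remaining step is implementation. The acceptance-with-probability-$a$ coin is exact only when $a$ is sampled exactly. Since $a$ is an efficiently computable rational, this can be done with polynomial overhead; alternatively, approximating $a$ to within $\Delta/(4D^2)$ and absorbing the error explains the halved gap $\Delta/(2D^2)$ in $b'$. I expect this approximation bookkeeping, rather than the soundness argument, to be the only fiddly part.
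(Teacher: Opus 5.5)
Your argument is essentially the paper's proof. It uses the identity $p_{\mathrm{acc}}=a+(u-aq)$. Soundness holds because the post-Bell vector $D^{-1/2}\sum_i\ket{x_i}\ket{y_i}$ has Schmidt rank at most $D$, so once normalized it is a legal LOCC witness. Completeness comes from letting one prover carry the non-flat Schmidt coefficients.

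The only difference in the main argument is cosmetic. The paper normalizes $\ket{\beta}$ by $r^{-1/2}$ over $i\le r=\SR(\psi)$, so Bell success is $1/(Dr)\ge D^{-2}$. Your $D^{-1/2}$ version also works, provided the padded $\ket{b_i}$ for $i>r$ are just unit vectors. They need not be orthogonal, which matters when $\dim B<D$.

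The step you should tighten is the coin, because the statement fixes $a'=a$ exactly. You cannot assume exact sampling of an arbitrary rational bias with a fixed finite gate set. A two-sided approximation of $a$ within $\Delta/(4D^2)$ would raise soundness to $a+\Delta/(4D^2)$, contradicting $a'=a$.

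You need one-sided rounding, as in the paper. Take $\widetilde a:=2^{-t}\lfloor 2^t a\rfloor$ with $2^{-t}\le\Delta/(2D^2)$. Then $\widetilde a\le a$ keeps soundness at most $a$. Completeness drops by at most $a-\widetilde a<\Delta/(2D^2)$, which gives exactly $b'=a+\Delta/(2D^2)$.
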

\begin{proof}[Proof sketch]
The $\QMA(2)$ verifier in \Cref{prot:two-witness-simulation} receives two independent witnesses on $AI$ and $JB$, where $A,B$ are the original witness registers of $w_A,w_B$ qubits and $I,J$ are $h$-qubit index registers. Two witnesses suffice since Bell success at Step~\ref{step:two-witness-bell} leaves $AB$ in a state with Schmidt number at most $D$, hence a legal LOCC witness. The SWAP branches and primed copies of \Cref{prot:four-witness-simulation} used to enforce proper encodings are therefore unnecessary.

On a yes-instance, honest encodings pass the Bell test with probability at least $D^{-2}$, reconstructing a witness accepted at Step~\ref{step:two-witness-original} with probability at least $b$. Since the simulation accepts with probability $a$ on Bell failure, its completeness is at least $a+(b-a)/D^2=a+\Delta/D^2$.

Soundness is at most $a$: the verifier accepts with probability at most $a$ regardless of the Bell outcome at Step~\ref{step:two-witness-bell}. Implementing the $a$-biased coin  in Step~\ref{step:two-witness-coin} with finite precision preserves soundness with a small loss in completeness, giving \eqref{eq:sn-thresholds}.

The full argument is given in~\Cref{app:locc-simulation}.

\end{proof}

%% file: 05_logarithmic_witnesses.tex
\section{Logarithmic-size witnesses}\label{sec:log}

$\QMA_{\log}(2)$ is the variant of $\QMA(2)$ in which each witness is restricted to $O(\log n)$ qubits. Applying the same restriction to our models defines $\QMA_{\log}(2;h)$ and $\QMA_{\log}^{\mathrm{LOCC}}(2;h)$, using the parameter conventions of \Cref{def:qmah}.

Our first result extends the results of Theorems~\ref{thm:logarithmic-entanglement} and~\ref{thm:sn-simulation} to logarithmic-size witnesses, showing that sharing $O(\log n)$ EPR pairs does not change the power of either model.


\begin{theorem}\label{thm:log-equality}
For every efficiently computable $h=O(\log n)$,
\begin{equation}\label{eq:log-equalities}
\QMA_{\log}(2;h)=\QMA_{\log}^{\mathrm{LOCC}}(2;h)=\QMA_{\log}(2).
\end{equation}
\end{theorem}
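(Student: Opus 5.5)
We prove the equalities by closing the cycle $\QMA_{\log}(2)\subseteq\QMA_{\log}(2;h)\subseteq\QMA_{\log}^{\mathrm{LOCC}}(2;h)\subseteq\QMA_{\log}(2)$.

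\emph{First inclusion.} We apply \Cref{cor:monotonicity} with budgets $0$ and $h$. By \eqref{eq:width-padding}, each witness grows by only $h=O(\log n)$ qubits, and both thresholds are preserved. Logarithmic witness size is therefore maintained.

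\emph{Second inclusion.} Every state in $\mathcal{W}_h(A:B)$ is a convex combination of pure states of Schmidt rank at most $D$ (\Cref{cor:witness-rank-decomposition}). Hence $\mathcal{W}_h(A:B)\subseteq\mathcal{W}^{\mathrm{LOCC}}_h(A:B)$. Completeness transfers because the maximum over the larger set is at least as large. Soundness, however, does not transfer directly: the same verifier faces more cheating strategies in the LOCC model. So this step is not a set inclusion of classes. Instead, I would route it through $\QMA_{\log}(2;h)\subseteq\QMA_{\log}(2)$ and then $\QMA_{\log}(2)\subseteq\QMA^{\mathrm{LOCC}}_{\log}(2;h)$, the latter by \Cref{prop:sn-padding} with $h=0$.

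\emph{Closing the cycle.} It remains to show $\QMA_{\log}(2;h)\subseteq\QMA_{\log}(2)$ and $\QMA^{\mathrm{LOCC}}_{\log}(2;h)\subseteq\QMA_{\log}(2)$.

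The LOCC case is the easy one. \Cref{thm:sn-simulation} maps witnesses of $w_A,w_B=O(\log n)$ qubits to witnesses of $w_A+h,w_B+h=O(\log n)$ qubits. It keeps soundness $a$ and gives completeness $a+\Delta/(2D^2)$. Since $D=\poly(n)$, this gap is inverse-polynomial, which suffices because $\QMA_{\log}(2)$ allows inverse-polynomial gaps. The verifier is the same polynomial-time Bell test plus the original verifier.

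For the non-LOCC case, we use \Cref{thm:four-witness-simulation}. The four witnesses have sizes $2w_A+2h$ and $2w_B+2h$, both $O(\log n)$. The gap is $\Omega(\Delta^2/(D(1+\Delta D)))$, which is inverse-polynomial. This gives $\QMA_{\log}(2;h)\subseteq\QMA_{\log}(4)$ with an inverse-polynomial gap.

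\emph{Main obstacle.} We must reduce $\QMA_{\log}(4)$ to $\QMA_{\log}(2)$ while keeping witnesses logarithmic and the gap inverse-polynomial. The Harrow--Montanaro theorem as cited includes amplification, which would require polynomially many copies of each witness and so break logarithmic size. We therefore avoid amplification and use only their basic $k$-to-$2$ reduction~\cite{HM13}:
\begin{itemize}
\item One prover sends the concatenation of the four witnesses ($O(\log n)$ qubits), and the second prover sends a copy.
\item The verifier runs, with constant probability each, either the product test on the two copies or the original four-witness verifier on one copy.
\end{itemize}
Their analysis shows the acceptance gap shrinks by at most a polynomial factor; it is in fact quadratic-type in the original gap with constants. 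Each witness stays at $4\cdot O(\log n)=O(\log n)$ qubits, and the resulting inverse-polynomial gap is allowed in $\QMA_{\log}(2)$. I expect the care here to be in verifying that the product-test soundness bound of \cite{HM13} holds without amplification for an inverse-polynomial original gap. If needed, we would first boost the four-witness gap only to a constant-free polynomial form via the explicit thresholds, rather than relying on the $1/3,2/3$ normalization.
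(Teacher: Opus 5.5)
Your decomposition is the paper's. You use \Cref{cor:monotonicity} and \Cref{prop:sn-padding} for the inclusions out of $\QMA_{\log}(2)$, and \Cref{thm:sn-simulation} for the LOCC model. For $\QMA_{\log}(2;h)$ you use \Cref{thm:four-witness-simulation} followed by a four-to-two reduction. You were also right to abandon the direct inclusion $\QMA_{\log}(2;h)\subseteq\QMA_{\log}^{\mathrm{LOCC}}(2;h)$.

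The gap is in the last step, $\QMA_{\log}(4)\subseteq\QMA_{\log}(2)$. Running the product test or the original verifier with constant probability each does not give a positive gap when $\Delta=b-a$ is inverse-polynomial. Suppose the verification branch is taken with probability $\tau$. Honest provers achieve $1-\tau(1-b)$. A cheating pair whose squared overlap with the nearest product state is $1-\epsilon$ can be accepted by the original verifier with probability up to $a+\sqrt{\epsilon}$, by the trace-distance bound. The product test, however, only rejects it with probability $\Theta(\epsilon)$. Optimizing over $\epsilon$, the product-test analysis gives soundness only $1-\tau(1-a)+\Theta(\tau^2/(1-\tau))$. The gap is then $\tau\Delta-\Theta(\tau^2)$, which is negative for constant $\tau$ and $\Delta=1/\poly(n)$.

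With fixed branch probabilities, the Harrow--Montanaro reduction is useful only after the $k$-witness protocol has been amplified to completeness near $1$. The output of \Cref{thm:four-witness-simulation} has completeness close to $1/2$. So your claim that their unamplified analysis loses only a polynomial factor is not justified. Your fallback of first boosting the four-witness gap is also unavailable. Amplifying an inverse-polynomial gap needs polynomially many repetitions, hence polynomially many witnesses, and grouping those into two witnesses destroys logarithmic size.

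The missing idea, which is the content of \Cref{lem:logarithmic-grouping}, is to let the branch probabilities depend on $\Delta$. Run the original verifier only with probability $\tau=\Theta(\Delta)$, and run the product test otherwise; the paper takes $\tau$ a power of two with $\Delta/128<\tau\le\Delta/64$. Since $\tau\le 1/2$, the quadratic term satisfies $16\tau^2/(1-\tau)\le 32\tau^2\le\tau\Delta/2$ and is absorbed. This gives completeness $1-\tau(1-b)$, soundness at most $1-\tau(1-a)+\tau\Delta/2$, and gap $\tau\Delta/2>\Delta^2/256$. No amplification is used, and each of the two witnesses holds the $k$ original registers, so for fixed $k$ it has $O(\log n)$ qubits. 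Apply this with $k=4$ to the output of \Cref{thm:four-witness-simulation}, whose gap $\Omega(\Delta^2/(D(1+\Delta D)))$ is inverse-polynomial for $D=2^h=\poly(n)$. This completes $\QMA_{\log}(2;h)\subseteq\QMA_{\log}(2)$.
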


The two-witness simulation of \Cref{thm:sn-simulation} already has the required witness sizes. For \Cref{thm:four-witness-simulation}, we use the following logarithmic-size analogue of the Harrow--Montanaro theorem to reduce four witnesses to two. The proof is given in \Cref{app:witness-grouping}.

\begin{lemma}[Logarithmic-size Harrow--Montanaro reduction]\label{lem:logarithmic-grouping}
For every fixed integer $k\ge 2$,
\[
    \QMA_{\log}(k)=\QMA_{\log}(2).
\]
\end{lemma}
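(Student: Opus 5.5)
The inclusion $\QMA_{\log}(2)\subseteq\QMA_{\log}(k)$ is immediate: the verifier ignores the extra $k-2$ witnesses, or asks for them to be a fixed state and never measures them. For the reverse inclusion we would not reuse the product test of \cite{HM13}, since its soundness analysis depends on the witness dimension and the amplification it needs inflates witness sizes polynomially. Because $k$ is a constant and each witness has $O(\log n)$ qubits, the whole tuple of $k$ witnesses has dimension $d=\poly(n)$. We would therefore group witnesses: write $k=k_1+k_2$ with $k_1=\lceil k/2\rceil$. The first new prover submits witnesses $1,\dots,k_1$ jointly on one register, and the second submits $k_1+1,\dots,k$ on the other. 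Each new witness has $O(k\log n)=O(\log n)$ qubits.

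The problem is soundness. Within a group the prover may now entangle its witnesses, and the verifier must force a product structure. We would handle this by a recursive splitting inside each group, using the $\QMA_{\log}$ collapse \cite{MW05} ($\QMA_{\log}=\BQP$) as the key tool. The collapse works because a single $O(\log n)$-qubit witness can be replaced by the verifier's own brute-force search: the accepting operator on a $\poly(n)$-dimensional space has a maximum eigenvalue that a polynomial-time verifier can estimate, in the style of Marriott--Watrous.

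Concretely, we would prove $\QMA_{\log}(k)\subseteq\QMA_{\log}(k-1)$ and induct. Merge witnesses $k-1$ and $k$ into one register $W$ of dimension $d'=\poly(n)$. Given a pure product state $\ket{\psi_1}\otimes\cdots\otimes\ket{\psi_{k-2}}$ on the untouched registers, the acceptance operator restricted to $W$ is $N=\langle\psi_1\cdots\psi_{k-2}|M|\psi_1\cdots\psi_{k-2}\rangle$. Its maximum over product states on $W$ is a separable-optimisation problem in polynomial dimension. The original soundness requires only that this maximum be at most $a$ for every choice of the other witnesses. If the merged prover sends an entangled state, it can instead achieve $\lambda_{\max}(N)$, which may exceed $a$.

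To close this gap we would require the merged prover to submit many copies, $m=\poly(n)$ of them, and apply the product test of \cite{HM13} on those copies. That test's soundness has constant loss independent of dimension, and using $m$ copies of an $O(\log n)$-qubit state keeps $m\cdot O(\log n)$ qubits only if $m$ is constant. So we would use a constant number of copies, which suffices because the HM product test needs just two copies. This yields a protocol with constant-factor gap loss, in which each surviving witness has size $O(\log n)$ and the number of witnesses is unchanged or reduced.

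Finally we would apply the log-size version of the HM amplification. We expect this to be the main obstacle, since gaps here are only inverse-polynomial. Amplifying by parallel repetition would multiply witness size by $\poly(n)$, which is not allowed. Instead, we would target thresholds satisfying $b'-a'\ge 1/\poly(n)$ directly, shifting the thresholds with biased coins as in \Cref{prot:two-witness-simulation}, so that a constant-factor loss on an inverse-polynomial gap remains inverse-polynomial. The delicate point is the HM soundness bound when completeness is only $1/\poly$ away from soundness rather than near $1$. Here we would mix the product test with the original verification using probabilities tuned to $\Delta$, as in \Cref{prot:four-witness-simulation}, and check that the resulting gap is $\Omega(\Delta^2)$.
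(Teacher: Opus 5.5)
\textbf{There is a genuine gap in the soundness mechanism.}

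To force product structure on a merged register, you have the \emph{same} prover supply two copies and run the Harrow--Montanaro product test on them. That test only certifies closeness to a product state when the two copies are unentangled with each other. The identity $p_{\mathrm{test}}(\psi_1,\psi_2)=2^{-k}\sum_S\Tr(\sigma_{1,S}\sigma_{2,S})$ and the bound $p_{\mathrm{test}}\le 1-\Omega(\epsilon)$ hold only for product inputs $\ket{\psi_1}\otimes\ket{\psi_2}$.

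A single prover can pass the test with certainty while holding an entangled state. Take $\ket{\Psi^+}=(\ket{01}+\ket{10})/\sqrt2$ and the state $\frac{1}{\sqrt2}\bigl(\ket{00}_{W_1W_1'}\ket{00}_{W_2W_2'}+\ket{\Psi^+}_{W_1W_1'}\ket{\Psi^+}_{W_2W_2'}\bigr)$. It is invariant under both swaps $W_1\leftrightarrow W_1'$ and $W_2\leftrightarrow W_2'$. Yet its marginal on $W_1W_2$ has a negative partial transpose, so it is entangled. Hence your step $\QMA_{\log}(k)\subseteq\QMA_{\log}(k-1)$ has no valid soundness argument.

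The appeal to \cite{MW05} does not repair this. The verifier cannot compute $N$, since it depends on the other provers' unknown states. Moreover, optimizing over product states on a $\poly(n)$-dimensional register is exactly the hard problem that lets $\QMA_{\log}(2)$ contain $\NP$~\cite{BT12}. Merging registers without a cross-prover test is what collapses the power to $\BQP$; it is not a tool against that collapse.

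Your reason for avoiding the product test is also mistaken. The bound $p_{\mathrm{test}}(\psi,\psi)\le 1-11\epsilon/512$ of \cite{HM13} is dimension-independent. Only their preliminary amplification inflates witness sizes.

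The missing idea is to let \emph{each} of the two new provers send all $k$ registers, so the two copies come from unentangled provers. The protocol is then:
\begin{itemize}
\item with probability $1-\tau$, run the product test across the two groups;
\item with probability $\tau=\Theta(\Delta)$, run the original verifier on a uniformly chosen group.
\end{itemize}
Completeness is $1-\tau(1-b)$. For soundness, let $\epsilon$ be the average of the two groups' distances from product states. The cross-prover test passes with probability at most $1-\epsilon/64$, and the verification branch accepts with probability at most $a+\sqrt{\epsilon}$. Completing the square in $\tau\sqrt{\epsilon}-(1-\tau)\epsilon/64$ leaves a gap of $\tau\Delta/2=\Omega(\Delta^2)$. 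Each witness has $\sum_i w_i=O(\log n)$ qubits.

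Your last paragraph, which mixes the test and verification with probabilities tuned to $\Delta$, is exactly the right ingredient. It needs to be applied to this all-registers-per-prover construction, not to half-groupings with self-supplied copies.
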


\begin{proof}[Proof of \Cref{thm:log-equality}]
For the containments from $\QMA_{\log}(2)$, \Cref{cor:monotonicity} and \Cref{prop:sn-padding} increase the EPR budget from zero to $h$, adding $h=O(\log n)$ qubits to each witness and preserving completeness and soundness.

For the reverse inclusions, \Cref{thm:four-witness-simulation} preserves logarithmic-size witnesses and yields $\QMA_{\log}(2;h)\subseteq\QMA_{\log}(4)$. By \Cref{lem:logarithmic-grouping}, $\QMA_{\log}(4)=\QMA_{\log}(2)$. In the LOCC model, \Cref{thm:sn-simulation} gives $\QMA_{\log}^{\mathrm{LOCC}}(2;h)\subseteq\QMA_{\log}(2)$, with witness sizes $w_A+h,w_B+h=O(\log n)$.

Both simulations retain an inverse-polynomial completeness--soundness gap.
\end{proof}

\subsection{Superlogarithmic EPR budgets}\label{subsec:log-superlogarithmic}
The LOCC model behaves differently once the shared resource is large enough to prepare every state on the witness registers. Since every state on $w_A$ and $w_B$ qubits has Schmidt number at most $2^{\min\{w_A,w_B\}}$, \eqref{eq:sn-set} gives
\begin{equation}\label{eq:log-saturation}
h\ge\min\{w_A,w_B\}\quad\Longrightarrow\quad
\mathcal{W}^{\mathrm{LOCC}}_h(A:B)=\mathcal{D}(A\otimes B).
\end{equation}
In the logarithmic witness-size domain, any superlogarithmic budget $h=\omega(\log n)$ satisfies this condition for all sufficiently large $n$. The two registers then form a single unrestricted logarithmic-size witness, leading to the following collapse to $\BQP$.

\begin{proposition}\label{prop:log-locc-bqp}
For every efficiently computable, polynomially bounded $h=\omega(\log n)$,
\begin{equation}\label{eq:log-locc-bqp}
\QMA_{\log}^{\mathrm{LOCC}}(2;h)=\BQP.
\end{equation}
\end{proposition}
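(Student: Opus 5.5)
Both inclusions come from comparing with one-witness logarithmic-size $\QMA$. By Marriott--Watrous~\cite{MW05}, such a class equals $\BQP$ for any inverse-polynomial gap.

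\emph{$\BQP\subseteq\QMA_{\log}^{\mathrm{LOCC}}(2;h)$.} Take a $\BQP$ verifier and let the $\QMA^{\mathrm{LOCC}}(2;h)$ verifier ignore both witness registers, say with $w_A=w_B=1$. It runs the $\BQP$ computation on its private ancillas. The accepting effect is then $p_x\id_{AB}$, so every admissible witness is accepted with probability exactly $p_x$. Completeness and soundness are inherited with the same thresholds.

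\emph{$\QMA_{\log}^{\mathrm{LOCC}}(2;h)\subseteq\BQP$.}
\begin{enumerate}
\item Fix a protocol with witness sizes $w_A,w_B\le c\log n$ and gap $\Delta\ge1/\poly(n)$.
\item Since $h=\omega(\log n)$, there is $n_0$ with $h(n)\ge c\log n\ge\min\{w_A(n),w_B(n)\}$ for all $n\ge n_0$.
\item For such $n$, \eqref{eq:log-saturation} gives $\mathcal{W}^{\mathrm{LOCC}}_h(A:B)=\mathcal{D}(A\otimes B)$. The completeness and soundness conditions thus become $\max_{\rho\in\mathcal{D}(AB)}\Tr(M_x\rho)\ge b$ and $\le a$, respectively.
\item The same verifier is therefore a single-prover protocol with one witness on $AB$ of $w_A+w_B=O(\log n)$ qubits. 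It has the same thresholds, so it places $L$ in $\QMA_{\log}$ with gap $\Delta$ on inputs of length at least $n_0$.
\item The finitely many inputs of length below $n_0$ are handled by a lookup table hardwired into the $\BQP$ machine. This keeps it uniform.
\item Invoking $\QMA_{\log}=\BQP$~\cite{MW05} gives $L\in\BQP$.
\end{enumerate}

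The main obstacle is matching parameters with the form of Marriott--Watrous needed here. We must check that their result covers a one-witness logarithmic-size $\QMA$ with arbitrary efficiently computable thresholds $a<b$ and only inverse-polynomial gap, not just $1/3,2/3$. The plan is to use their simulation directly: replace the witness by the maximally mixed state on $O(\log n)$ qubits, i.e.\ a $\poly(n)$-dimensional space. This accepts with probability at least $b/\poly(n)$ on yes-instances and at most $a/\poly(n)$ on no-instances, with the same dimension factor. The resulting gap is still inverse-polynomial, and standard $\BQP$ amplification by repetition and thresholding at the midpoint finishes. The $b/\poly$ versus $a/\poly$ bound is immediate only when $a$ is close to $0$, so the cleaner route is to cite their in-place amplification, which works for any inverse-polynomial gap. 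Everything else, including the use of \eqref{eq:log-saturation}, is immediate from the earlier results.
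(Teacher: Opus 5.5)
Your proposal is correct and follows the paper's proof. Ignoring the witnesses gives $\BQP\subseteq\QMA_{\log}^{\mathrm{LOCC}}(2;h)$. For the converse, the saturation \eqref{eq:log-saturation} at large $n$ merges the two registers into one unrestricted $O(\log n)$-qubit witness. Marriott--Watrous in-place amplification together with their logarithmic-witness theorem then yields $\BQP$, and the finitely many short input lengths are hardcoded.

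One correction to your aside. With the maximally mixed witness on $d=2^{w_A+w_B}$ dimensions, no-instances are accepted with probability $\Tr(M_x)/d\le\lambda_{\max}(M_x)\le a$, not $a/d$. That route therefore separates yes from no only after in-place amplification has pushed soundness below $b/d$. This is why citing the amplification, as you and the paper ultimately do, is necessary rather than merely cleaner.
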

\begin{proof}
Fix a protocol with $w_A,w_B=O(\log n)$. For sufficiently large input size \(n\), \eqref{eq:log-saturation} holds, so its two witness registers form a single unrestricted witness of $w_A+w_B=O(\log n)$ qubits. Marriott--Watrous amplification preserves the size of a single witness, and their logarithmic-witness theorem gives a $\BQP$ algorithm for these inputs~\cite{MW05}. The finitely many shorter input lengths can be handled separately.\footnote{Fix $n_0$ such that \eqref{eq:log-saturation} holds for $n\ge n_0$. In the $P$-uniform quantum circuit model, the Turing machine can hardcode the finitely many circuits deciding promised inputs of lengths $n<n_0$. Their sizes and output times are bounded by constants depending on $n_0$, preserving polynomial-time uniformity.}

Conversely, the verifier can simply ignore its witnesses. Hence \(\BQP \subseteq \QMA_{\log}^{\mathrm{LOCC}}(2;h)\).
\end{proof}

Together with $\NP\subseteq\QMA_{\log}(2)$~\cite{BT12}, this yields the following consequence.
\begin{corollary}\label{cor:log-superlogarithmic-barrier}
For every efficiently computable, polynomially bounded $H=\omega(\log n)$,
\[
\QMA_{\log}^{\mathrm{LOCC}}(2;H)=\QMA_{\log}(2)
\quad\Longrightarrow\quad
\NP\subseteq\BQP.
\]
\end{corollary}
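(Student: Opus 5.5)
The plan is a direct combination of \Cref{prop:log-locc-bqp} with the containment $\NP\subseteq\QMA_{\log}(2)$ of Blier and Tapp~\cite{BT12}. Fix an efficiently computable, polynomially bounded $H=\omega(\log n)$ and assume $\QMA_{\log}^{\mathrm{LOCC}}(2;H)=\QMA_{\log}(2)$. By \Cref{prop:log-locc-bqp}, the left-hand side equals $\BQP$, so the hypothesis yields $\QMA_{\log}(2)=\BQP$. Then we chain
\[
\NP\subseteq\QMA_{\log}(2)=\QMA_{\log}^{\mathrm{LOCC}}(2;H)=\BQP .
\]

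The only point requiring care is that the parameter conventions agree. Blier and Tapp give $\NP\subseteq\QMA_{\log}(2)$ with an inverse-polynomial completeness--soundness gap and logarithmic-size witnesses. This matches the definition of $\QMA_{\log}(2)$ used in \Cref{sec:log}, which takes the union over thresholds with $\Delta\ge1/\poly(n)$. I would state this explicitly, noting that \Cref{prop:log-locc-bqp} is likewise proved for the same inverse-polynomial-gap convention.

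I would also check that the $\BQP$ simulation in \Cref{prop:log-locc-bqp} applies to every such gap. Marriott--Watrous amplification of a single $O(\log n)$-qubit witness with inverse-polynomial gap preserves the witness size. Their logarithmic-witness theorem then places the problem in $\BQP$.

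No real obstacle is expected, since the corollary is a formal consequence of results already stated. The one subtlety is that \eqref{eq:log-saturation} holds only for sufficiently large $n$. That case is already handled inside \Cref{prop:log-locc-bqp} by hardcoding the finitely many short inputs, so nothing further is needed here.
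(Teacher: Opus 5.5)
Your proposal is correct and is essentially the paper's argument: the corollary follows immediately by combining \Cref{prop:log-locc-bqp} with the Blier--Tapp containment $\NP\subseteq\QMA_{\log}(2)$~\cite{BT12}, which gives exactly your chain $\NP\subseteq\QMA_{\log}(2)=\QMA_{\log}^{\mathrm{LOCC}}(2;H)=\BQP$. Your checks on the inverse-polynomial-gap convention and on the short input lengths are sound, though both are already covered by the definition of $\QMA_{\log}(2)$ and by \Cref{prop:log-locc-bqp} itself.
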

One might try to combine \Cref{thm:log-equality} with EPR-budget monotonicity (\Cref{prop:sn-padding}) in the logarithmic-size setting to obtain, for $h=O(\log n)$ and $H=\omega(\log n)$,
\begin{equation}\label{eq:hard-try}
\NP\overset{\text{\scriptsize\cite{BT12}}}{\subseteq}
\QMA_{\log}(2)
\overset{\text{\scriptsize\Cref{thm:log-equality}}}{=}
\QMA_{\log}^{\mathrm{LOCC}}(2;h)
\overset{\substack{\text{\scriptsize\Cref{prop:sn-padding}}\\?}}{\subseteq}
\QMA_{\log}^{\mathrm{LOCC}}(2;H) \subseteq \BQP,
\end{equation}
where the last inclusion holds by \Cref{prop:log-locc-bqp}, effectively implying $\NP\subseteq\BQP$. However, padding from $h$ to $H$ adds $H-h=\omega(\log n)$ qubits to each witness, exceeding the logarithmic size bound. Thus, this construction does not establish the questioned containment in~\eqref{eq:hard-try}.


%% file: 06_open.tex
\section{Further directions and open problems}

\begin{enumerate}
\setlength{\itemsep}{0.75\baselineskip}

\item \textbf{Polynomial shared entanglement.}
Does $\QMA(2;h)=\QMA(2)$ hold for every polynomially bounded $h$, as asked by Aaronson et al.~\cite{ABD09}? By~\Cref{cor:polynomial-budgets}, equality for $h=\lceil n^\varepsilon\rceil$ with any fixed $\varepsilon>0$ would suffice. For $h=\omega(\log n)$, the simulations in \Cref{thm:four-witness-simulation,thm:sn-simulation} guarantee a gap of order $\Delta 2^{-2h}$, for which standard amplification incurs superpolynomial overhead. Can a different simulation avoid this loss?

\item \textbf{Shared entanglement exceeding the witness size.}
Can every \(\QMA(2;h)\) protocol with $h=\omega(w_A+w_B)$ be simulated in $\QMA$? Unlike in the LOCC model, increasing the EPR budget does not make every joint witness available: pure legal witnesses must still be flat. A related question is whether restricting witnesses to pure states that are flat across a fixed bipartition changes the power of $\QMA$.
\par\smallskip
For logarithmic-size witnesses, does $\QMA_{\log}(2;h)=\BQP$ hold for polynomially bounded $h=\omega(\log n)$, as in the LOCC model? A simulation in $\QMA$ would not by itself settle this question unless it preserves logarithmic witness size.

\item \textbf{Error reduction at a fixed EPR budget.}
Does $\QMA(2;h)=\QMA(2;h;1/3,2/3)$ hold for every polynomially bounded $h$? \Cref{cor:logarithmic-error-reduction} establishes this for logarithmic budgets. Beyond this regime, independent repetitions may require additional EPR pairs, while joint preparations can correlate repetitions. Can amplification preserve the EPR budget for both \(\QMA(2;h)\) and \(\QMA^\mathrm{LOCC}(2;h)\)?

\end{enumerate}

%% file: appendices/A_monotonicity.tex
\section{Proofs for EPR-budget monotonicity}
\label[appendix]{app:monotonicity}

\subsection{Exact EPR padding}
\label[appendix]{app:exact-padding}

We give the full proof of \Cref{thm:padding}, which shows that the Bell-test construction from \Cref{subsec:monotonicity} preserves the maximum acceptance probability exactly. Recall that $\beta_h(M)$ denotes the maximum of $\Tr(M\rho)$ over $\rho\in\mathcal{W}_h(A:B)$. For an integer $R\ge1$, let $\mathcal{F}_R(X:Y)$ denote the set of flat states on $X\otimes Y$ with Schmidt rank $R$. The proof uses the following overlap formula.

\begin{lemma}[Overlap with a flat state]\label{lem:flat-overlap}
Let $\ket{\chi}\in X\otimes Y$ be a possibly unnormalized vector with Schmidt coefficients $\mu_1\ge\mu_2\ge\cdots$, padded with zeros. If $\dim(X),\dim(Y)\ge R$, then
\begin{equation}\label{eq:flat-overlap-monotonicity}
\max_{\ket{\Gamma}\in\mathcal{F}_R(X:Y)}\abs{\ip{\chi}{\Gamma}}=\frac{1}{\sqrt R}\sum_{i=1}^{R}\mu_i.
\end{equation}
\end{lemma}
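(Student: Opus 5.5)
We identify flat states with matrices and reduce the statement to a Ky Fan / von Neumann trace inequality. Fix orthonormal bases and write $\ket{\chi}=\sum_{x,y}C_{xy}\ket{x}\ket{y}$ for a $\dim X\times\dim Y$ matrix $C$. The singular values of $C$ are the Schmidt coefficients $\mu_1\ge\mu_2\ge\cdots$. Every $\ket{\Gamma}\in\mathcal{F}_R(X:Y)$ has the form $R^{-1/2}\sum_{i=1}^R\ket{a_i}\ket{b_i}$ with orthonormal families, so its coefficient matrix is $G=R^{-1/2}W$. Here $W=\sum_{i=1}^R\ket{a_i}\bra{\bar b_i}$ is a partial isometry of rank exactly $R$: all its nonzero singular values equal $1$, and there are $R$ of them. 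Conversely, every such $W$ gives a flat state of rank $R$; these exist because $\dim X,\dim Y\ge R$. Then
\[
\ip{\chi}{\Gamma}=\Tr(C^\dagger G)=R^{-1/2}\Tr(C^\dagger W),
\]
so the claim reduces to showing
\[
\max_W\abs{\Tr(C^\dagger W)}=\sum_{i=1}^R\mu_i.
\]

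\emph{Upper bound.} Apply von Neumann's trace inequality, $\abs{\Tr(C^\dagger W)}\le\sum_i\sigma_i(C)\sigma_i(W)$. Since $\sigma_i(W)=1$ for $i\le R$ and $0$ otherwise, the bound is $\sum_{i\le R}\mu_i$.

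If I want to avoid citing von Neumann, I would argue directly. Write $W=\sum_{i\le R}\ket{a_i}\bra{\bar b_i}$, so that
\[
\Tr(C^\dagger W)=\sum_{i\le R}\bra{\bar b_i}C^\dagger\ket{a_i}.
\]
Bound its modulus by $\sum_i\norm{C^\dagger\ket{a_i}}$; this alone is not tight. The cleaner route is the Ky Fan variational characterization: $\sum_{i\le R}\mu_i$ equals the maximum of $\abs{\Tr(P C^\dagger Q)}$-type expressions over rank-$R$ partial isometries. I would simply invoke the Ky Fan norm duality: the dual of the Ky Fan $R$-norm on the unit ball of partial isometries of rank $\le R$ gives exactly this bound, and rank $<R$ only gives smaller values.

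\emph{Lower bound.} Take the Schmidt decomposition $\ket{\chi}=\sum_i\mu_i\ket{x_i}\ket{y_i}$ and choose
\[
\ket{\Gamma}=R^{-1/2}\sum_{i=1}^R\ket{x_i}\ket{y_i}.
\]
If $\chi$ has fewer than $R$ nonzero coefficients, extend the families to orthonormal sets of size $R$, which is possible by the dimension hypothesis. Then $\ip{\chi}{\Gamma}=R^{-1/2}\sum_{i\le R}\mu_i$, with the padded zeros contributing nothing. This also shows the maximum is attained.

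The main obstacle is the upper bound, specifically relating a general rank-$R$ partial isometry to the singular value decomposition of $C$. I would handle it by citing von Neumann's trace inequality, which is standard. If a self-contained argument is needed, I would show that
\[
\abs{\Tr(C^\dagger W)}=\abs{\Tr(C^\dagger W P)}\le\norm{CP}_{\mathrm{tr}}
\]
for the rank-$R$ projector $P=W^\dagger W$, and then show $\norm{CP}_{\mathrm{tr}}\le\sum_{i\le R}\mu_i$. The last step follows because $CP$ has rank $\le R$ and $\sigma_i(CP)\le\sigma_i(C)$.
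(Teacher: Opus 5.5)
Your proposal is correct and follows essentially the same route as the paper. Both write $\ket{\chi}$ and $\ket{\Gamma}$ as coefficient matrices, bound $\abs{\Tr(C^\dagger G)}$ by von Neumann's trace inequality using that $G$ has exactly $R$ singular values equal to $R^{-1/2}$, and attain the bound by aligning the Schmidt families of $\ket{\Gamma}$ with those of $\ket{\chi}$, completing them via the dimension hypothesis when $\SR(\chi)<R$. Your optional self-contained fallback is also valid: Hölder with $WP=W$ for $P=W^\dagger W$, then $\sigma_i(CP)\le\sigma_i(C)$ and $\operatorname{rank}(CP)\le R$. The paper simply cites von Neumann instead.
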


\begin{proof}
In fixed local orthonormal bases, let $\ket{\chi}=\sum_{i,j}C_{ij}\ket{i}_X\ket{j}_Y$ and $\ket{\Gamma}=\sum_{i,j}G_{ij}\ket{i}_X\ket{j}_Y$. Their \emph{coefficient matrices} $C=(C_{ij})$ and $G=(G_{ij})$ have decreasingly ordered singular values $\mu_i$ and $\nu_i$, respectively, equal to their Schmidt coefficients. For $\ket{\Gamma}\in\mathcal{F}_R(X:Y)$, we have $\nu_i=R^{-1/2}$ for $i\le R$ and $\nu_i=0$ otherwise. Von Neumann's trace inequality gives
\[
\abs{\ip{\chi}{\Gamma}}=\abs{\Tr(C^\dagger G)}\le\sum_i\mu_i\nu_i=\frac{1}{\sqrt R}\sum_{i=1}^{R}\mu_i.
\]

The maximum in \eqref{eq:flat-overlap-monotonicity} is attained by choosing the Schmidt families of $\ket{\Gamma}$ to match those of $\ket{\chi}$ for the largest $R$ coefficients. If $\ket{\chi}$ has Schmidt rank less than $R$, complete its Schmidt families to $R$ orthonormal vectors with padded zero Schmidt coefficients.
\end{proof}

We now use this overlap bound to prove \Cref{thm:padding}, which we restate for convenience.

\begingroup
\theoremstyle{plain}
\newtheorem*{paddingrestatement}{Theorem~\ref*{thm:padding}}
\begin{paddingrestatement}[Exact EPR padding, restated]
Let $0\le h\le H$ be integers with $K=2^{H-h}$, and let surplus EPR registers $C_A,C_B$ each contain $H-h$ qubits. For every accepting effect $0 \preceq M \preceq \id$ on $AB$,
\begin{equation}\label{eq:padding-restated}
\beta_H\!\left(M_{AB}\otimes\proj{\Phi_K}_{C_AC_B}\right)=\beta_h(M),
\end{equation}
where the left-hand side uses the bipartition $AC_A:BC_B$.
\end{paddingrestatement}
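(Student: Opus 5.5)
We prove the two inequalities in \eqref{eq:padding-restated} separately.

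\emph{Lower bound.} Take any $\rho\in\mathcal{W}_h(A:B)$, prepared by channels $\Lambda_A,\Lambda_B$ from $\ket{\Phi_D}$. With $H$ pairs, the provers apply these channels to $h$ pairs and leave the remaining $H-h$ pairs untouched in $C_A,C_B$. Since $\ket{\Phi_{DK}}=\ket{\Phi_D}\otimes\ket{\Phi_K}$, this is a legal budget-$H$ preparation of $\rho\otimes\proj{\Phi_K}$ across $AC_A:BC_B$. Its value is $\Tr(M\rho)\cdot 1$. Maximizing over $\rho$ gives $\beta_H(M\otimes\proj{\Phi_K})\ge\beta_h(M)$.

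\emph{Upper bound.} Fix any budget-$H$ witness on $AC_A:BC_B$. By \Cref{prop:purification}, it has a flat purification
\[
\ket{\Omega}=(DK)^{-1/2}\sum_{i=1}^{DK}\ket{u_i}_{AC_AE_A}\ket{v_i}_{BC_BE_B}.
\]
Set $X:=AE_A$, $Y:=BE_B$, and $\ket{v}:=(\id_{XY}\otimes\bra{\Phi_K}_{C_AC_B})\ket{\Omega}\in X\otimes Y$, an unnormalized vector. The acceptance probability equals $\bra{v}\widetilde{M}\ket{v}$ with $\widetilde{M}=M\otimes\id_{E_AE_B}$.

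The key step is to bound the Schmidt coefficients of $\ket{v}$. Write the coefficient matrix of $\ket{\Omega}$ as $(DK)^{-1/2}W$, where $W$ is a partial isometry of rank $DK$ from the $BC_BE_B$ side to the $AC_AE_A$ side. Contracting with $\bra{\Phi_K}$ gives coefficient matrix
\[
C=(DK)^{-1/2}\,K^{-1/2}\sum_{c}(\id\otimes\bra{c})W(\id\otimes\ket{c}),
\]
that is, $C=(DK)^{-1/2}K^{-1/2}\operatorname{Tr}_C$-type partial trace of $W$ over the $K$-dimensional factor. Each block $(\id\otimes\bra c)W(\id\otimes\ket c)$ is a contraction. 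Hence $\norm{C}_\infty\le (DK)^{-1/2}K^{-1/2}K=D^{-1/2}$, so every Schmidt coefficient satisfies $\mu_i(v)\le D^{-1/2}$. Also $\norm{v}^2\le 1$. I expect this step to be the main obstacle. If the operator-norm bound alone is insufficient, the fallback is a Ky Fan type bound: the $k$-sum of the singular values of the contracted partial trace is at most $K$ times the $k$-sum for $W$, which gives the majorization needed below.

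Given these bounds, let $\ket{\chi}:=\widetilde{M}\ket{v}$ and let $\sigma_i(\chi)$ be its Schmidt coefficients. Then
\[
\bra{v}\widetilde{M}\ket{v}=\ip{\chi}{v}\le\sum_i\sigma_i(\chi)\mu_i(v)
\]
by von Neumann's trace inequality. The vector $(\mu_i(v))_i$ has entries at most $D^{-1/2}$ and squared $\ell_2$-norm at most $1$, but we need its $\ell_1$-type weight controlled. Since the rank of $C$ is at most $DK$, a direct bound is not enough; instead we use the structure of the maximization. The linear functional $\mu\mapsto\sum_i\sigma_i\mu_i$, taken over decreasing vectors with $\mu_i\le D^{-1/2}$ and $\sum_i\mu_i^2\le1$, is maximized at the flat vector with $D$ entries equal to $D^{-1/2}$. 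This holds because $\sigma$ is decreasing, so mass should go to the first coordinates, and the cap forces exactly $D$ full entries.

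The maximum therefore equals $D^{-1/2}\sum_{i\le D}\sigma_i(\chi)$, which by \Cref{lem:flat-overlap} equals $\abs{\ip{\chi}{\Gamma}}$ for some $\ket\Gamma\in\mathcal{F}_D(X:Y)$. This requires $\dim X,\dim Y\ge D$, which we can arrange by enlarging the environments. By Cauchy--Schwarz with respect to $\widetilde{M}$ and $0\preceq\widetilde M\preceq\id$,
\[
\abs{\bra{\Gamma}\widetilde{M}\ket{v}}^2\le\bra{\Gamma}\widetilde M\ket{\Gamma}\,\bra{v}\widetilde M\ket{v}.
\]
Combining the two displays gives $\bra{v}\widetilde M\ket{v}\le\bra{\Gamma}\widetilde M\ket{\Gamma}$. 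By \Cref{prop:purification}, $\Tr_{E_AE_B}\proj{\Gamma}\in\mathcal{W}_h(A:B)$, so the right-hand side is at most $\beta_h(M)$. Maximizing over budget-$H$ witnesses completes the upper bound.
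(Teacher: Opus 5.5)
Your lower bound, the reduction to $\bra{v}\widetilde M\ket{v}$, and the closing Cauchy--Schwarz step all match the paper. The key step of your upper bound, however, fails as written. You claim that $\mu\mapsto\sum_i\sigma_i\mu_i$, over decreasing vectors with $\mu_i\le D^{-1/2}$ and $\sum_i\mu_i^2\le1$, is maximized by the flat vector with $D$ entries equal to $D^{-1/2}$. This is false: the cap is only an upper bound, and the $\ell_2$ constraint lets mass spread over more coordinates. For example, with $D=1$ and $\sigma=(1,1)$, the admissible vector $\mu=(2^{-1/2},2^{-1/2})$ gives $\sqrt2>1=D^{-1/2}\sigma_1$. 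More generally, if $\ket\chi$ has $N>D$ equal Schmidt coefficients, the uniform vector on $N$ entries beats the flat rank-$D$ vector by a factor $\sqrt{N/D}$.

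The inequality you are after is $\ip{\chi}{v}\le D^{-1/2}\sum_{i\le D}\sigma_i(\chi)$ for all $\chi$. By Ky Fan duality this is equivalent to the weak majorization $\sum_{i\le k}\mu_i(v)\le D^{-1/2}\min(k,D)$ for every $k$. That is, you need your operator-norm bound \emph{together with} the trace-norm bound $\sum_i\mu_i(v)\le\sqrt D$; the bound $\norm{v}\le1$ cannot replace the latter. Your fallback does not supply it either. Write $\norm{Z}_{(k)}$ for the sum of the $k$ largest singular values and $\Tr_C W:=\sum_c(\id\otimes\bra c)W(\id\otimes\ket c)$. The estimate $\norm{\Tr_C W}_{(k)}\le K\norm{W}_{(k)}=K\min(k,DK)$ only gives $\sum_{i\le k}\mu_i(v)\le kD^{-1/2}$. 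This is your operator-norm bound again, and it lets the total grow to $K\sqrt D$.

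The missing ingredient is the sharper estimate $\norm{\Tr_C W}_{(k)}\le\norm{W}_{(Kk)}$. It follows from the variational formula $\norm{Z}_{(k)}=\max\{\abs{\Tr(UZ)}:\norm{U}_\infty\le1,\ \norm{U}_1\le k\}$, because $\Tr(U\,\Tr_C W)=\Tr((U\otimes\id_K)W)$ and $U\otimes\id_K$ has operator norm at most $1$ and trace norm at most $Kk$. For large $k$ this is just trace-norm contractivity of the partial trace. Since $\norm{W}_{(Kk)}=\min(Kk,DK)$, you get exactly $\sum_{i\le k}\mu_i(v)\le D^{-1/2}\min(k,D)$. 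Combining this with your von Neumann step and applying Abel summation against the decreasing $\sigma$ gives $\ip{\chi}{v}\le D^{-1/2}\sum_{i\le D}\sigma_i(\chi)$, after which your argument goes through.

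The paper reaches this inequality without computing the spectrum of $\ket v$ at all. It writes $\ip{\chi}{v}=\langle\chi\otimes\Phi_K\mid\Omega\rangle$ and applies von Neumann's inequality in the enlarged bipartition. There $\ket\Omega$ is flat of rank $DK$, and $\ket\chi\otimes\ket{\Phi_K}$ has each coefficient $\sigma_i(\chi)/\sqrt K$ repeated $K$ times, so only the top $D$ coefficients of $\ket\chi$ contribute. This is precisely the dual form of the majorization you were missing.
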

\endgroup

\begin{proof}
Let $D=2^h$, so $2^H=DK$. Given any $\rho_{AB}\in\mathcal{W}_h(A:B)$, the provers can use $h$ of their shared pairs to prepare $\rho_{AB}$ and send the remaining $H-h$ pairs unchanged on $C_AC_B$. Thus $\rho_{AB}\otimes\proj{\Phi_K}_{C_AC_B}$ belongs to $\mathcal{W}_H(AC_A:BC_B)$ and has acceptance value $\Tr(M\rho_{AB})$. Maximizing over $\rho_{AB}$ proves the lower bound in \eqref{eq:padding-restated}.

For the upper bound, let $\sigma\in\mathcal{W}_H(AC_A:BC_B)$. By \Cref{prop:purification}, it has a flat purification $\ket{\Omega}$ of Schmidt rank $DK$ across $XC_A:YC_B$, with $X=AE_A$, $Y=BE_B$. Define
\[
\widetilde{M}:=M_{AB}\otimes\id_{E_AE_B},
\qquad
\ket{v}:=\left(\id_{XY}\otimes\bra{\Phi_K}_{C_AC_B}\right)\ket{\Omega}.
\]
Here $\ket{v}$ is the unnormalized vector after Bell success; $\widetilde{M}$ applies the original verifier's accepting effect \(M\), ignoring environments. The joint acceptance probability of both tests is
\[
\Tr\!\left[\left(M_{AB}\otimes\proj{\Phi_K}_{C_AC_B}\right)\sigma\right]=\bra{v}\widetilde{M}\ket{v}.
\]

Our goal is to show that $\bra{v}\widetilde{M}\ket{v}\le\beta_h(M)$. We will use an overlap comparison to show that $\bra{v}\widetilde{M}\ket{v}\le\bra{\Gamma}\widetilde{M}\ket{\Gamma}$ for some $\ket{\Gamma}\in\mathcal{F}_D(X:Y)$. For every such state, tracing out $E_AE_B$ gives a witness in $\mathcal{W}_h(A:B)$ by \Cref{prop:purification}, so
\begin{equation}\label{eq:flat-witness-bound}
\bra{\Gamma}\widetilde{M}\ket{\Gamma}\le\beta_h(M).
\end{equation}

Let $\ket{\chi}$ be a possibly unnormalized vector on $XY$, with nonincreasing Schmidt coefficients $\mu_i$, padded with zeros. Across $XC_A:YC_B$, the Schmidt coefficients of $\ket{\chi}\otimes\ket{\Phi_K}$ are $\mu_i/\sqrt K$, each repeated $K$ times. Its largest $DK$ coefficients then come from the largest $D$ coefficients of $\ket{\chi}$. Since $\ket{\Omega}\in\mathcal{F}_{DK}(XC_A:YC_B)$,\footnote{\(\SR(\Omega) = DK\) implies $\dim(XC_A),\dim(YC_B)\ge DK$. Since $\dim(C_A)=\dim(C_B)=K$, we also have $\dim(X),\dim(Y)\ge D$. The dimension requirement $\dim(X),\dim(Y)\ge R$ of \Cref{lem:flat-overlap} is satisfied.} applying \Cref{lem:flat-overlap} at ranks $DK$ and $D$ gives
\begin{equation}\label{eq:padding-support}
\begin{alignedat}{2}
\abs{\ip{\chi}{v}}
=\abs{\langle\chi\otimes\Phi_K\mid\Omega\rangle}
&\le\frac{1}{\sqrt{DK}}\sum_{i=1}^{D}\frac{K\mu_i}{\sqrt K}
&\qquad&\text{(\Cref{lem:flat-overlap}, $R=DK$)}\\
&=\frac{1}{\sqrt D}\sum_{i=1}^{D}\mu_i\\
&=\max_{\ket{\Gamma}\in\mathcal{F}_D(X:Y)}\abs{\ip{\chi}{\Gamma}}.
&\qquad&\text{(\Cref{lem:flat-overlap}, $R=D$)}
\end{alignedat}
\end{equation}

Choose $\ket{\chi}=\widetilde{M}\ket{v}$ in \eqref{eq:padding-support}. Positivity $\widetilde{M}\succeq0$ gives $\abs{\ip{\chi}{v}}=\bra{v}\widetilde{M}\ket{v}$, hence
\begin{align*}
\bra{v}\widetilde{M}\ket{v}
&\le\max_{\ket{\Gamma}\in\mathcal{F}_D(X:Y)}\abs{\bra{v}\widetilde{M}\ket{\Gamma}}
\tag*{(\emph{Using \eqref{eq:padding-support}})}\\
&\le\sqrt{\bra{v}\widetilde{M}\ket{v}}\max_{\ket{\Gamma}\in\mathcal{F}_D(X:Y)}\sqrt{\bra{\Gamma}\widetilde{M}\ket{\Gamma}}
\tag*{(\emph{Cauchy--Schwarz})}\\
&\le\sqrt{\bra{v}\widetilde{M}\ket{v}\,\beta_h(M)}.
\tag*{(\emph{Using} \eqref{eq:flat-witness-bound})}
\end{align*}
If $\bra{v}\widetilde{M}\ket{v}=0$, the desired bound is immediate. Otherwise, dividing by its square root and squaring proves the desired upper bound in \eqref{eq:padding-restated}.
\end{proof}

\subsection{Polynomial input padding}
\label[appendix]{app:polynomial-padding}

We now prove \Cref{cor:polynomial-budgets}. The padding used here enlarges the classical input; \Cref{thm:padding} then matches the EPR budget to its new length.

\begingroup
\theoremstyle{plain}
\newtheorem*{budgetrestatement}{Corollary~\ref*{cor:polynomial-budgets}}
\begin{budgetrestatement}[Collapse via polynomial padding, restated]
Fix a rational constant $\varepsilon>0$. Then
\[
\QMA(2;\lceil n^\varepsilon\rceil)=\QMA(2)
\quad\implies\quad
\QMA(2;h)=\QMA(2)
\]
for every efficiently computable, polynomially bounded budget $h$.
\end{budgetrestatement}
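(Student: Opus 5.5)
The inclusion $\QMA(2)\subseteq\QMA(2;h)$ holds unconditionally by \Cref{cor:monotonicity} with $h=0$. So the work is the reverse inclusion $\QMA(2;h)\subseteq\QMA(2)$, and I would prove it by a standard padding argument.

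\emph{Choosing the padded length.} Fix $L=(L_{\mathrm{yes}},L_{\mathrm{no}})\in\QMA(2;h;a,b)$ with $\Delta\ge1/\poly(n)$ and $h(n)\le n^c$ for some constant $c$. Since $\varepsilon$ is rational, I set $k:=\lceil c/\varepsilon\rceil+1$ and $m(n):=n^k$ (or $\max(n,\cdot)$ to handle small $n$). Then $\lceil m^\varepsilon\rceil\ge n^{k\varepsilon}\ge n^c\ge h(n)$, and $m$ is efficiently computable and polynomial in $n$.

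\emph{The padded problem.} Define $L'$ on strings $y=x10^{m(|x|)-|x|-1}$: such $y$ is a yes (resp.\ no) instance iff $x$ is; malformed strings lie outside the promise. Since $n\mapsto m(n)$ is strictly increasing, $x$ is recovered from $y$ uniquely and in polynomial time. The verifier for $L'$ on input $y$ of length $m$ recovers $x$ and runs the $\QMA(2;h)$ verifier for $x$.

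\emph{Raising the budget.} To land in $\QMA(2;\lceil m^\varepsilon\rceil)$, I apply \Cref{cor:monotonicity} with budgets $h(n)\le H(m):=\lceil m^\varepsilon\rceil$. The new verifier Bell-tests the $H-h$ surplus pairs and runs the original verification. By \Cref{thm:padding}, both thresholds $a,b$ are preserved exactly. Witness sizes become $w_A(n)+H-h$ and $w_B(n)+H-h$, which are polynomial in $m$. The gap $\Delta(n)\ge1/\poly(m)$ because $n\le m$. The thresholds, as functions of $m$, are efficiently computable because $n$ is computable from $m$.

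\emph{Using the hypothesis.} The hypothesis then gives $L'\in\QMA(2)$, say via verifier $V'$. A $\QMA(2)$ verifier for $L$ maps $x\mapsto y$ in polynomial time and runs $V'$ on $y$. Witnesses have size $\poly(m)=\poly(n)$ and the gap remains inverse-polynomial in $n$, so $L\in\QMA(2)$.

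The main obstacle is the bookkeeping in \Cref{cor:monotonicity}, which is stated for budgets given as functions of the same input length. Here the budget must be $H(|y|)=\lceil |y|^\varepsilon\rceil$ while the original protocol uses $h(n)$ with $n$ derived from $|y|$. I would handle this by defining $h'(m):=h(n)$ on padded lengths (and $h'=0$ on other lengths, where there are no promise instances), checking that $h'\le H$ and that $h'$ is efficiently computable, and then invoking \Cref{thm:padding} pointwise per input as in the proof of \Cref{cor:monotonicity}. A secondary point is that $\lceil m^\varepsilon\rceil$ must be efficiently computable; this holds because $\varepsilon$ is rational, via integer root extraction.
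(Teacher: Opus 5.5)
Your proposal is correct and takes essentially the paper's route: pad to a polynomial length $m$ with $\lceil m^\varepsilon\rceil\ge h(n)$, raise the budget with \Cref{thm:padding} while preserving both thresholds, apply the hypothesis, and unpad. The only difference is that you recover $n$ from $|y|$ by integer root extraction, where the paper uses a unary prefix $1^n0$. One small repair: $m(n)=n^k$ fails at $n\in\{0,1\}$, since you need $m(n)\ge n+1$ for the separator and $\lceil m(n)^\varepsilon\rceil\ge h(n)$ for every $n$. Take instead $m(n)=C(n+1)^k$ with $C,k$ large, as the paper does, and fix dummy thresholds such as $1/3,2/3$ at lengths outside the image of $m$.
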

\endgroup

\begin{proof}
Fix a rational constant $\varepsilon>0$ and assume $\QMA(2;\lceil n^\varepsilon\rceil)=\QMA(2)$. Let $h$ be any efficiently computable, polynomially bounded budget, and take a promise problem $L\in\QMA(2;h)$ with completeness $b(n)$, soundness $a(n)$, and inverse-polynomial gap. Choose a strictly increasing integer-valued polynomial $N(n)\ge 2n+1$ such that $\lceil N(n)^\varepsilon\rceil\ge h(n)$ for all $n$. For example, $N(n)=C(n+1)^k$ works for sufficiently large positive integers $C,k$, since $h$ is polynomially bounded and $\varepsilon>0$. For an input $x$ of length $n$, define
\begin{equation}\label{eq:input-padding}
    \operatorname{pad}(x):=1^n0x0^{N(n)-2n-1}.    
\end{equation}
Given $x$, we can compute $\operatorname{pad}(x)$ in polynomial time. The resulting string has length $N(n)$ and allows $x$ to be recovered efficiently. Define the promise problem $L^{\mathrm{pad}}$ by applying this map to the yes- and no-instances of $L$; strings not of the form \eqref{eq:input-padding} are outside the promise.

On a valid padded input of length $m=N(n)$, recover $x$. Running the \(\QMA(2;h)\) verifier for $L$ on $x$ gives a verifier for $L^{\mathrm{pad}}$ using $h(n)$ EPR pairs. Since $h(n)\le\lceil m^\varepsilon\rceil$, \Cref{thm:padding} converts this into a $\QMA(2;\lceil m^\varepsilon\rceil)$ protocol for $L^{\mathrm{pad}}$, preserving completeness $b(n)$ and soundness $a(n)$.

To show $L^{\mathrm{pad}}\in\QMA(2;\lceil m^\varepsilon\rceil)$, completeness and soundness must be stated as functions of the padded input length $m$. Set $a'(m):=a(n)$ and $b'(m):=b(n)$ for $m=N(n)$. Strict monotonicity of $N$ allows efficient computation of these thresholds by binary search, while $n\le m$ ensures that their gap remains inverse polynomial in $m$.\footnote{Formally, an inverse-polynomial gap is also required at unused lengths $m$, outside the image of $N$. At these lengths, we set $a'(m)=1/3$ and $b'(m)=2/3$.} Hence $L^{\mathrm{pad}}\in\QMA(2;\lceil m^\varepsilon\rceil)$.

By the assumed equality, $L^{\mathrm{pad}}\in\QMA(2)$. For $L\in\QMA(2;h)$, compute $\operatorname{pad}(x)$ on input $x$ and run this $\QMA(2)$ verifier. Since $N(n)$ is polynomially bounded in $n$, the circuit and witness sizes remain polynomial in $n$, with an inverse-polynomial gap. Hence $L\in\QMA(2)$, proving $\QMA(2;h)\subseteq\QMA(2)$. The reverse inclusion follows from \Cref{cor:monotonicity}.
\end{proof}

%% file: appendices/B_removing_entanglement.tex
\section{Proof of the four-witness simulation}\label[appendix]{app:four-witness-simulation}
This appendix proves the correctness of the four-witness simulation and the quantitative bounds in \Cref{thm:four-witness-simulation}, which we restate for convenience.

\begingroup
\theoremstyle{plain}
\newtheorem*{fourwitnessrestatement}{Theorem~\ref{thm:four-witness-simulation}}
\begin{fourwitnessrestatement}[Four-witness simulation, restated]
Let $D:=2^h$ and $\Delta:=b-a>0$. There exist efficiently computable thresholds $0\le a'<b'\le1$ such that
\begin{equation}\label{eq:four-witness-restated}
\begin{gathered}
\QMA(2;w_A,w_B,h;a,b)\subseteq\QMA(4;w'_A,w'_A,w'_B,w'_B;a',b'),\\[3pt]
w'_A:=2w_A+2h,\qquad w'_B:=2w_B+2h,\qquad b'-a'=\Omega\!\left(\frac{\Delta^2}{D(1+\Delta D)}\right).
\end{gathered}
\end{equation}
\end{fourwitnessrestatement}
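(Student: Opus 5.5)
We analyze \Cref{prot:four-witness-simulation} directly, restricting by convexity to pure submissions $\ket{\alpha},\ket{\alpha'},\ket{\beta},\ket{\beta'}$. The plan is to use the identity $p_{\mathrm{acc}}=1/2+S/(2R)$ from \eqref{eq:four-witness-acceptance}. First we verify branch by branch that the acceptance probability is exactly this. The full SWAP branch accepts with probability $(1+\abs{\ip{\alpha}{\alpha'}}^2)/2$ and the index SWAP branch with probability $(1-\Tr(\alpha_I\alpha'_{I'}))/2$. Summing with the weights $\lambda/R$ per pair gives $\lambda(6-2/D-P_A-P_B)/(2R)$, where the constant $6-2/D$ is absorbed into $R$ as defined in \eqref{eq:score-normalization}. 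Everything then reduces to a score separation: $S\ge\Delta/D^2$ on yes-instances and $S\le\Delta/(4D^2)$ on no-instances. These give $b'-a'=3\Delta/(8RD^2)$, and with $R=\Theta(1+1/(\Delta D))$ this is $\Omega(\Delta^2/(D(1+\Delta D)))$.

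\emph{Completeness.} Take an optimal legal witness and its flat purification from \Cref{prop:purification}, and submit the proper encodings \eqref{eq:isometry-encodings} twice. Then $P_A=P_B=0$, since the overlap is $1$ and the index purity is $1/D$. By \eqref{eq:encoding-bell-contraction}, $q=D^{-2}$ and $u=D^{-2}\Tr(M\rho)\ge bD^{-2}$, so $g\ge\Delta/D^2$.

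\emph{Soundness, in three steps.}

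\textbf{Step 1: penalty estimates.} We show that $P(\alpha,\alpha')$ controls the distance from $\ket{\alpha}$ to a proper encoding. Write $\epsilon=1-\abs{\ip{\alpha}{\alpha'}}^2$. By \eqref{eq:trace} and contraction, $\norm{\alpha_I-\alpha'_{I'}}_1\le 2\sqrt{\epsilon}$. Because marginals have operator norm at most $1$, this yields $\Tr(\alpha_I\alpha'_{I'})\ge\Tr(\alpha_I^2)-2\sqrt\epsilon$. A cleaner route is $\Tr(\alpha_I\alpha'_{I'})\ge\Tr(\alpha_I^2)-\norm{\alpha_I}_2\norm{\alpha_I-\alpha'_{I'}}_2$. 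Either way, $\norm{\alpha_I-\id/D}_2^2=\Tr(\alpha_I^2)-1/D$ is bounded by $P_A$ plus a term of order $\sqrt{\epsilon}\le\sqrt{P_A}$, which settles nonnegativity as well.

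Next we pass from the marginal to the state. Polar-decomposing the coefficient matrix $C$ of $\ket{\alpha}$ across $X:I$ gives a proper encoding $\ket{\hat\alpha}$, namely the flat vector aligned with the Schmidt bases of $\ket{\alpha}$. By \Cref{lem:flat-overlap}, $\ip{\hat\alpha}{\alpha}=D^{-1/2}\sum_i\mu_i$. Hence
\[
1-\ip{\hat\alpha}{\alpha}=\tfrac12\sum_i(\mu_i-D^{-1/2})^2\le \tfrac{D}{2}\norm{\alpha_I-\id/D}_2^2,
\]
using $\abs{\mu_i-D^{-1/2}}\le\sqrt{D}\,\abs{\mu_i^2-1/D}$. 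Thus $\norm{\ket\alpha-\ket{\hat\alpha}}^2\lesssim D\,(P_A+\sqrt{P_A})$, and the same holds for $\beta$.

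\textbf{Step 2: perturbation of $g$.} Let $\Pi=\widetilde M\otimes\proj{\Phi_D}_{IJ}$ and $Q=\id\otimes\proj{\Phi_D}_{IJ}$. Then $g=\bra{\psi}(\Pi-aQ)\ket{\psi}$ with $\ket\psi=\ket\alpha\ket\beta$. For $\ket{\hat\psi}=\ket{\hat\alpha}\ket{\hat\beta}$, the reconstructed state is legal by \Cref{prop:purification}, so $\hat g\le0$. Since both operators factor through the Bell projection of norm $1/D$ on the normalized contraction, we get a bound of the form $g-\hat g\le O(D^{-1}\norm{\psi-\hat\psi})$, obtained from the fact that $(\id\otimes\bra{\Phi_D})$ has operator norm $1$ while the honest Bell amplitude is $1/D$.

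\textbf{Step 3: optimizing.} Combining the previous steps gives
\[
S\le c\,D^{-1}\sqrt{D}\,\bigl(\sqrt{P_A}+\sqrt{P_B}\bigr)^{\theta}-\lambda(P_A+P_B),
\]
which we then optimize over $P_A,P_B\ge0$.

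The main obstacle is this last quantitative step. The exponents from Step 1 (the $\sqrt{P}$ losses coming from the SWAP estimates) must match Step 2 so that the maximum of linear-minus-quadratic-type terms is at most $\Delta/(4D^2)$ precisely when $\lambda=4+16/(\Delta D)$. If the crude bounds lose too much, I would first try to sharpen Step 1 by handling the disagreement term exactly. Specifically, I would project $\ket{\alpha'}$ onto $\ket{\alpha}$ and bound $\Tr(\alpha_I^2)-\Tr(\alpha_I\alpha'_{I'})$ by $O(\epsilon)$ rather than $O(\sqrt\epsilon)$, getting $\norm{\psi-\hat\psi}^2=O(D(P_A+P_B))$. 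The tradeoff $x/D^{1/2}\cdot\sqrt{D}x$-type versus $\lambda x^2$ would then give exactly the $1/(\lambda D^2)$-scaled bound compatible with the chosen $\lambda$.

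Finally, we implement the branch probabilities to polynomially many bits of precision, losing only a constant fraction of the gap. Witness sizes are $w+(w+h)+h$ by the Stinespring bound.
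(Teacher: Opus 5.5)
\textbf{There is a genuine gap: the two quantitative estimates that fix the order of the gap are missing, so the argument does not reach $\Omega(\Delta^2/(D(1+\Delta D)))$.} Your architecture is the paper's: the same score identity, the same completeness argument, the Schmidt-aligned rounding to a proper encoding, an expansion of $g$ around $(\hat\alpha,\hat\beta)$, and a $\sqrt{e}$-versus-$e$ optimization.

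\textbf{Step~2 loses a factor of $D$.} Grant your bound $g-\hat g=O(D^{-1}\norm{\psi-\hat\psi})$. Together with $\norm{\alpha-\hat\alpha}^2\le De_A$ it gives only $g\lesssim D^{-1/2}(\sqrt{e_A}+\sqrt{e_B})$. Against penalties linear in $e_Z$, the optimization then yields $S\lesssim 1/(\lambda D)$. This is at most $\Delta/(4D^2)$ only if $\lambda=\Omega(D/\Delta)$, which makes the gap of order $\Delta^2/D^3$. That is still inverse-polynomial for $h=O(\log n)$, but it is not the stated bound. With $\lambda=4+16/(\Delta D)$ the inequality fails, so your claim that the tradeoff is compatible with the chosen $\lambda$ is wrong.

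The missing observation is that every first-order term pairs the perturbation with a \emph{proper} encoding, whose index marginal is $\id/D$. The Bell-success probability against it is therefore $\frac1D\Tr(\tau_I^{\mathsf T}\,\id_J/D)=D^{-2}$ for every state $\tau$. Write $\delta_A:=\proj{\alpha}-\proj{\hat\alpha}=d_A(\tau_+-\tau_-)$ with $d_A=T(\alpha,\hat\alpha)$, and use $\abs{g}\le q$. This gives $g(\delta_A,\hat\beta)\le 2d_A/D^2$. Only the doubly perturbed term keeps the weaker factor, $g(\delta_A,\delta_B)\le 4d_Ad_B/D$. Hence $g\le 2D^{-3/2}(\sqrt{e_A}+\sqrt{e_B})+4\sqrt{e_Ae_B}$, which is a factor $D$ better at first order. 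The quadratic part is what the offset $4$ in $\lambda$ absorbs.

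\textbf{Step~1 does not give the linear bound $P(\alpha,\alpha')\ge e_A/2$ that the optimization needs.}
\begin{itemize}
\item Your inequality $\epsilon\le P_A$ is false. For $\ket{0}_X\ket{0}_I$ and $\ket{0}_{X'}\ket{1}_{I'}$ one has $\epsilon=1>1-1/D=P_A$.
\item An upper bound on $e_A$ cannot yield $P_A\ge0$, so nonnegativity is not settled.
\item Your proposed sharpening is also false. Take $\sqrt{p}\ket{00}+\sqrt{1-p}\ket{11}$ and $\sqrt{p+\eta}\ket{00}+\sqrt{1-p-\eta}\ket{11}$ with $0<p<1/2$ and small $\eta>0$. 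Then $\Tr(\alpha_I^2)-\Tr(\alpha_I\alpha'_{I'})=(1-2p)\eta$, while $\epsilon=\Theta(\eta^2)$.
\end{itemize}

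The paper symmetrizes instead: $\Tr(\alpha_I\alpha'_{I'})=\frac12\bigl(\Tr(\alpha_I^2)+\Tr({\alpha'_{I'}}^2)-\norm{\alpha_I-\alpha'_{I'}}_2^2\bigr)$. It then uses $\norm{Z}_2^2\le\frac12\norm{Z}_1^2\le 2\epsilon$ for the traceless $Z=\alpha_I-\alpha'_{I'}$. The $\epsilon$ cancels exactly, and $P\ge\frac12(e_A+e'_A)\ge0$.

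Your Cauchy--Schwarz route also works if you center it at $\id/D$: $\Tr(\alpha_I^2)-\Tr(\alpha_I\alpha'_{I'})=\Tr((\alpha_I-\id/D)(\alpha_I-\alpha'_{I'}))\le\sqrt{2e_A\epsilon}$. This gives $P_A\ge(1-1/\sqrt2)(e_A+\epsilon)$, with a worse constant.

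With both repairs, $S\le\sum_Z\bigl[2D^{-3/2}\sqrt{e_Z}-(\lambda/2-2)e_Z\bigr]\le 2/(D^3(\lambda/2-2))=\Delta/(4D^2)$. This is exactly the soundness bound needed for the stated gap.
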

\endgroup

\Cref{app:encoding-estimates} first establishes bounds on deviations from proper encodings, which \Cref{app:simulation-analysis} uses to prove the theorem.

\subsection{Estimates for proper encodings}\label[appendix]{app:encoding-estimates}
Recall that an encoding $\ket{\alpha}_{XI}$ is proper if the index marginal $\alpha_I$ is maximally mixed. We show that $P$ in \eqref{eq:index-penalty-definition} controls excess index purity even when the copies $\ket{\alpha}_{XI}$ and $\ket{\alpha'}_{X'I'}$ differ.

\begin{lemma}[Consistency and index purity]\label{lem:consistency-purity}
For pure $\ket{\alpha}_{XI},\ket{\alpha'}_{X'I'}$ with $\dim(I)=D$,
\begin{equation}\label{eq:consistency-purity}
\begin{aligned}
P(\alpha,\alpha')&=1-\abs{\ip{\alpha}{\alpha'}}^2+\Tr(\alpha_I\alpha'_{I'})-\frac1D\\
&\ge\frac12\left(\Tr\Big(\alpha_I^2\Big)-\frac1D+\Tr\Big({\alpha'_{I'}}^2\Big)-\frac1D\right)\ge0.
\end{aligned}
\end{equation}
\end{lemma}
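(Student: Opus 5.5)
The plan is to reduce the first inequality to a comparison between the disagreement term $1-\abs{\ip{\alpha}{\alpha'}}^2$ and the Hilbert--Schmidt distance of the index marginals. Identify $I'$ with $I$, which is possible since both have dimension $D$. Put $Z:=\alpha_I-\alpha'_{I'}$. Expanding $\norm{Z}_2^2$ gives the polarization identity
\[
\Tr(\alpha_I\alpha'_{I'})=\frac12\Big(\Tr\big(\alpha_I^2\big)+\Tr\big({\alpha'_{I'}}^2\big)\Big)-\frac12\norm{Z}_2^2 .
\]
Substituting this into $P(\alpha,\alpha')$, the first claimed inequality becomes equivalent to
\[
1-\abs{\ip{\alpha}{\alpha'}}^2\;\ge\;\frac12\norm{Z}_2^2 .
\]

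To prove this, I would go through the trace distance in three steps.

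\begin{enumerate}
\item By \eqref{eq:trace}, the left-hand side equals $T(\alpha,\alpha')^2$.
\item Partial trace over $X$ is a channel, and trace distance contracts under channels. Hence $T(\alpha,\alpha')\ge T(\alpha_I,\alpha'_{I'})=\tfrac12\norm{Z}_1$.
\item It remains to show $\tfrac12\norm{Z}_2^2\le\tfrac14\norm{Z}_1^2$. Since $Z$ is Hermitian and traceless, its positive eigenvalues and the absolute values of its negative eigenvalues each sum to $t:=\tfrac12\norm{Z}_1$. The sum of squares of nonnegative numbers summing to $t$ is at most $t^2$. Therefore $\norm{Z}_2^2\le 2t^2=\tfrac12\norm{Z}_1^2$, which is exactly what is needed.
\end{enumerate}

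Chaining these gives $1-\abs{\ip{\alpha}{\alpha'}}^2\ge\tfrac14\norm{Z}_1^2\ge\tfrac12\norm{Z}_2^2$, and hence the middle inequality of \eqref{eq:consistency-purity}.

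For the final nonnegativity, I would use Cauchy--Schwarz on the eigenvalues of a density operator $\sigma$ on a $D$-dimensional space. This gives $1=(\Tr\sigma)^2\le D\,\Tr(\sigma^2)$, so $\Tr(\sigma^2)\ge 1/D$. Applying this to both $\alpha_I$ and $\alpha'_{I'}$ makes each bracketed term nonnegative.

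The only step that needs care is step 3, the factor-of-two comparison between the Hilbert--Schmidt and trace norms. A generic norm inequality ($\norm{Z}_2\le\norm{Z}_1$) loses exactly this factor and would not suffice. The sharper bound comes from tracelessness, via the split of the spectrum into positive and negative parts described above. Everything else is routine: the polarization identity, the pure-state formula \eqref{eq:trace}, and contractivity of trace distance from the preliminaries.
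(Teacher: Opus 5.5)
Your proposal is correct and follows essentially the same route as the paper: the polarization identity for $\norm{\alpha_I-\alpha'_{I'}}_2^2$, contractivity of trace distance under the partial trace, the sharpened bound $\norm{Z}_2^2\le\tfrac12\norm{Z}_1^2$ for traceless Hermitian $Z$, and the purity bound $\Tr(\sigma^2)\ge 1/D$. The only implicit point is that the inner product $\ip{\alpha}{\alpha'}$ also requires identifying $X'$ with $X$, not just $I'$ with $I$; the paper assumes this identification as well.
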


\begin{proof}
For the traceless Hermitian operator $Z = \alpha_I - \alpha'_{I'}$, its positive and negative parts have equal trace, so $\norm{Z}_2^2\le\norm{Z}_1^2/2$. Contractivity of trace distance therefore gives
\[
\norm{\alpha_I-\alpha'_{I'}}_2^2
\le\frac12\norm{\alpha_I-\alpha'_{I'}}_1^2
\le2T(\alpha,\alpha')^2
=2\left(1-\abs{\ip{\alpha}{\alpha'}}^2\right).
\]
Substituting this bound into the following expansion of the squared Hilbert--Schmidt norm
\[
\Tr(\alpha_I\alpha'_{I'})=\frac{\Tr\Big(\alpha_I^2\Big)+\Tr\Big({\alpha'_{I'}}^2\Big)-\norm{\alpha_I-\alpha'_{I'}}_2^2}{2}
\]
proves the first inequality in \eqref{eq:consistency-purity}. The second follows because every $D$-dimensional density operator has purity at least $1/D$. Equality is attained by identical proper encodings: the squared overlap is one and the index purity is $1/D$, so $P=0$.
\end{proof}

The next lemma converts a small excess in index purity into proximity to a proper encoding.

\begin{lemma}[Approximation by a proper encoding]\label{lem:encoding-rounding}
Suppose $\dim(X)\ge D$ and $\dim(I)=D$. For a pure state $\ket{\alpha}$ on $XI$, let $e:=\Tr(\alpha_I^2)-1/D$ denote its index purity gap. There is a proper encoding $\ket{\widehat{\alpha}}$ on the same registers satisfying
\begin{equation}\label{eq:encoding-rounding}
\widehat{\alpha}_I=\frac{\id_I}{D},\qquad
T(\alpha,\widehat{\alpha})\le\sqrt{De}.
\end{equation}
\end{lemma}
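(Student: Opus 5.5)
We round $\ket{\alpha}$ to a proper encoding by flattening its Schmidt coefficients across the cut $X:I$ while keeping its Schmidt vectors, and then bound the resulting trace distance by the excess purity.

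\emph{Setting up the candidate.} Write the Schmidt decomposition of $\ket{\alpha}$ across $X:I$ as
\[
\ket{\alpha}=\sum_{i=1}^{D}\sqrt{p_i}\,\ket{x_i}_X\ket{f_i}_I,
\]
where $\{\ket{f_i}\}$ is an orthonormal basis of $I$ given by the eigenbasis of $\alpha_I$, with eigenvalues $p_i$ (some possibly zero). Since $\dim(X)\ge D$, the orthonormal family $\{\ket{x_i}\}$, taken over the indices with $p_i>0$, can be completed to $D$ orthonormal vectors in $X$. Define
\[
\ket{\widehat{\alpha}}:=\frac{1}{\sqrt D}\sum_{i=1}^{D}\ket{x_i}_X\ket{f_i}_I .
\]
Then $\widehat{\alpha}_I=\id_I/D$, so $\ket{\widehat\alpha}$ is a proper encoding. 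Expanding it in the fixed index basis gives the orthonormal-family form of \eqref{eq:isometry-encodings}, since a unitary change of basis on $I$ preserves orthonormality of the corresponding $X$-vectors.

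\emph{Bounding the distance.} The overlap is $\ip{\widehat\alpha}{\alpha}=D^{-1/2}\sum_i\sqrt{p_i}=:s$, which is real and nonnegative. By \eqref{eq:trace}, $T(\alpha,\widehat\alpha)^2=1-s^2$. It therefore suffices to show $1-s^2\le De$, where $e=\sum_ip_i^2-1/D$. Let $q_i:=\sqrt{p_i}$, so that $\sum_iq_i^2=1$. We have
\[
1-s\le 1-s^2=(1-s)(1+s)\le 2(1-s),
\]
and
\[
2(1-s)=\sum_i\Bigl(q_i-\tfrac{1}{\sqrt D}\Bigr)^2=\sum_i\frac{\bigl(p_i-1/D\bigr)^2}{\bigl(q_i+1/\sqrt D\bigr)^2}\le D\sum_i\Bigl(p_i-\frac1D\Bigr)^2=De.
\]
Here the inequality uses $(q_i+D^{-1/2})^2\ge 1/D$, and the final equality uses $\sum_i(p_i-1/D)^2=\sum_ip_i^2-1/D$. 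Combining, $T(\alpha,\widehat\alpha)^2=1-s^2\le De$, which is \eqref{eq:encoding-rounding}.

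\emph{Main obstacle.} The only delicate step is the elementary bound comparing $1-s^2$ to the index purity gap. A crude estimate would lose a factor of $2$. The clean route is to write $2(1-s)$ exactly as a squared distance between the vectors $(q_i)_i$ and $(D^{-1/2})_i$, and to use the denominator bound $(q_i+D^{-1/2})^2\ge 1/D$. The remaining care is the dimension hypothesis $\dim(X)\ge D$, which is needed to complete the Schmidt family so that $\ket{\widehat\alpha}$ has full Schmidt rank $D$.
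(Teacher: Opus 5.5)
Your proposal is correct and follows the paper's proof essentially verbatim. It uses the same flattening of the Schmidt coefficients, with $\dim(X)\ge D$ used to complete the family, and the same chain $1-s^2\le 2(1-s)=\sum_i(\sqrt{p_i}-D^{-1/2})^2\le D\sum_i(p_i-1/D)^2=De$ via the termwise bound $(\sqrt{p_i}+D^{-1/2})^2\ge 1/D$. The step $1+s\le 2$ relies on $s\in[0,1]$, which the paper states explicitly; it holds because $s$ is the overlap of two unit vectors.
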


\begin{proof}
Pad the Schmidt decomposition of $\ket{\alpha}$ to $D$ terms with zeros and define \(\ket{\widehat{\alpha}}\) as
\[
\ket{\alpha}=\sum_{i=1}^{D}\sqrt{p_i}\ket{u_i}_X\ket{v_i}_I,\qquad
\ket{\widehat{\alpha}}=\frac{1}{\sqrt D}\sum_{i=1}^{D}\ket{u_i}_X\ket{v_i}_I.
\]
The condition $\dim(X)\ge D$ allows this completion. The resulting encoding $\ket{\widehat{\alpha}}$ is proper because $\widehat{\alpha}_I=\id_I/D$. Then, using $\ip{\alpha}{\widehat{\alpha}}=D^{-1/2}\sum_i\sqrt{p_i}\in[0,1]$, we obtain
\begin{alignat*}{2}
T(\alpha,\widehat{\alpha})^2
&=1-\abs{\ip{\alpha}{\widehat{\alpha}}}^2
\le2\left(1-\ip{\alpha}{\widehat{\alpha}}\right)
&{}&=\sum_{i=1}^{D}\left(\sqrt{p_i}-D^{-1/2}\right)^2\\
&&&\le D\sum_{i=1}^{D}\left(p_i-D^{-1}\right)^2
=D\left(\sum_{i=1}^{D}p_i^2-\frac1D\right)=De.
\end{alignat*}
The second inequality follows termwise from $(\sqrt{p_i}+D^{-1/2})^2\ge1/D$.
\end{proof}

\subsection{Proof of the simulation theorem}\label[appendix]{app:simulation-analysis}
We now prove the completeness and soundness bounds for \Cref{prot:four-witness-simulation} assuming exact sampling, then show how to implement its branch probabilities with finite precision.

\begin{proof}[Proof of \Cref{thm:four-witness-simulation}]
Fix an input $x$ and let $M=M_x$ be the original accepting effect. Choose fixed environments $E_A,E_B$ of $w_A+h,w_B+h$ qubits, respectively, and set
\begin{equation}\label{eq:simulation-registers}
X:=AE_A,\qquad Y:=BE_B,\qquad \widetilde{M}:=M_{AB}\otimes\id_{E_AE_B}.
\end{equation}
By \Cref{prop:purification}, \(E_A, E_B\) suffice for every legal witness and give $\dim(X),\dim(Y)\ge D$, as required by \Cref{lem:encoding-rounding}. Let $I,J$ be $h$-qubit index registers.

The four witnesses occupy $XI$, $X'I'$, $JY$, and $J'Y'$. By the usual convexity argument for $\QMA(4)$, it suffices to consider pure independent submissions $\ket{\alpha}$, $\ket{\alpha'}$, $\ket{\beta}$, and $\ket{\beta'}$. As in the Bell-branch analysis of \Cref{subsec:four-witness-verifier}, $q$ is the Bell-success probability and $u$ is the joint probability of Bell success and original acceptance, excluding branch-selection probabilities.

The parameters from \eqref{eq:score-normalization} ensure that the branch probabilities form a valid distribution:
\begin{equation}\label{eq:simulation-parameters}
\lambda=4+\frac{16}{\Delta D},\qquad R=1+a+4\lambda+2\lambda(1-1/D).
\end{equation}
The Bell branches~(1--2) contribute $(u+a(1-q))/(2R)$ to acceptance, as in \eqref{eq:bell-branch-contribution}. The SWAP branches~(3--4) contribute $\lambda(6-2/D-P_A-P_B)/(2R)$. Adding branch~(5) gives \eqref{eq:four-witness-acceptance}, which we recall together with the \emph{total score} $S$, \emph{Bell score} $g$, and SWAP penalties $P_A,P_B$:
\begin{equation}\label{eq:score-acceptance}
    p_{\mathrm{acc}}=\frac12+\frac{S}{2R},
    \qquad
    \begin{array}{@{}l@{}l@{\qquad}l@{}l@{}}
        S   & {}:=g-\lambda(P_A+P_B), & g   & {}:=u-aq,\\
        P_A & {}:=P(\alpha,\alpha'),  & P_B & {}:=P(\beta,\beta').
    \end{array}
\end{equation}
We first bound \(S\) assuming exact sampling of the branch probability distribution.

\paragraph{Completeness.} On a yes-instance, a legal witness is accepted with probability at least $b$. Its flat purification yields the encodings from \eqref{eq:isometry-encodings}, with identical primed copies. The index marginals are maximally mixed and both penalties vanish. By \eqref{eq:encoding-bell-contraction}, $q=D^{-2}$ and $u\ge bD^{-2}$,
\begin{equation}\label{eq:score-completeness}
S=g\ge\frac{b-a}{D^2}=\frac{\Delta}{D^2}.
\end{equation}

\paragraph{Soundness.} Fix a no-instance. We first show a nonpositive Bell score $g\le0$ for proper encodings, then bound its possible increase for arbitrary submissions. Finally, subtracting the SWAP penalties \(\lambda(P_A + P_B)\) will give the desired bound $S\le\Delta/(4D^2)$.

\emph{Proper encodings.} 
Expanding proper $\ket{\widehat{\alpha}},\ket{\widehat{\beta}}$ in the fixed computational bases of $I,J$ gives
\[
\ket{\widehat{\alpha}}=\frac1{\sqrt D}\sum_{i=1}^{D}\ket{\widehat{u}_i}_X\ket{i}_I,\qquad
\ket{\widehat{\beta}}=\frac1{\sqrt D}\sum_{i=1}^{D}\ket{i}_J\ket{\widehat{v}_i}_Y.
\]
Their Bell-success vector is $D^{-3/2}\sum_i\ket{\widehat{u}_i}\ket{\widehat{v}_i}$. After normalization, it is a flat state of Schmidt rank $D=2^h$ across $X:Y$. By \Cref{prop:purification}, its $AB$ marginal belongs to $\mathcal{W}_h(A:B)$: by soundness, the original verifier accepts it with probability at most \(a\). For Bell-success probability \(q\), the probability \(u\) of both tests passing satisfies \(u \leq aq\), equivalently \(g \leq 0\).

\emph{Arbitrary witnesses.} For arbitrary witnesses \(\alpha_{XI}, \beta_{JY}\), denote their index purity gap as
\[
e_A:=\Tr(\alpha_I^2)-\frac1D,\qquad e_B:=\Tr(\beta_J^2)-\frac1D.
\]
By \Cref{lem:encoding-rounding}, there are proper encodings satisfying
\[
d_A:=T(\alpha,\widehat{\alpha})\le\sqrt{De_A},\qquad
d_B:=T(\beta,\widehat{\beta})\le\sqrt{De_B}.
\]
For density operators $\rho$ on $XI$ and $\sigma$ on $JY$, the following bound holds:
\begin{equation}\label{eq:bell-probability-bound}
\abs{g}\le q=\frac1D\Tr(\rho_I^{\mathsf T}\sigma_J)\le\frac1D.
\end{equation}
The first inequality uses $0\le u\le q$ and $0\le a\le1$; the last uses positivity and unit trace of $\rho_I,\sigma_J$. The bound $1/D$ alone is too weak for our desired soundness bound. We refine this estimate using the distances $d_A,d_B$ from the witnesses to proper encodings.

In the following, $g(\rho,\sigma)$ denotes the Bell score on $\rho_{XI},\sigma_{JY}$ and is linear in each input. Set $\delta_A:=\proj{\alpha}-\proj{\widehat{\alpha}}$ and $\delta_B:=\proj{\beta}-\proj{\widehat{\beta}}$. By linearity, expanding in each input gives
\begin{equation}\label{eq:bell-score-expansion}
\begin{aligned}
g(\alpha,\beta)
&=g\!\left(\proj{\widehat{\alpha}}+\delta_A,\proj{\widehat{\beta}}+\delta_B\right)\\
&=g(\widehat{\alpha},\widehat{\beta})+g(\delta_A,\widehat{\beta})+g(\widehat{\alpha},\delta_B)+g(\delta_A,\delta_B).
\end{aligned}
\end{equation}

Using our previous Bell score bound on proper encodings, \(g(\widehat{\alpha}, \widehat{\beta}) \leq 0\). To bound the other terms in \eqref{eq:bell-score-expansion}, we  first express $\delta_A,\delta_B$ as scaled differences of density operators. A nonzero traceless Hermitian operator with half trace norm $d$ can be written as $d(\tau_+-\tau_-)$ for density operators $\tau_\pm$, by normalizing its positive and negative parts. Terms with a zero difference vanish. Write $\delta_A=d_A(\tau_+-\tau_-)$. Since $\widehat{\beta}_J=\id_J/D$,
\[
\begin{alignedat}{2}
g(\delta_A,\widehat{\beta})
&=d_A\left(g(\tau_+,\widehat{\beta})-g(\tau_-,\widehat{\beta})\right)
&\qquad&(\text{\emph{Linearity of }} g)\\
&\le\frac{d_A}{D}\Tr\!\left[\left({\tau_+}_I^{\mathsf T}+{\tau_-}_I^{\mathsf T}\right)\widehat{\beta}_J\right]
&&\text{(\emph{Using } \eqref{eq:bell-probability-bound})}\\
&=\frac{d_A}{D^2}\left(\Tr({\tau_+}_I)+\Tr({\tau_-}_I)\right)=\frac{2d_A}{D^2}
&&(\text{\emph{Using }} \Tr({\tau_\pm}_I) = 1)
\end{alignedat}
\]
Similarly, \(g(\widehat{\alpha}, \delta_B) \leq 2d_B/D^2\). For the fourth term, write \(\delta_B = d_B(\sigma_+-\sigma_-)\). The same expansion and \eqref{eq:bell-probability-bound} yield
\begin{equation*}
g(\delta_A,\delta_B) =d_Ad_B\bigl[g(\tau_+,\sigma_+)-g(\tau_+,\sigma_-) {}-g(\tau_-,\sigma_+)+g(\tau_-,\sigma_-)\bigr]
\le\frac{4d_Ad_B}{D}.
\end{equation*}
Substituting all four bounds into~\eqref{eq:bell-score-expansion} yields our Bell-score bound on arbitrary witnesses:
\begin{equation}\label{eq:encoding-score-error}
g(\alpha,\beta)
\le\frac{2(d_A+d_B)}{D^2}+\frac{4d_Ad_B}{D}
\le\frac{2}{D^{3/2}}\left(\sqrt{e_A}+\sqrt{e_B}\right)+4\sqrt{e_Ae_B}.
\end{equation}

\emph{SWAP penalties.} Discarding the primed purity terms in \Cref{lem:consistency-purity} gives $P_A\ge e_A/2$ and $P_B\ge e_B/2$. Thus the same quantities that bound the increase in $g$ are penalized by branches~(3--4). Using $4\sqrt{e_Ae_B}\le2(e_A+e_B)$ in \eqref{eq:encoding-score-error}, every no-instance submission satisfies
\begin{equation}\label{eq:score-soundness}
\begin{aligned}
S &\le \frac{2(\sqrt{e_A} + \sqrt{e_B})}{D^{3/2}} + 2(e_A + e_B) - \lambda \frac{e_A + e_B}{2}
\\
&=\sum_{Z\in\{A,B\}}\left[\frac{2\sqrt{e_Z}}{D^{3/2}}-\left(\frac{\lambda}{2}-2\right)e_Z\right]\\
&\le\frac{2}{D^3(\lambda/2-2)}
=\frac{\Delta}{4D^2}.
\end{aligned}
\end{equation}
The second inequality uses $c\sqrt{t}-kt\le c^2/(4k)$ for $t\ge0$ and $k>0$, with $k=\lambda/2-2=8/(\Delta D)$. The last equality chooses $\lambda=4+16/(\Delta D)$ to bound the soundness score at one quarter of the completeness lower bound $\Delta/D^2$ in \eqref{eq:score-completeness}.

\emph{Thresholds under exact sampling.} Combining the score bounds \eqref{eq:score-completeness} and \eqref{eq:score-soundness}, with \(p_\mathrm{acc}\) from \eqref{eq:score-acceptance} yields completeness $b_2$ and soundness $a_2$, where
\begin{equation}\label{eq:four-witness-thresholds}
b_2:=\frac12+\frac{\Delta}{2RD^2},\qquad
a_2:=\frac12+\frac{\Delta}{8RD^2}.
\end{equation}
Since $\lambda\ge4$, $a\le1$, and $R\le7\lambda$, their gap $\gamma:=b_2-a_2$ satisfies
\begin{equation}\label{eq:four-witness-gap-explicit}
\gamma=\frac{3\Delta}{8RD^2}\ge\frac{3\Delta^2}{224D(\Delta D+4)}\ge\frac{3\Delta^2}{896D(1+\Delta D)}.
\end{equation}

\paragraph{Branch sampling.} Write $p_1,\ldots,p_6$ for the branch probabilities in \Cref{prot:four-witness-simulation}. Choose the smallest nonnegative integer $t$ with $2^{-t}\le\gamma/20$, and set
\begin{equation}\label{eq:dyadic-branch-probabilities}
\widetilde{p}_j:=2^{-t}\lfloor2^t p_j\rfloor\quad(1\le j\le5),\qquad
\widetilde{p}_6:=1-\sum_{j=1}^{5}\widetilde{p}_j.
\end{equation}
This rounds all branch probabilities to \(t\) fractional bits and yields a rounded probability distribution \(\widetilde{p}_1, \dots, \widetilde{p}_6\) whose total variation distance from the exact distribution satisfies
\[
\frac12\sum_{j=1}^{6}\abs{\widetilde{p}_j-p_j}
=\sum_{j=1}^{5}(p_j-\widetilde{p}_j)
<5\cdot2^{-t}\le\frac{\gamma}{4}.
\]
Every acceptance probability therefore changes by at most $\gamma/4$. The implemented verifier has completeness at least $b'$ and soundness at most $a'$, where
\begin{equation}\label{eq:implemented-four-witness-thresholds}
b':=b_2-\frac{\gamma}{4},\qquad
a':=a_2+\frac{\gamma}{4},\qquad
b'-a'=\frac{\gamma}{2}.
\end{equation}
These thresholds satisfy $0\le a'<b'\le1$, and \eqref{eq:four-witness-gap-explicit} gives the gap claimed in \eqref{eq:four-witness-restated}.

The rounded branch distribution is sampled efficiently using $t=O(1+h+\log(1/\Delta))$ fair bits. Its weights and the thresholds $a',b'$ are efficiently computable with bit lengths polynomial in $h$ and the bit lengths of $a,b$.

\paragraph{Witness sizes and efficiency.} Each witness on $XI$ or $X'I'$ contains $w_A+(w_A+h)+h=2w_A+2h$ qubits, and each witness on $JY$ or $J'Y'$ contains $2w_B+2h$ qubits. Branch sampling is efficient, Bell and SWAP tests have polynomial-size implementations, and the original verifier is invoked at most once. Thus the simulation has polynomial overhead. Together with \eqref{eq:implemented-four-witness-thresholds}, these resource bounds establish the containment in \eqref{eq:four-witness-restated}.
\end{proof}

%% file: appendices/C_locc.tex

\section{Proofs for LOCC witness preparation}\label[appendix]{app:locc-proofs}
This appendix proves the EPR padding and two-witness simulation results of \Cref{sec:classical}. The padding proof adapts the overlap argument of \Cref{thm:padding} in~\Cref{app:exact-padding}, while the simulation uses the branch-sampling method from \Cref{app:simulation-analysis}.

\subsection{Exact EPR padding}\label[appendix]{app:sn-padding}
For an accepting effect $M$ on $AB$, write $\beta_h^{\mathrm{LOCC}}(M):=\max_{\rho\in\mathcal{W}^{\mathrm{LOCC}}_h(A:B)}\Tr(M\rho)$. The following identity makes the acceptance-preservation claim of \Cref{prop:sn-padding} explicit.

\begingroup
\theoremstyle{plain}
\newtheorem*{loccpaddingrestatement}{Proposition~\ref{prop:sn-padding}}
\begin{loccpaddingrestatement}[Exact EPR padding, restated]
Let $0\le h\le H$ be integers, set $K:=2^{H-h}$, and let surplus EPR registers $C_A,C_B$ each contain $H-h$ qubits. For every accepting effect $0\preceq M\preceq\id_{AB}$,
\begin{equation}\label{eq:sn-padding}
\beta_H^{\mathrm{LOCC}}\!\left(M_{AB}\otimes\proj{\Phi_K}_{C_AC_B}\right)=\beta_h^{\mathrm{LOCC}}(M),
\end{equation}
where the left-hand side uses the bipartition $AC_A:BC_B$.
\end{loccpaddingrestatement}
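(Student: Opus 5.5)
We prove the two inequalities in \eqref{eq:sn-padding} separately, following \Cref{thm:padding} with flat states replaced by states of bounded Schmidt rank.

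\emph{Lower bound.} Take any $\rho_{AB}$ with $\operatorname{SN}(\rho)\le D:=2^h$. Then $\rho_{AB}\otimes\proj{\Phi_K}_{C_AC_B}$ has Schmidt number at most $DK=2^H$ across $AC_A:BC_B$, since tensoring each pure component of rank at most $D$ with $\ket{\Phi_K}$ gives rank at most $DK$. So this state lies in $\mathcal{W}^{\mathrm{LOCC}}_H(AC_A:BC_B)$. Its acceptance value under $M\otimes\proj{\Phi_K}$ is exactly $\Tr(M\rho)$, and maximizing over $\rho$ gives the lower bound.

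\emph{Upper bound.} The witness set for budget $H$ consists of convex combinations of pure states of Schmidt rank at most $DK$, and the objective is linear. It therefore suffices to treat a pure $\ket{\Omega}$ on $AC_A\otimes BC_B$ with $\SR(\Omega)\le DK$. No environments are needed here.

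Set $\ket{v}:=(\id_{AB}\otimes\bra{\Phi_K}_{C_AC_B})\ket{\Omega}$, so the joint acceptance probability is $\bra{v}M\ket{v}$. The key claim replaces \Cref{lem:flat-overlap}: for any vector $\ket{\chi}$ on $AB$ with Schmidt coefficients $\mu_1\ge\mu_2\ge\cdots$,
\[
\abs{\ip{\chi}{v}}\le\max\{\abs{\ip{\chi}{\Gamma}}:\ \ket{\Gamma}\text{ a unit vector},\ \SR(\Gamma)\le D\}=\Big(\sum_{i=1}^{D}\mu_i^2\Big)^{1/2}.
\]
The equality on the right is the standard Eckart--Young / Ky Fan fact: in coefficient-matrix form, the overlap with a rank-$D$ unit-Frobenius matrix is maximized by the truncated SVD.

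For the left inequality, write $\abs{\ip{\chi}{v}}=\abs{\langle\chi\otimes\Phi_K\mid\Omega\rangle}$. The vector $\ket{\chi}\otimes\ket{\Phi_K}$ has Schmidt coefficients $\mu_i/\sqrt K$, each repeated $K$ times. Its best overlap with a unit vector of Schmidt rank at most $DK$ is therefore
\[
\Big(\sum_{i=1}^{D}K\mu_i^2/K\Big)^{1/2}=\Big(\sum_{i=1}^{D}\mu_i^2\Big)^{1/2},
\]
which is the required bound. The main point to check is that the top $DK$ coefficients are exactly the $K$-fold copies of the top $D$ values $\mu_i$, which holds because of the ordering.

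To conclude, choose $\ket{\chi}=M\ket{v}$. Then
\[
\bra{v}M\ket{v}\le\max_{\Gamma}\abs{\bra{v}M\ket{\Gamma}}\le\sqrt{\bra{v}M\ket{v}}\max_{\Gamma}\sqrt{\bra{\Gamma}M\ket{\Gamma}}
\]
by Cauchy--Schwarz for the positive form $M$. Each normalized $\ket{\Gamma}$ with $\SR(\Gamma)\le D$ lies in $\mathcal{W}^{\mathrm{LOCC}}_h(A:B)$ by \eqref{eq:sn-set}, so $\bra{\Gamma}M\ket{\Gamma}\le\beta_h^{\mathrm{LOCC}}(M)$. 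If $\bra{v}M\ket{v}=0$ the bound is immediate; otherwise dividing by $\sqrt{\bra{v}M\ket{v}}$ and squaring gives $\bra{v}M\ket{v}\le\beta_h^{\mathrm{LOCC}}(M)$.

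This step is in fact simpler than in \Cref{thm:padding}, because the maximization is over unit vectors of rank at most $D$ rather than over flat states. This removes the dimension caveats of \Cref{lem:flat-overlap}. The class inclusion \eqref{eq:sn-width-padding} then follows exactly as in \Cref{cor:monotonicity}.
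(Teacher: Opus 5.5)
Your proposal is correct and follows the paper's proof essentially step for step: the same lower-bound construction, the same reduction by convexity to a pure $\ket{\Omega}$ of Schmidt rank at most $DK$, and the same rank-bounded overlap formula applied at ranks $DK$ and $D$. The only difference is that the paper derives that formula from von Neumann's trace inequality plus Cauchy--Schwarz rather than citing Eckart--Young/Ky Fan; it then concludes with the same choice $\ket{\chi}=M\ket{v}$ and the Cauchy--Schwarz step.
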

\endgroup

\begin{proof}
Put $D:=2^h$, so $2^H=DK$. Tensoring any legal budget-$h$ witness with $\proj{\Phi_K}$ gives a witness of Schmidt number at most $DK$ with the same acceptance probability. This proves the lower bound in \eqref{eq:sn-padding}.

Before proving the upper bound, \eqref{eq:sn-rank-overlap} gives a rank-bounded counterpart of \Cref{lem:flat-overlap}. If a possibly unnormalized bipartite vector $\ket{\chi}$ has nonincreasing Schmidt coefficients $\mu_i$, padded with zeros, then for every integer $R\ge1$,
\begin{equation}\label{eq:sn-rank-overlap}
\max_{\substack{\norm{\Gamma}=1\\\SR(\Gamma)\le R}}\abs{\ip{\chi}{\Gamma}}=\left(\sum_{i=1}^{R}\mu_i^2\right)^{1/2},
\end{equation}
where $\ket{\Gamma}$ occupies the same bipartite registers as $\ket{\chi}$. Indeed, the von Neumann trace inequality used in the proof of \Cref{lem:flat-overlap} bounds the overlap by $\sum_{i=1}^{R}\mu_i\nu_i$, where $\nu_i$ are the Schmidt coefficients of $\ket{\Gamma}$. Cauchy--Schwarz and $\sum_i\nu_i^2=1$ give the stated bound. The maximum is attained by taking $\ket{\Gamma}$ as the normalized truncation of $\ket{\chi}$ to its largest $R$ Schmidt coefficients; for $\ket{\chi}=0$, both sides vanish.

To prove the upper bound, it suffices by convexity to consider a unit vector $\ket{\Omega}$ of Schmidt rank at most $DK$ across $AC_A:BC_B$. Its unnormalized Bell-success vector and joint acceptance probability with the original budget-$h$ accepting effect $M_{AB}$ are
\begin{equation}\label{eq:locc-padding-contraction}
\ket{v}:=(\id_{AB}\otimes\bra{\Phi_K}_{C_AC_B})\ket{\Omega},\qquad
\bra{\Omega}(M_{AB}\otimes\proj{\Phi_K}_{C_AC_B})\ket{\Omega}=\bra{v}M\ket{v}.
\end{equation}
For any vector $\ket{\chi}$ on $AB$ with coefficients $\mu_i$ as above, the Schmidt coefficients of $\ket{\chi}\otimes\ket{\Phi_K}$ are $\mu_i/\sqrt K$, each repeated $K$ times. Applying \eqref{eq:sn-rank-overlap} at ranks $DK$ and $D$ gives
\begin{equation}\label{eq:locc-padding-overlap}
\abs{\ip{\chi}{v}}
=\abs{\langle\chi\otimes\Phi_K\mid\Omega\rangle}
\le\left(\sum_{i=1}^{D}K\frac{\mu_i^2}{K}\right)^{1/2}
=\max_{\substack{\norm{\Gamma}=1\\\SR(\Gamma)\le D}}\abs{\ip{\chi}{\Gamma}}.
\end{equation}
Every $\ket{\Gamma}$ in this maximum is a legal budget-$h$ witness. As in the concluding step of the proof of \Cref{thm:padding} in Appendix~A, choose $\ket{\chi}=M\ket{v}$ and use positivity and Cauchy--Schwarz:
\[
\bra{v}M\ket{v}
\le\max_{\substack{\norm{\Gamma}=1\\\SR(\Gamma)\le D}}\abs{\bra{v}M\ket{\Gamma}}
\le\sqrt{\bra{v}M\ket{v}\,\beta_h^{\mathrm{LOCC}}(M)}.
\]
If $\bra{v}M\ket{v}=0$, the upper bound is immediate; otherwise divide by its square root and square. Convexity then proves \eqref{eq:sn-padding} for every legal mixed witness.
\end{proof}

\subsection{The two-witness simulation}\label[appendix]{app:locc-simulation}
We prove the completeness and soundness bounds for \Cref{prot:two-witness-simulation} in \Cref{thm:sn-simulation}, restated below. As in \Cref{app:simulation-analysis}, we analyze exact sampling before rounding the acceptance coin.

\begingroup
\theoremstyle{plain}
\newtheorem*{loccsimulationrestatement}{Theorem~\ref{thm:sn-simulation}}
\begin{loccsimulationrestatement}[Two-witness simulation, restated]
Let $D:=2^h$ and $\Delta:=b-a>0$. Then
\begin{equation}\label{eq:locc-simulation-restated}
\begin{gathered}
\QMA^{\mathrm{LOCC}}(2;w_A,w_B,h;a,b)\subseteq\QMA(2;w_A+h,w_B+h;a',b'),\\
a':=a,\qquad b':=a+\frac{\Delta}{2D^2}.
\end{gathered}
\end{equation}
\end{loccsimulationrestatement}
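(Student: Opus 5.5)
We analyze \Cref{prot:two-witness-simulation} directly. Write the submissions as pure states $\ket{\alpha}_{AI}$ and $\ket{\beta}_{JB}$; by convexity of the acceptance probability in each independent witness this is enough. Let $q$ be the Bell-success probability on $IJ$. Let $u$ be the joint probability of Bell success and acceptance by the original effect $M$ on $AB$. Under exact sampling, the acceptance probability is then
\[
p_{\mathrm{acc}}=u+a(1-q)=a+(u-aq).
\]
The plan has three parts. First, show $u\le aq$ on no-instances. Second, show $u-aq\ge\Delta/D^2$ on yes-instances. Third, replace the $a$-biased coin by a dyadic coin and absorb the loss into the completeness threshold only.

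\emph{Soundness.} The key step is to identify the post-Bell state. The unnormalized vector
\[
\ket{v}:=(\id_{AB}\otimes\bra{\Phi_D}_{IJ})(\ket{\alpha}_{AI}\otimes\ket{\beta}_{JB})
\]
is obtained by applying local linear maps to $\ket{\Phi_D}$ in a teleportation-style contraction. Concretely, write $\ket{\alpha}=\sum_i\ket{a_i}_A\ket{i}_I$ and $\ket{\beta}=\sum_j\ket{j}_J\ket{b_j}_B$ with unnormalized $\ket{a_i},\ket{b_j}$. Then $\ket{v}=D^{-1/2}\sum_{i=1}^D\ket{a_i}\ket{b_i}$, which has Schmidt rank at most $D$ across $A:B$. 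If $q=\norm{v}^2>0$, the normalized state $\ket{v}/\sqrt q$ therefore has Schmidt number at most $D$. By \eqref{eq:sn-set} it lies in $\mathcal{W}^{\mathrm{LOCC}}_h(A:B)$, so on a no-instance $u=\bra{v}M\ket{v}\le aq$. If $q=0$, then $u=0$ as well. Either way $p_{\mathrm{acc}}\le a$. This is exactly where LOCC matters: in the flat model of \Cref{sec:removing-entanglement}, $\ket{v}$ need not be flat, which is why the SWAP machinery was needed there.

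\emph{Completeness.} On a yes-instance, there is a legal witness with acceptance at least $b$. By the discussion after \eqref{eq:sn-set}, we may take it to be a pure state $\ket{\psi}_{AB}=\sum_{i=1}^D\mu_i\ket{a_i}\ket{b_i}$ of Schmidt rank at most $D$, since the maximum over a convex hull is attained at an extreme point. The honest provers submit the proper encodings
\[
\ket{\alpha}=\sum_i\sqrt{\mu_i}\ket{a_i}\ket{i}_I,\qquad \ket{\beta}=\sum_i\sqrt{\mu_i}\ket{i}_J\ket{b_i}.
\]
Both are normalized because $\sum_i\mu_i^2=1$... but here a choice must be made. One option is to split the amplitudes as $\sqrt{\mu_i}$ on each side, but then each side has norm $\sum_i\mu_i$, not $1$. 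The simpler choice is to put all the weight on one side: take $\ket{\alpha}=\sum_i\mu_i\ket{a_i}\ket{i}$ and $\ket{\beta}=D^{-1/2}\sum_i\ket{i}\ket{b_i}$, where $\{\ket{b_i}\}$ is completed to $D$ orthonormal vectors if the rank is below $D$. Both states are normalized. The contraction then gives $\ket{v}=D^{-1}\ket{\psi}$, so $q=D^{-2}$ and $u\ge bD^{-2}$. Hence $p_{\mathrm{acc}}\ge a+\Delta/D^2$.

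\emph{Finite precision.} The verifier implements a coin with bias $\tilde a=2^{-t}\lfloor 2^ta\rfloor\le a$, where $t$ is chosen so that $2^{-t}\le\Delta/(2D^2)$; this needs $t=O(h+\log(1/\Delta))$ bits. Rounding down keeps soundness at most $a$, because $u+\tilde a(1-q)\le a$. Completeness loses at most $(a-\tilde a)(1-q)\le\Delta/(2D^2)$, which gives $b'=a+\Delta/(2D^2)$. The witness sizes are $w_A+h$ and $w_B+h$, and the remaining overhead is one Bell test plus a single call to the original verifier.

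The main obstacle is the soundness identification in the first step. It requires checking that the Bell contraction of two \emph{arbitrary} witnesses yields a Schmidt-rank-$D$ state, and this rank bound relies on $\dim(I)=\dim(J)=D$. Once that holds, the rest is bookkeeping.
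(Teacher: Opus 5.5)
Your proposal is correct and follows the paper's proof essentially step by step. The Bell contraction of arbitrary pure encodings has Schmidt rank at most $D$, so it is a legal LOCC witness and soundness stays at $a$. Honest encodings give completeness $a+\Delta/D^2$. Rounding $a$ down to a dyadic coin costs only completeness.

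The one difference is that you normalize $\ket\beta$ over all $D$ indices, whereas the paper uses $r^{-1/2}\sum_{i\le r}$; you get $q=D^{-2}$ instead of $1/(Dr)\ge D^{-2}$. Your extra vectors $\ket{b_i}$ for $i>r$ need only be unit vectors, not orthonormal. Orthonormal completion is impossible when $w_B<h$, but nothing in your argument uses it.
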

\endgroup

\begin{proof}
We first analyze \Cref{prot:two-witness-simulation} under exact sampling. The two witnesses occupy $AI$ and $JB$, where $A,B$ are the original witness registers and $I,J$ are $h$-qubit index registers.

\paragraph{Soundness.} Fix a no-instance. By the usual convexity argument for $\QMA(2)$, it suffices to consider pure submissions. Expand them in the computational bases of the index registers:
\[
\ket{\alpha}_{AI}=\sum_{i=1}^{D}\ket{u_i}_A\ket{i}_I,\qquad
\ket{\beta}_{JB}=\sum_{i=1}^{D}\ket{i}_J\ket{v_i}_B,
\]
where \(\{\ket{u_i}\}_i\) and \(\{\ket{v_i}\}_i\) need not be normalized or orthogonal. Then Bell success yields the vector
\[
(\id_A\otimes\bra{\Phi_D}_{IJ}\otimes\id_B)(\ket{\alpha}_{AI}\otimes\ket{\beta}_{JB})
=\frac1{\sqrt D}\sum_{i=1}^{D}\ket{u_i}_A\ket{v_i}_B.
\]
The normalized output therefore has Schmidt rank at most $D$ and is a legal LOCC witness. The original \(\QMA^\mathrm{LOCC}(2;h)\) verifier accepts this witness with probability at most $a$. On Bell failure, the simulation accepts with probability $a$. Thus its soundness is at most $a$.

\paragraph{Completeness.} On a yes-instance, choose a legal \(\QMA^\mathrm{LOCC}(2;h)\) accepting witness $\ket{\psi}_{AB}$. Using its Schmidt decomposition, the two $\QMA(2)$ provers submit encodings \(\ket{\alpha}\) and \(\ket{\beta}\):
\begin{equation*}
\ket{\psi}_{AB}=\sum_{i=1}^{r}\mu_i\ket{a_i}_A\ket{b_i}_B, \qquad \ket{\alpha}_{AI}:=\sum_{i=1}^{r}\mu_i\ket{a_i}_A\ket{i}_I,\quad
\ket{\beta}_{JB}:= \frac{1}{\sqrt{r}}\sum_{i=1}^{r}\ket{i}_J\ket{b_i}_B,
\end{equation*}
where \(\SR(\psi) = r \le D\). Then, Bell success on \(\ket{\alpha}_{AI}, \ket{\beta}_{JB}\) recovers \(\ket{\psi}\) after normalization:
\begin{equation*}
(\id_A\otimes\bra{\Phi_D}_{IJ}\otimes\id_B)(\ket{\alpha}_{AI}\otimes\ket{\beta}_{JB})
=\frac{1}{\sqrt{Dr}}\sum_{i=1}^{r}\mu_i\ket{a_i}_A\ket{b_i}_B
=\frac{\ket{\psi}_{AB}}{\sqrt{Dr}}.
\end{equation*}
The Bell-success probability is \(1/(Dr) \geq D^{-2}\), since \(r \leq D\). The original verifier accepts \(\ket{\psi}\) with probability at least $b$, while Bell failure leads to acceptance with probability $a$. Hence
\begin{equation}\label{eq:locc-simulation-completeness}
p_{\mathrm{acc}}
\ge\underbrace{\frac{b}{Dr}\vphantom{a\left(1-\frac1{Dr}\right)}}_{\text{Bell success}}
+\underbrace{a\left(1-\frac1{Dr}\right)}_{\text{Bell failure}}
=a+\frac{\Delta}{Dr}
\ge a+\frac{\Delta}{D^2}.
\end{equation}

\paragraph{Coin sampling.} We apply the rounding method from \Cref{app:simulation-analysis} to the \(a\)-biased coin. Choose the smallest nonnegative integer $t$ with $2^{-t}\le\Delta/(2D^2)$ and replace $a$ by \(\widetilde a:=2^{-t}\lfloor2^t a\rfloor\).

Since $\widetilde a\le a$, soundness remains at most $a$. For completeness, only acceptance on Bell failure changes, reducing the overall acceptance probability by at most $a-\widetilde a$. Thus \eqref{eq:locc-simulation-completeness} gives
\[
\widetilde p_{\mathrm{acc}}
\ge a+\frac{\Delta}{D^2}-(a-\widetilde a)
\ge a+\frac{\Delta}{2D^2}.
\]
This proves the thresholds in \eqref{eq:locc-simulation-restated}.
\end{proof}

%% file: appendices/D_witness_grouping.tex
\section{Logarithmic-size Harrow--Montanaro reduction}\label[appendix]{app:witness-grouping}
This appendix proves \Cref{lem:logarithmic-grouping}, which we restate for convenience. We adjust the branch probabilities in the Harrow--Montanaro construction~\cite[Protocol~2]{HM13} to preserve logarithmic witness size and an inverse-polynomial gap.

\begingroup
\theoremstyle{plain}
\newtheorem*{logarithmicgroupingrestatement}{Lemma~\ref{lem:logarithmic-grouping}}
\begin{logarithmicgroupingrestatement}[Logarithmic-size Harrow–Montanaro reduction, restated]
For every fixed integer $k\ge2$,
\[
\QMA_{\log}(k)=\QMA_{\log}(2).
\]
\end{logarithmicgroupingrestatement}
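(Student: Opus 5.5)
The inclusion $\QMA_{\log}(2)\subseteq\QMA_{\log}(k)$ is immediate. The verifier ignores the extra $k-2$ witnesses, or asks for trivial one-qubit witnesses. Completeness and soundness are unchanged, and every witness stays of size $O(\log n)$. For the reverse inclusion, I would follow the Harrow--Montanaro construction~\cite[Protocol~2]{HM13}. Given a $\QMA_{\log}(k)$ verifier with witnesses of $w_1,\dots,w_k=O(\log n)$ qubits and gap $\Delta\ge1/\poly(n)$, each of the two new provers is asked to send the concatenation of all $k$ witnesses. Each new witness therefore has $w=\sum_j w_j=O(k\log n)=O(\log n)$ qubits, since $k$ is fixed. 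The verifier then flips a biased coin. With some probability $p$ it runs the product test of Harrow--Montanaro on the two copies: SWAP tests on each of the $k$ subsystem pairs, accepting iff all return $+1$. Otherwise it runs the original verifier on the first copy.

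The key point is that no amplification step is used. The original Harrow--Montanaro theorem first amplifies to constant gap, then tolerates the product test's constant soundness loss. Amplifying in $\QMA(k)$ requires many copies of each witness, which would blow up the witness size beyond $O(\log n)$. Instead, I would keep the inverse-polynomial gap $\Delta$ and tune the branch probability so that the product-test penalty dominates the possible soundness gain. Concretely, completeness is at least $p+(1-p)b$, since honest product states pass the product test with certainty. For soundness, I would use the Harrow--Montanaro robustness statement: if the product test passes with probability $1-\epsilon$, then the (symmetric part of the) submitted state is $O(\sqrt{\epsilon})$-close in trace distance to a mixture of $k$-partite product states. In that case the original verifier accepts with probability at most $a+O(\sqrt{\epsilon})$.

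Soundness is then at most
\[
\max_{\epsilon\ge0}\Big[p(1-\epsilon)+(1-p)\big(a+c\sqrt{\epsilon}\big)\Big]
\le p+(1-p)a+\frac{c^2(1-p)^2}{4p},
\]
using $c\sqrt{\epsilon}(1-p)-p\epsilon\le c^2(1-p)^2/(4p)$, the same trick as in \eqref{eq:score-soundness}. Choosing $1-p=\Theta(\Delta)$, for example $1-p=\Delta/(2c^2)$, makes the penalty term $O(\Delta^2)$, while the completeness--soundness separation is $(1-p)\Delta=\Theta(\Delta^2)$. The constants must be tuned so that the $c^2(1-p)^2/(4p)$ term is at most half of $(1-p)\Delta$. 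This yields a gap $\Omega(\Delta^2)$, which is still inverse polynomial. The branch probability can then be rounded to $O(\log(1/\Delta))$ fair bits as in \Cref{app:simulation-analysis}, losing only a constant fraction of the gap.

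The main obstacle is the soundness step: extracting from~\cite{HM13} a robustness bound that holds for constant $k$ without amplification, is independent of the witness dimension (or at most polynomial in it), and gives a bound of the form $O(\sqrt{\epsilon})$ or $O(\epsilon^{\Omega(1)})$ rather than only the constant-threshold statement used there. I would first look for their quantitative product-test theorem, which bounds distance from product states in terms of the failure probability with constants independent of dimension. If that theorem only gives $\epsilon^{\gamma}$ for some fixed $\gamma>0$, the same optimisation with a polynomially smaller $1-p$ still yields an inverse-polynomial gap.
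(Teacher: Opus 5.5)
Your proposal is correct and follows essentially the same route as the paper: the Harrow--Montanaro product-test protocol without amplification, a verification branch of probability $\Theta(\Delta)$, and the estimate $c\sqrt{\epsilon}-k\epsilon\le c^2/(4k)$, which together give an $\Omega(\Delta^2)$ gap. The robustness you flag as the main obstacle is exactly what the paper invokes, namely the dimension-independent bound $p_{\mathrm{test}}(\psi,\psi)\le 1-11\epsilon/512$ of \cite[Theorem~1]{HM13} (with $1-\epsilon$ the largest squared overlap of $\ket{\psi}$ with a product state), extended to non-identical pure groups via $p_{\mathrm{test}}(\psi_1,\psi_2)\le\frac12\bigl[p_{\mathrm{test}}(\psi_1,\psi_1)+p_{\mathrm{test}}(\psi_2,\psi_2)\bigr]$; consequently, running the verifier on the first copy instead of a uniformly random one costs only a constant factor.
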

\endgroup

\begin{proof}[Proof of \Cref{lem:logarithmic-grouping}]
Fix a $\QMA_{\log}(k)$ protocol with witness sizes $w_1,\ldots,w_k=O(\log n)$, completeness $b$, soundness $a$, and gap $\Delta:=b-a\ge1/\operatorname{poly}(n)$. We construct a two-witness protocol with $W:=\sum_{i=1}^k w_i$ qubits per witness and gap at least $\Delta^2/256$.

\emph{Construction.} Each prover sends a group containing the $k$ original witness registers. Choose $\tau=2^{-t}$ with $\Delta/128<\tau\le\Delta/64$. With probability $1-\tau$, perform SWAP tests on all corresponding register pairs and accept exactly when every outcome is $+1$. With probability $\tau$, choose either group uniformly and run the original verifier on it.

\emph{Completeness.} On a yes-instance, honest provers send identical copies of a pure product witness accepted by the original verifier with probability at least $b$. All SWAP tests accept with certainty, giving completeness
\[
b':=1-\tau(1-b).
\]

\emph{Soundness.} Fix a no-instance. By the usual convexity argument for $\QMA(2)$, it suffices to consider pure independent groups $\ket{\psi_1},\ket{\psi_2}$. Let $\epsilon_j:=1-\max_{\ket{z_1},\ldots,\ket{z_k}}\abs{\langle\psi_j\mid z_1\otimes\cdots\otimes z_k\rangle}^2$, where the $\ket{z_i}$ are unit vectors on the original registers, and put $\epsilon:=(\epsilon_1+\epsilon_2)/2$. Write $p_{\mathrm{test}}(\psi_1,\psi_2)$ for the probability that all SWAP tests accept and $\sigma_{j,S}$ for the marginal of $\proj{\psi_j}$ on registers indexed by $S\subseteq\{1,\ldots,k\}$. The product-test formula and bound~\cite[Lemma~2 and Theorem~1]{HM13} give
\begin{equation}\label{eq:grouping-product-test}
\begin{aligned}
p_{\mathrm{test}}(\psi_1,\psi_2)
&=2^{-k}\sum_{S\subseteq\{1,\ldots,k\}}\Tr(\sigma_{1,S}\sigma_{2,S})\\
&\le\frac{p_{\mathrm{test}}(\psi_1,\psi_1)+p_{\mathrm{test}}(\psi_2,\psi_2)}2
\le1-\frac{\epsilon}{64}.
\end{aligned}
\end{equation}
The first inequality uses $2\Tr(UV)\le\Tr(U^2)+\Tr(V^2)$ for Hermitian $U,V$; the last relaxes the product-test bound $p_{\mathrm{test}}(\psi_j,\psi_j)\le1-11\epsilon_j/512$.

By definition of \(\epsilon_j\) and~\eqref{eq:trace}, each $\ket{\psi_j}$ is at trace distance $\sqrt{\epsilon_j}$ from a product state. By soundness and~\eqref{eq:measurement-bound}, the original verifier accepts each group $\ket{\psi_j}$ with probability at most $a+\sqrt{\epsilon_j}$. Combining the branches and completing the square yields
\begin{equation}\label{eq:grouping-soundness}
\begin{aligned}
p_{\mathrm{acc}}
&\le \overbrace{(1-\tau)\left(1-\frac{\epsilon}{64}\right)}^{\text{SWAP branch}}
+\overbrace{\tau\left(a+\sqrt{\epsilon}\right)\vphantom{\left(\frac{\epsilon}{64}\right)}}^{\text{verification branch}}\\
&=1-\tau(1-a)+\tau\sqrt{\epsilon}-\frac{1-\tau}{64}\epsilon\\
&\le1-\tau(1-a)+\frac{16\tau^2}{1-\tau}
\le1-\tau(1-a)+\frac{\tau\Delta}{2}=:a'.
\end{aligned}
\end{equation}
The last inequality uses $\tau\le\Delta/64$ and $\tau\le1/2$. Hence $b'-a'=\tau\Delta/2>\Delta^2/256$.

\emph{Witness size and efficiency.} Since $k$ is fixed, each group has $W=O(\log n)$ qubits. The verifier uses efficient SWAP tests and invokes the original verifier at most once. The branch choice can be sampled exactly using $t=O(1+\log(1/\Delta))$ fair bits, with one additional bit to select a group. The thresholds $a',b'$ are efficiently computable, and their gap remains inverse polynomial. Thus $\QMA_{\log}(k)\subseteq\QMA_{\log}(2)$. Conversely, a $\QMA_{\log}(k)$ verifier can ignore the additional $k-2$ witnesses, proving the reverse containment.
\end{proof}